\numberwithin{equation}{section}
\theoremstyle{plain}   
\DeclareMathOperator*{\esssup}{ess\,sup}
\newtheorem{theorem}{Theorem}[section]
\newtheorem{proposition}[theorem]{Proposition}
\newtheorem{lemma}[theorem]{Lemma}
\newtheorem{corollary}[theorem]{Corollary}
\newtheorem{definition}[theorem]{Definition}
\newtheorem{remark}[theorem]{Remark}
\newcommand{\wh}{\widehat}
\newcommand{\ov}{\overline}
\newcommand{\wt}{\widetilde}
\def\sign{\mbox{sign}}
\newcommand\cB{{\cal B}}
\newcommand\cE{{\cal E}}
\newcommand\cH{{\cal H}}
\newcommand\cF{{\cal F}}
\newcommand\cL{{\cal L}}
\newcommand\cI{{\cal I}}
\newcommand\cR{{\cal R}}
\newcommand\cD{{\cal D}}
\newcommand\cV{{\cal V}}
\newcommand\cU{{\cal U}}
\newcommand\cZ{{\cal Z}}
\def\bbr{{\mathbb R}}
\def\text#1{\hbox{#1}}
\def\proof{{\noindent \bf Proof. }}
\def\essinf{{\text essinf}}
\def\endproof{\mbox{\ $\qed$}}
\def\E{{\bf E}}
\def\P{{\bf P}}
\def\I{{\bf I}}
\def\R{{\bf R}}
\def\Q{{\bf Q}}
\def\d{\mathrm{d}}
\def\build #1_#2{\mathrel{\mathop{\kern 0pt #1}\limits_\zs{#2}}}
\newcommand{\zs}[1]{{\mathchoice{#1}{#1}{\lower.25ex\hbox{$\scriptstyle#1$}}
{\lower0.25ex\hbox{$\scriptscriptstyle#1$}}}}
\numberwithin{equation}{section}
\def\proof{{\noindent \bf Proof. }}
\def\endproof{\mbox{\ $\qed$}}
\newtheorem{example}{Example}
\newtheorem*{example*}{Example}
\newcommand{\disc}[1]{\textcolor{blue}{#1}}
\newcommand{\tc}[1]{\textcolor{blue}{#1}}
\newcommand{\tn}[1]{\textcolor{blue}{#1}}
\def\F{\mathcal{F}}
\def\R{\mathbb{R}}
\begin{document}

\begin{frontmatter}

\title{
Utility maximization under endogenous pricing \thanks{An earlier draft of the paper was disseminated under the title ``Forward BSDEs and backward SPDEs for utility maximization under endogenous pricing.'' We thank Antoon Pelsser, Patrick Cheridito, Michail Anthropelos, Frank Riedel, Sergio Pulido, Michael Kupper, Dirk Becherer, Roger Laeven, Giorgio Ferrari, Mario Ghossoub, and the participants of the GPSD 2023, the 8th Eastern Conference on Mathematical Finance, and seminars in UvA, Waterloo, Bielefeld, and ETH Zürich for their comments and feedback where the usual caveat applies.}}
 

\begin{aug}

\author[A]{\fnms{Thai } \snm{Nguyen}\ead[label=e1]{thai.nguyen@act.ulaval.ca}}\and 
\author[B]{\fnms{Mitja } \snm{Stadje}\ead[label=e2]{mitja.stadje@uni-ulm.de}}
\address[A]{Université Laval, \'{E}cole d'Actuariat, }
\address[B]{University of Ulm, Faculty of Mathematics and Economics, Institute of Insurance Science and Institute of Financial Mathematics}

\end{aug}



\begin{abstract}
We study the expected utility maximization problem of a large investor who is allowed to make transactions on tradable assets in an incomplete financial market with endogenous permanent market impacts.
The asset prices are assumed to
follow a nonlinear price curve quoted in the market as the utility indifference curve
of a representative liquidity supplier. {Using generalized subgradients}, we show that optimality can
be fully characterized via a system of coupled forward-backward stochastic differential
equations (FBSDEs) which corresponds to a non-linear backward stochastic partial differential
equation (BSPDE). We show existence of solutions to {the optimal investment problem and the} FBSDEs in the case where the
driver function of the representative market maker grows at least quadratically or the utility function of the large investor falls faster than quadratically or is exponential. Furthermore, we derive smoothness results for the existence of solutions of BSPDEs. Examples are provided when the market is complete, {the driver is positively homogeneous} or the utility function is exponential.

\end{abstract}

\begin{keyword}[class=MSC2010]
\kwd[Primary ]{60H15}
\kwd{93E20}
\kwd[; secondary ]{60H30}
\end{keyword}

\begin{keyword}
\kwd{Utility maximization}
\kwd{permanent market impact}
\kwd{endogenous pricing}
\kwd{$g$-expectation}
\kwd{forward-backward SDEs}
\kwd{backward SPDEs}
\end{keyword}

\end{frontmatter}

\section{Introduction} \label{sec:int}

This paper studies a stochastic optimal control problem with feedback effect using coupled forward backward stochastic differential equations (FBSDEs) and backward stochastic partial differential equations (BSPDEs). 
{Our findings demonstrate that FBSDEs and BSPDEs can serve to define optimal wealth and value functions in the context of investment with endogenous pricing and permanent price impacts. Furthermore, we establish the existence of optimal solutions for various utility functions, accompanied by solutions for associated FBSDEs. Additionally, we reveal that the value function exhibits smoothness when the market is complete or when the utility function is exponential.}

\vspace{2mm}
{To comprehend our market impact model, consider a substantial trader or financial investor dealing in risky assets like financial derivatives within the financial market. Traditional financial mathematics assumes traders are price takers, meaning asset prices and their stochastic processes are predefined and unaffected by trader activities. However, it's widely acknowledged that significant buying or selling impacts prices by altering supply and asset volatility, as discussed in \cite{bookstaber2007demon,stoll1978supply,ho1981optimal}.} 
{In this paper,} we consider a model with (permanent) market impact and analyze a
utility maximization problem for a large investor who is trading risky assets in a financial market where the trading influences the future prices, and the price curves are non-linear in volume. Thus, our model naturally encompasses phenomena such as nonlinearity in liquidation and market contractions resulting from illiquidity. Despite its significant nonlinearity, our model unexpectedly facilitates a comprehensive mathematical analysis of the optimal portfolio choice problem and enables the characterization of the optimal strategy, suitable for numerical computations.  {As far as we are aware, this paper marks a pioneering endeavor to address the utility maximization problem in the presence of fully non-linear endogenous market impact using robust expectations.}

\vspace{2mm}
Our contribution can be summarized as follows: a) Building on \cite{stoll1978supply,ho1981optimal,S07,S14,BK1,BK2,KP16a} we introduce a new type of intrinsically non-linear optimal control problem with a solid foundation in endogoneous pricing. b) We show that the problem is well defined in a continuous-time framework which arises naturally as the limit of discrete-time trading with feedback effects. c) In general incomplete markets with endowments of the investor and the market maker, we characterize, {using generalized subgradients}, optimal solutions in terms of coupled FBSDEs and of BSPDEs, {thereby highlighting the interplay between these approaches.} Furthermore, the FBSDEs dependence on the utility function of the investor can be completely captured through the level of prudence the utility function induces. d) We give new existence results for the arising FBSDEs and BSPDEs.  e) {We furnish examples in scenarios where the market is complete, or the driver function is positively homogeneous, and/or the utility function adopts an exponential form. Notably, under the circumstances of linear pricing without market impact, our problem simplifies to the classical utility maximization problem.} 

\vspace{2mm}

The classical Merton utility maximization problem \cite{Merton1971} with and without transaction costs has been extensively studied and we recommend \cite{muhle2017primer} for an overview. In the absence of transaction costs and feedback effects, the classical solution methods are based on convex duality which first appeared in {\cite{bismut}} and was extended for example in {\cite{bismut1973,pliska,karatzas87,karatzas1991,cvitanic2,kramkov1999asymptotic} and (with endowment) in }\cite{https://doi.org/10.1111/j.1467-9965.2008.00360.x}. {Despite their broad applicability, dual methods come with a drawback: they are not well-suited for numerical approximations.}
A somewhat newer approach {in particular} for exponential, power, or logarithmic utility functions is based on BSDEs with and without convex duality methods; see {\cite{rouge2000,horst2010,
	hu2005,sekine, mania2010,laeven2013}} among others. {BSDEs can, for instance through Monte Carlo simulation, be approximated efficiently. They offer advantages such as independence from convex hedging constraints and unlike PDEs a broader applicability beyond Markovian frameworks.}
	In \cite{horst2014,santacroce2014,santacroce2023}, the optimal strategy for an unhedgeable terminal condition and general utility functions can be described through a solution to a fully coupled forward backward system. 
{By using dynamic programming, the authors in \cite{mania2003,mania2010,mania2017} show that the optimal strategy can be represented  in terms of the value function related to the problem and its derivatives. Furthermore, under appropriate assumptions, the value function is characterized as the solution to a BSPDE.} 

\vspace{2mm}
For utility maximization problems where the trading itself influences the prices most works focus on temporary or transient price impact of exogenous-type; see \cite{bank2017hedging} and the references therein. 
{We note that our paper focuses on investigating the permanent price impact, rather than temporary market impact, and assumes that trades are fragmented into smaller portions, allowing their effects to dissipate gradually over time. This concept is discussed comprehensively in \cite{GathSc}.}
{In this paper, we consider an endogenously based liquidity model where the permanent price impact is due to an inventory (supply-side) change of the security. 
It is shown in \cite{Fukasawa2017} that the profit and loss (P\&L) process of any trading strategy can then be expressed as a non-linear stochastic integral or a $g$-expectation and the pricing and hedging can be done by solving a semi-linear PDE in the special case of a Markovian setting. \cite{Fukasawa2017} also provides a completeness condition under which any derivative can be perfectly replicated by a dynamic trading strategy, see also \cite{S14}.
Contrary to exogenously based-liquidity models, an endogenously liquidity-based model may give better economic understanding on the liquidity risk.} Non-utility based additive permanent market impact models are also
studied in many other works, see for instance \cite{bouchard2017hedging} and the references therein. 
Feedback effects with BSDEs are studied in \cite{CM}. In \cite{drapeau2019fbsde}, a Nash equilibrium for a market impact game is characterized in terms of a fully coupled system of FBSDEs. 

\vspace{2mm}

The closest price impact model to ours is the one introduced by Bank and
Kramkov~\cite{BK1,BK2} who allow the Market Makers to have utility functions of von
Neumann-Morgenstern type (see also the earlier works \cite{stoll1978supply,ho1981optimal,S07,S14}).
They show the existence of the representative
liquidity supplier and construct a nonlinear stochastic integral to describe the Large Trader's
P\&L. 
Our utility function stems from time-consistent convex risk measures represented as a $g$-expectation, and in terms of decision theory, it represents an ambiguity-averse preference with the exponential utility being the only intersection to von Neumann-Morgenstern utility functions.
The focus of this paper is on optimal portfolio choice which is not addressed in \cite{BK1,BK2,Fukasawa2017}. The special case of an exponential utility function for the investor and for the representative liquidity supplier is analyzed in \cite{anthropelos2018} without considering coupled FBSDEs or BSPDEs. {For the static case see \cite{ABG18}. In \cite{KP16a}, the authors show that the price system arising from an exponential utility function can be characterized as a multidimensional quadratic BSDE for which existence and uniqueness results are shown.}

\vspace{2mm}
 We remark that the terminal wealth in our case lies in a space of feasible terminal conditions of $g$-expectations and depends on the strategy in a complex and non-convex way. Due to the non-linearity in the target function, the control variable and the feedback effects, the mathematical proofs are delicate. {We first show that the wealth process corresponds to a non-linear Kunita integral where every admissible trading strategy induces a process, say $\mathcal{Z}$, which is based on a parametric family of $Z^y$ processes from the solution of a parametric BSDE. We prove that trading can be extended to continuous time by deriving from appropriate assumptions the existence of a version of $y\mapsto Z^y$ which can be shown to be continuous in the trading strategy and satisfies suitable growth conditions. Through calculus of variations, we derive a coupled FBSDE for the optimal wealth process which is necessary and sufficient for a trading strategy to be optimal. Existence of a solution under conditions covering most known examples is shown by proving weak compactness of a space of suitable $Z$'s or wealth processes. 
To the best of our knowledge such a $Z$-based optimal control approach has not been used yet to show existence of a solution of a coupled FBSDE. 

\vspace{2mm}
	
	While for decoupled quadratic FBSDEs there is a rich theory available, for \emph{coupled} FBSDEs the literature is much more sparse. For non-Markovian systems, existence for solutions of coupled FBSDEs
	for sufficiently small time horizons $T$ have been obtained by \cite{delarue2002existence} using a contraction
	method. Well-posedness of the system has been investigated by \cite{ma2015well} using the so-called decoupling field method introduced in \cite{douglas1996numerical}.  \cite{L17} study the
	well-posedness for multidimensional and coupled systems of FBSDEs with a generator that can be separated into a quadratic and a subquadratic part. Using Malliavin calculus arguments \cite{kupper2019multidimensional} obtains local and global existence and uniqueness results for multidimensional coupled FBSDEs for monotone generators with arbitrary growth in the control variable. \cite{Jackson} obtains some recent results for Markovian FBSDEs with quadratic growth. Existence results for multidimensional coupled FBSDEs with diagonally quadratic generators can be found in \cite{chen2023existence}. \cite{herdegen21} study equilibria of asset prices with quadratic transaction costs, and solve an arising coupled FBSDE using stochastic Riccati equations. In contrast to these works 
	our forward process boasts a more versatile volatility term, denoted as $\cH$ in Equation \eqref{eq:bijective}, which depending on the utility function can fully depend on the variables $X, Y, M$, and $\omega$. 
Moreover, for the value function, we derive a dynamic programming principle and prove that under smoothness conditions the value function satisfies a certain BSPDE which can be connected to the previously derived FBSDE.
The smoothness of the value function is then shown using duality methods.
 \emph{All results} except in Section 7.2, are shown for an \emph{incomplete} financial market with the randomness being generated by a $d$-dimensional Brownian motion.} 

\vspace{2mm}
The remainder of the paper is organized as follows. Section \ref{sec:mod} 
outlines a model with permanent market impact, depicting hedging through $g$-expectations. We specify key assumptions validating our results and our expected utility maximization problem under permanent market impacts. Solutions are characterized by FBSDEs in Section \ref{sec:FBSDE}. FBSDE existence results are derived in Section \ref{se:existence}. The connection to BSPDEs is studied in Section \ref{se:BSPDE}. The results on regularity necessary for the existence of a solution of the BSPDEs together with examples in complete markets are presented in Section \ref{se:Regu}. The main proofs and extra technical results are reported in the appendix. Section \ref{sec: conclusion} concludes.

\section{The model setting}\label{sec:mod}

In a limit order book of specific asset, the roles of liquidity suppliers and liquidity demanders are not symmetric. Every liquidity supplier submits a price quote for a specific volume and trades with the other liquidity suppliers until an equilibrium is achieved. The remaining limit order form a price curve which is a nonlinear function in volume. Taking a Bertrand-type competition among liquidity suppliers into account, it would then be reasonable to begin with modelling the price curve as the utility indifference curve of a representative liquidity supplier. 
While Bank and Kramkov \cite{BK1,BK2} used Neumann-Morgenstern utility functions for the representative agent we will as in \cite{Fukasawa2017} use time-consistent convex risk measures instead. The exponential utility function assumed by  \cite{KP16a,anthropelos2018} is in the intersection of these two frameworks. Modulo a compactness assumption using a time-consistent convex risk measure is equivalent to using a $g$-expectation, providing a powerful stochastic calculus tool. A further advantage of our approach from an economic point of view is that ambiguity aversion is taken into account, see the discussion below.
In the present paper, we therefore simply assume that there is a representative liquidity supplier, called the Market Maker, who quotes a price for each volume based on the utility indifference principle and her utility is a $g$-expectation with a cash-invariance property. The existence of the representative agent under such utility
functions follows from  Horst et al.~\cite{horst2010}. While the cash invariance axiom might not be realistic for an individual investor, market makers are often financial institutions for which a cash invariance axiom seems more reasonable.
If the driver of the $g$-expectation is a linear function, then 
the price curve becomes linear in volume and
we recover the standard framework of financial engineering.
For the individual investor we assume a strictly concave Neumann-Morgenstern utility function $U:\mathbb{R}\to \mathbb{R}$ satisfying standard conditions. 

{In Subsection \ref{marketsetting} we first present our market setting. 
In Subsection \ref{pricingrule} we describe the utility function of a market maker (which is given by a $g$-expectation) under some assumptions on the function $g$ and the {traded} securities. In Subsection \ref{tradingimpact} we give the pricing rule of a market maker which arises as utility indifference price. 
We show that, the P$\&$L process of a trader can be described as a linear  stochastic integral and a non-linear Lebesgue integral in $\mathcal{Z}$. In Subsection \ref{sec:Tech} we define the expected utility maximization problem of the Large Trader.}
\vspace{2mm}
\subsection{The basic setting}\label{marketsetting}
We assume zero risk-free rates meaning in particular that cash is risk free and does not have any bid-ask spread.
Let $T>0$ be the end of an accounting period.
Each agent evaluates her utility based on her wealth at $T$.
Consider $n$ securities $S=(S^1,\ldots ,S^n)^\intercal$ whose values at time $T$ are exogenously determined, where '$\intercal$' denotes transpose.
We denote the value by $S$  and regard it as an $\mathcal{F}_T$-measurable random variable defined on a filtered probability space $(\Omega, \mathcal{F}, \mathbf{P}, (\mathcal{F}_t)_{t\in [0,T]})$ with $(\F_t)$ being the completion of the filtration generated by a $d$-dimensional Browninan motion $W=(W^1,\ldots,W^d)^\intercal$ with $n\leq d$
satisfying the usual conditions. 
The security $S^i$ can for instance be a zero-coupon bond, an asset backed security, or a derivative with an underlying which is not traded.
The price of this security at $T$ is trivially $S$, but the price at  $t < T$ should be $\mathcal{F}_t$-measurable and will be endogenously determined by a utility-based mechanism. 
There are two agents in our model: A Large Trader and a Market Maker.
The Market Maker quotes a price for each volume of the security.
She can be risk-averse and so her quotes can be nonlinear in volume and
depend on her inventory of this security.
The Large Trader refers to the quotes and makes a decision. She cannot avoid affecting the quotes by her trading due to the inventory consideration of the Market Maker, and seeks an optimal strategy under this endogenous market impact.
\subsection{The evaluation method of the market maker and examples}\label{pricingrule}
As the pricing rule of the Market Maker, our model adopts the utility indifference principle, using an evaluation $\Pi$ given by the solution of a $g$-expectation defined as follows.
Let $g(t,\omega,z):[0,T]\times \Omega  \times \mathbb{R}^d \to
\mathbb{R}$ be a $\mathcal{P}\otimes \mathcal{B}(\mathbb{R}^d)$
measurable function, where $\mathcal{P}$ is the progressively measurable $\sigma$ field,
such that 
$z \mapsto g(t,\omega,z)$ is a convex function\footnote{{In the sequel we will, in this case, simply state that $g(t,z)$ is convex.}} with $g(t,\omega,0)=0$ for each $(t,\omega ) \in  [0,T]\times \Omega $. As usual the $\omega$ will typically be suppressed. For a stopping time $\tau$ we denote by $\cD_\tau$ a linear space of
$\cF_\tau$-measurable random variables to be specified below. We denote by $L^0(\cF_\tau)$ the space of $\cF_\tau$-measurable random variables, by $L^2(\cF_\tau) = L^2(\Omega, \cF_\tau, \P)$ the space of $\cF_\tau$-measurable random variables which are square integrable with respect to $\mathbf{P}$, by $L^2(\d \P \times \d t)$ the space of progressively measurable processes which are square integrable with respect to $\d\mathbf{P}\times \d t$ and by $L^\infty (\cF_\tau)$ the space of $\cF_\tau$ -measurable random variables which are bounded. Products of vectors will be understood as vector products. Equalities, inequalities and inclusions are understood a.s. unless stated otherwise. Throughout the rest of this paper, we always assume either one of the following conditions:

\vspace{2mm}
{\em
\noindent{\bf (Hg)} 
{$g(t,z)$ is convex in $z$, and there exists a $K>0$ such that $|g(t,z)|\leq K(1+|z|^2)$, $g(t,0)=0$ and $$|g(t,z_1)-g(t,z_0)|\leq K(1+|z_1|\vee |z_0|)|z_1-z_0|.$$} Furthermore, 
$$\cD_\tau=\big\{H \in L^0(\mathcal{F}_\tau)\,\vert\, \E[\exp(a\vert H\vert)] < \infty \text{ for all } a >0\big\}.$$
}
\vspace{2mm}
{\em
\noindent{\bf(HL)} $g(t,z)$ is convex in $z$, Lipschitz in $z$ (also uniformly in $t$ and $\omega$)  with $g(t,0)=0$ and $\cD_\tau=L^{2}(\Omega, \mathcal{F}_\tau,\P)$.
}	
\vspace{5mm}
%

It is well-known, see e.g. \cite{kobylanski2000,BH2} that under {\bf (HL)} or {\bf (Hg)} for each $H \in \mathcal{D}_T$,
there exist unique progressively measurable square integrable processes $ (\Pi(H),Z(H)) = (\Pi(H),Z^1(H), \ldots, Z^d(H))$ such that
for all $t\in [0,T]$, it holds that
\begin{equation}
H = \Pi_t(H) + \int_t^T g\bigl(s,Z_s(H)\bigr)\mathrm{d}s - \int_t^T Z_s(H) \mathrm{d}W_s,
\label{bsdeX}
\end{equation}
where
$
\int_t^T Z_s(H) \mathrm{d}W_s = \sum_{j=1}^d \int_t^T Z_s^j(H) \d W_s^j.
$
Equation \eqref{bsdeX} in differential form can equivalently be written as 
\begin{equation} 
\label{bsdeX2}
\mathrm{d}\Pi_s(H)= g\bigl(s,Z_s(H)\bigr)\mathrm{d}s - Z_s(H) \mathrm{d}W_s, \quad \Pi_T(H)=H .
\end{equation}
\eqref{bsdeX} or \eqref{bsdeX2} is also called a backward stochastic differential equation (BSDE) with driver function $g$ and terminal condition $H$. Sometimes $\Pi(H)$ is also called a $g$-expectation of $H$. 
{Under our assumptions it is shown for instance in \cite{jiang2008convexity} that }$\Pi$ is concave, cash invariant (meaning that $\Pi_t(H+m)=\Pi_t(H)+m$ for $m\in {\cD_t}$) and monotone (meaning that $\Pi_t(H_1)\leq\Pi_t(H_2)$ if $H_1\leq H_2$). Furthermore, $\Pi$ satisfies the tower property (time-consistency).

\begin{example}[$g$-expectation]
	\label{example0}
\begin{enumerate}
\item
The simplest example is given by the case where $g$ is linear in $z$, i.e. $g(t,z)=-\eta_t  z$ for a bounded progressively measurable process $\eta$. By the Girsanov Theorem, 
\begin{equation}\label{rn}
\Pi_t(H) = \E^\Q[H\mid \mathcal{F}_t]
\end{equation}
with $\Q$ having the Radon-Nikodym derivative with stochastic logarithm $\eta$, i.e., 
\begin{equation}\label{qq}
\frac{\d\Q}{\d\P}:=\exp\{\int_0^T -\eta_s dW_s-\frac{1}{2}\int_0^T |\eta_s|^2 ds\}.
\end{equation}
 \item
Another example of $g$-expectation is the following exponential utility-based equivalent:
$$\Pi_t(H) = -\frac{1}{\gamma} \log \E^\Q\left[ \exp\left(-\gamma H\right) \mid \mathcal{F}_t\right], $$
with $  \mathcal{D}_t =\{H \in L^0(\mathcal{F}_t)\vert\, \E[\exp(a\vert H\vert)] < \infty \text{ for all } a >0\} $,
where $\gamma > 0$ is a parameter of risk-aversion {and $\Q$ is defined through \eqref{qq}}. In this case $\Pi$ is a $g$-expectation with driver function $g(t,z)=\frac{1}{2}\gamma|z|^2-\eta_t z$, see Barrieu and El
Karoui~\cite{carmona2008}.
{\item A driver function $g$ is called positively homogeneous if $g(t,\lambda z)=\lambda g(t,z)$ for all $\lambda>0$. Examples of positively homogeneous drivers can be found for instance in \cite{KLL15,LSSS20} corresponding to so called coherent risk measures, see \cite{Artzner99}. For example, if the market maker is not sure about the reference model and considers a robust expectation over a set of alternative probability measures inducing a drift in the Brownian Motion which is at most allowed to deviate by $\delta$ in the Euclidean norm,  one obtains $g(t,z)=\delta|z|-\eta_t  z$.  }
\item
We remark that there is a close connection between $g$-expectations and convex risk measures, which (modulo a change of sign) are cash-invariant, monotone and convex, see e.g., Barrieu and El
Karoui~\cite{carmona2008}. 
 A $g$-expectation is a convex risk measure and satisfies time consistency if and only if $-g$ is convex and $g(t,0)=0$, see Jiang \cite{jiang2008convexity}. In particular, $-\Pi(-\cdot)$ is a convex risk measure.
It is worth noting that 
under additional compactness or domination assumptions, {also} every time-consistent convex
risk measure corresponds to a $g$-expectation. 
For these and other related results, see 
Coquet et al.~\cite{CHMP} 
%
\end{enumerate}
\end{example}

\subsection{The pricing rule and the trading with permanent market impact}\label{tradingimpact}

{Suppose that the Market Maker has initially an endownment whose cash-flow at time $T$ is represented by a bounded random variable $H_\mathrm{M}$. {Additionally,} for each quantity of the risky assets $S=(S^1,\cdots,S^n)$ with $d\geq n\geq 1$ {the Market Maker} quotes  a price.} {Below, we} assume that $S$ is either square integrable if ($\textbf{HL}$) holds or bounded if alternatively ($\textbf{Hg}$) holds. If the Market Maker at time $t \in [0, T]$ is holding $z^i$ units of the security $S^i$ in question besides $H_{\mathrm{M}}$, then her utility is measured  as $\Pi_t(H_{\mathrm{M}} + \sum_{i=1}^n z^iS^i)=\Pi_t(H_{\mathrm{M}} + zS)$.
According to the utility indifference principle,
the Market Maker quotes {at time $t$} a selling price for $y=(y^1,\ldots,y^n)$ units of the security by
\begin{equation}\label{pzy}
\begin{split}
P_t(z,y) :=& \essinf\left\{ p \in {\cD_t} :
\Pi_t(H_{\mathrm{M}}+zS-yS + p) \geq \Pi_t(H_{\mathrm{M}}+zS) \right\} \\
=& \Pi_t(H_{\mathrm{M}} + zS) - \Pi_t(H_{\mathrm{M}} + (z-y)S).
\end{split}
\end{equation}
For the above equality
we have used the cash invariance property. Note that in the risk-neutral case \eqref{rn} in the previous example, we have
$P_t(z,y)=\sum_{i=1}^n y^i \,\E^\Q[S^i\mid\mathcal{F}_t] = y \,\E^\Q[S\mid\mathcal{F}_t],$
where we used vector notation.

Let
$\Theta_0$ be the set of simple $n$-dimensional left-continuous processes $\theta$
with $\theta_{0} = 0$.
The Large Trader is allowed to take any element $\theta \in \Theta_0$ as her trading strategy.
The price for the $y$ units of the security at time $t$ is
$P_t(-\theta_t,y)$. This is because the Market Maker holds $-\theta_t$ units of the
security due to the preceding trades with the Large Trader. 
Then the profit and loss (P\&L) at time $T$  associated with $\theta \in \Theta_0$  (i.e., corresponding to the self-financing strategy $\theta$) of the Large Trader
is given by
\begin{equation}\label{I}
\mathcal{I}(\theta) := \theta_TS - \sum_{0 \leq t < T} P_t(-\theta_t, \Delta \theta_t). 
\end{equation}
Due to Proposition \ref{propextend} below, $\mathcal{I}(\theta)$ has the form of a nonlinear
stochastic integral studied in Kunita~\cite{kunita1997stochastic}.
Note that in the risk-neutral case \eqref{rn}, $\mathcal{I}(\theta) = \theta_TS_T - \sum_{0 \leq t < T}  \Delta \theta_t S_t= \int_0^T \theta_t \mathrm{d}S_t$ by integration-by-parts, where $S_t = \E^\Q[S\mid\mathcal{F}_t].$ For any $y \in \mathbb{R}^{1\times n}$, using the BSDE representation \eqref{bsdeX}, we set $\Pi^y = \Pi(H_\mathrm{M} -yS)$, $\cZ^y = Z(H_\mathrm{M}-yS)$, {i.e., $(\Pi^y, \cZ^y)$ is the solution of the BSDE \eqref{bsdeX} with terminal condition $H_\mathrm{M}-yS$}. {Note that for each fixed $y,\cZ^y$ is only defined $\d  \P \times \d t$ a.s.}
To extend trading to general predictable trading strategies we need the following assumption. 

\vspace{2mm}
{
\textbf{Assumption (B)}: One of the following assumptions hold:
\begin{itemize}
	\item[\textbf{(B1)}] $g$ does not depend on $\omega$ and is three times continuously differentiable with respect to $z$ with continuous derivatives.  
	Define the diffusion
	process $\cR=(\cR^1,\ldots,\cR^d)$ to be the solution to the following SDE:  
	$\cR_u^{t,r} = r,\ \ 0\leq u\leq t,$ and
	\begin{eqnarray}\label{R}
		\d \cR_u^{t,r} &=& \mu(u,\cR_u^{t,r})\,\d s+\sigma\,\d W_u,\ \ t\leq u\leq
		T,\label{SDE}\end{eqnarray} where $\mu:[0,T]\times \mathbb{R}^d\rightarrow \mathbb{R}^d$ is \tn{three times} continuously differentiable 
		with bounded derivatives, and $\sigma$ is in $\mathbb{R}^{d\times d}$. {$\mathcal{R}$ could for instance reflect the returns or dividends of $d$ risky assets.} 
	We assume in addition that $S$ and $H_\mathrm{M}$ are of the form 
	$S=s(\cR_T) \ \text{  and  } \
	H_\mathrm{M}=h^{\mathrm{M}} (\cR_T),$ with $\cR_0=(r^1_0,\ldots,r^d_0)$, and $s:\R^d\to \R^n$ and $h^{\mathrm{M}}:\R^d\to \R$ are functions in $\mathcal{C}^3$ with bounded \tn{first} derivatives, \tn{and the second and third derivatives grow only polynomially}. Additionally assume that one of the following conditions hold:
	\begin{itemize}
		\item[i)] 
		$d=1$ or $\sigma$ is diagonal, $g$ is of the form $g(t,z)=\sum_{i=1}^ng^i(t,z^i)+g^{n+1}(t,z^{n+1},\ldots,z^d)$, and $	H_\mathrm{M}=\sum_{i=1}^n h^{\mathrm{M},i} (\cR_T^{i})+H_{\mathrm{M}}^{\perp}$ where $H_{\mathrm{M}}^{\perp}=h^{\perp}(W^{n+1},\ldots,W^d)$ is assumed to be measurable with respect to the filtration generated by $(W^{n+1},\ldots,W^d)$;
		\item[ii)] or $g$ is linear of quadratic;
		\item[iii)] or the solution of the BSDE \eqref{diff} in the appendix has a unique solution which is jointly bounded for each $y$;
		\item[iv)] or $\cZ^y_t$ has a modification such that $y\mapsto\cZ^y_t(\omega)$ is continuous on $y$ for fixed $(t,\omega)\ \d \mathbf{P} \times \d t$ a.s; 
	\end{itemize}
	\item[\textbf{(B2)}] $g$ is positively homogeneous in $z$, $n=1\leq d$, $H_\mathrm{M}=0$. \tn{Furthermore,  $Z(\pm S)$ is either bounded or has continuous paths.}   
\end{itemize}}
{
The following proposition is shown in the appendix by analyzing certain uncoupled multidimensional FBSDEs. 
}
%
\begin{proposition}\label{propmarkov}
{{Under Assumption \textbf{(B)}}, there exists a $\mathcal{P} \otimes {\mathcal{B}}	(\mathbb{R}^n)$- measurable, mapping $Z:\Omega\times [0,T]\times \mathbb{R}^n \mapsto \mathbb{R}^d $ such that  $\cZ^y_t(\omega)=Z(\omega,t,y)$ $\d \mathbf{P}\times \d t$ almost surely for each fixed $y\in\mathbb{R}^n$. Furthermore, $Z$ can (and will in the sequel) be chosen such that the mapping $y\mapsto Z(\omega,t,y)$ is continuous $\d \P\times \d t$ a.s. {Finally}, under \textbf{(B1)}} ~ $||\sup_t |Z(t,\omega,y)|\hspace{0.05cm}||_\infty\leq K(1+|y|)$ for some $K>0$, {while under \textbf{(B2)}, $\cZ_t^y=\tn{\cZ(\omega,t,y)}=|y|Z_t(-sgn(y)S)$, where $sgn(y)=1$ if $y>0$, $sgn(y)=-1$ if $y<0$, and $sgn(0)=\pm 1$.} 
\end{proposition}

	Denote the $\mathcal{P}\otimes\mathcal{B}(\R^n)$-measurable 
	image space of $Z$ as
	$${\text{Im}(Z)}=\{ Z(t,\omega,y)\vert t\in[0,T], \omega \in \Omega,y\in \mathbb{R}^n\}.$$
 Set $	\text{Im}(Z(t,\omega,\cdot)):=\text{Im}_t(Z(\omega)):=\big\{Z(t,\omega,y)\big|y\in\mathbb{R}^n\big\}.$

The next proposition shows how to extend trading from simple strategies to continuous time. 
\begin{proposition}
	\label{propextend}
{{Suppose that Assumption \textbf{(B)} holds} and let $\mathcal{I}(\theta^m)$ be the Market Maker's profit-loss defined in \eqref{I} for simple trading strategies $\theta^m \in \Theta_0$. If $\theta\in L^2(\d \P \times \d t)$ and $\theta^m \to \theta$ in $L^2(\d \P \times \d t)$, then we have}
	\begin{equation*}
	\lim_{{m}\to\infty} \mathcal{I}(\theta^{{m}}) = \mathcal{I}(\theta)
	:= H_\mathrm{M} - \Pi_0(H_\mathrm{M}) - \int_0^T g(t,\cZ^\theta_t)\mathrm{d}t
	+ \int_0^T \cZ^\theta_t\mathrm{d}W_t,
	\end{equation*}
	where $\cZ^\theta_t(\omega): = Z(\omega, t, \theta_t(\omega))$ and the limit holds {in $L^1$ under \textbf{(B1)}, and in probability under \textbf{(B2)}.}
\end{proposition}
\proof For the sake of exposition let us recall the proof of Lemma 1 in \cite{Fukasawa2017} showing the integral representation above for a fixed admissible strategy $\theta^m\in\Theta_0$. Denote the discontinuity points of $\theta^m \in \Theta_0$ by
$
0 \leq {t}_1 < {t}_2 < \cdots.$
Let $l$ be the number of the discontinuity points, ${t}_0 = 0$ and
${t}_k = T$ for  $k \geq l+1$.
By definition, using that $\Pi_T(H_\mathrm{M}-\theta^m_TS) = H_\mathrm{M}-\theta^m_TS$ and
$\theta^m_0 = 0$
\begin{equation*}
	\begin{split}
		\mathcal{I}(\theta^m) = &
		\theta^m_TS - \sum_{0 \leq t < T}\bigl(\Pi_t(H_\mathrm{M} -\theta^m_tS ) - \Pi_t(H_\mathrm{M}-\theta^m_{t+}S)\bigr) \\
		=& \theta^m_TS - \sum_{j=1}^l
		\bigl(\Pi_{{t_j}}(H_\mathrm{M} -\theta^m_{{t_j}}S ) - \Pi_{{t_j}}(H_\mathrm{M}-\theta^m_{{t_{j+1}}}S)\bigr) \\
		=& H_\mathrm{M} - \Pi_0(H_\mathrm{M}) - 
		\sum_{j=0}^l \bigl(\Pi_{{t_{j+1}}}
		(H_\mathrm{M} -\theta^m_{{t_{j+1}}}S ) - \Pi_{{t_j}}(H_\mathrm{M}-\theta^m_{{t_{j+1}}}S)\bigr).
	\end{split}
\end{equation*}
Again, by definition,
$\Pi_{{t_{j+1}}}(H_\mathrm{M}-yS) - \Pi_{{t_j}}(H_\mathrm{M}-yS)
= \int_{{t_j}}^{{t_{j+1}}} g(s,Z^y_s)\mathrm{d}s
- \int_{{t_j}}^{{t_{j+1}}}Z^y_s\mathrm{d}W_s$.
Since $\theta^m$ is a simple left-continuous process,
$\theta^m_{{t_{j+1}}}$ is $\mathcal{F}_{{t_j}}$ measurable and so,
we can substitute $y = \theta^m_{{t_{j+1}}}$ to obtain
\begin{align}
	\label{star2}
	\mathcal{I}(\theta^m)
	&=  H_\mathrm{M} - \Pi_0(H_\mathrm{M}) - 
	\sum_{j=0}^l \left(\int_{{t_j}}^{{t_{j+1}}} g(s,Z^{\theta^m}_s)\mathrm{d}s
	- \int_{{t_j}}^{{t_{j+1}}}Z^{\theta^m}_s\mathrm{d}W_s \right)\nonumber \\
	&	= H_\mathrm{M} - \Pi_0(H_\mathrm{M}) - \int_0^T g(t,\cZ^{\theta^m}_t)\mathrm{d}t
	+ \int_0^T \cZ^{\theta^m}_t\mathrm{d}W_t.
\end{align}
Now, by the continuity of $\cZ$ in $y$ {shown in Proposition \ref{propmarkov}} we have that $\cZ^{\theta^m} \rightarrow \cZ^{\theta}$  in measure with respect to $\d \P \times \d t$. {Note that under \textbf{(B2)} we can use a localization argument to assume without loss of generality that $Z(\pm S)$ is bounded. Hence, Proposition \ref{propmarkov} and the uniform integrability of $|\theta^m|^2 $ give that $|\cZ^{\theta^m}|^2$ is uniformly integrable \tn{w.r.t. $L^1(\d \P \times \d t)$}, under both {\textbf{(B1)} or \textbf{(B2)}}}. Thus, \tn{passing to the limit in \eqref{star2}}, convergence actually holds in $L^2(\d \P \times \d t)$ if $g$ is Lipschitz, and in $L^1(\d \P \times \d t)$ if $g$ grows at most quadratically. 
\endproof

\vspace{2mm}
\begin{remark}\label{pl} Using similar arguments, the profit and loss until time $t$ measured in terms of its liquidation value is given by
\begin{equation*}
\mathcal{I}_t(\theta):=\Pi_t(H_\mathrm{M}) - \Pi_0(H_\mathrm{M}) - \int_0^t g(t,\cZ^\theta_t)\mathrm{d}t+ \int_0^t \cZ^\theta_t\mathrm{d}W_t.
\end{equation*}
\end{remark}

\begin{remark}\label{HM}
$(\Pi_t(H_\mathrm{M}),Z_t(H_\mathrm{M}))$ solves the BSDE
\begin{equation*}
H_{\mathrm{M}} = \Pi_t(H_{\mathrm{M}}) + \int_t^T g\bigl(s,Z_s(H_{\mathrm{M}})\bigr)\mathrm{d}s - \int_t^T Z_s(H_{\mathrm{M}}) \mathrm{d}W_s.
\end{equation*}
By definition $\mathcal{Z}_t^0=Z_t(H_\mathrm{M})$ and in particular $\mathcal{I}_t(0)=0$. Finally we remark that if $H_\mathrm{M}=0$ (i.e. if the Market Maker does not have an endowment), we clearly have $(\Pi_t(H_\mathrm{M}),Z_t(H_\mathrm{M}))=(0,0)$ since $g(t,0)=0$. 
\end{remark}
{
	\begin{proposition} \label{boundedHM}
		Under Assumption \textbf{(B)}, $Z(H_\mathrm{M}) $ is bounded.
\end{proposition}
\noindent Throughout the rest of this paper we will always assume that Assumption \textbf{(B)} holds.
}

\subsection{Expected utility maximization} \label{sec:Tech}
Motivated by Proposition \ref{propextend} we define the set of admissible strategies as
\begin{equation*}
\Theta := 
\left\{ \theta: \Omega\times [0,T] \to \mathbb{R}^d
\text{ progressively measurable with }
\E\bigg[\int_0^T |\cZ^\theta_t|^2 \mathrm{d}t\bigg] < \infty 
\right\}.
\end{equation*}
We furthermore say that $\cZ\in L^2(\d \P \times \d t)$ is \textit{admissible} if there exists $\theta\in \Theta$ such that $\cZ=\cZ^\theta$.
As can be seen from Proposition \ref{propextend} and the definition of $\Theta$, $Z(t,\omega,y)$ plays a crucial role for our analysis. 

Recall that $\cZ^\theta_t(\omega) = Z(\omega, t, \theta_t(\omega))$.  
By Proposition \ref{propextend} and Remark \ref{pl} the portfolio value at time $t$ is given by $X_t^\theta=x_0+\cI_t(\theta)$ with
\begin{equation}
\cI_t(\theta)=-\int_0^t g(s,\cZ^{\theta}_s)\d s +\int_0^t \cZ^{\theta}_s \d W_s+\Pi_t(H_\mathrm{M})-\Pi_0(H_\mathrm{M}),
\label{eq:stern}
\end{equation} 
which can be considered as the gain/loss in the time interval $[0,t]$. Note that by definition of $\cI_t(\theta)$ we have that 
{$(-(X_t-\Pi_t(H_{\mathrm{M}})+\Pi_0(H_{\mathrm{M}})),\cZ^{\theta}_t)$ is a solution of a BSDE with driver $g$} and terminal condition $-(X_T^\theta-H_{\mathrm{M}}+\Pi_0(H_{\mathrm{M}}))$ (since it satisfies (\ref{bsdeX2})). In particular, setting $t=0$ we have, $x_0=X^\theta_0=-\Pi_0(-(X_T^\theta-H_{\mathrm{M}}+\Pi_0(H_{\mathrm{M}})))$.
Furthermore, in case that $H_\mathrm{M}=0$ we have that $\Pi_t(H_\mathrm{M})=\Pi_0(H_\mathrm{M})=0$ and therefore $x_0=-\Pi_0(-X_T^\theta)$. 

Below, we study the following utility maximization problem for the Large Trader
\begin{equation}
\sup_{\theta \in \Theta}\E[U(X_T^{\theta}+H_L)],
\label{eq:EU.1}
\end{equation}
where $\theta_t$ is the number of risky assets held at time $t$, and $X_0^{\theta}+H_L=x_0+H_L$ is the initial endowment with $H_L\in L^{\infty}(\mathcal{F}_T)$. In (\ref{eq:EU.1}) we use the convention that $\E[U(X_T^{\theta})]=-\infty$ if for the negative part we have $\E[U^{-}(X_T^{\theta})]=\E[U(X_T^{\theta}){\bf 1}_{U(X_T^{\theta})<0}]=-\infty$. We assume that the utility function $U$ is a strictly increasing, strictly concave and three times differentiable function.
 Note that problem \eqref{eq:EU.1} can be restated as
$
\sup_{\theta \in \Theta}\E[U(x_0+\cI_T(\theta)+H_L)].
$
The rest of the paper is dedicated to analyze problem (\ref{eq:EU.1}). 
\section{An FBSDE approach} \label{sec:FBSDE}
{After introducing the notion of a subgradient for convex functions on possibly non-convex domains, we show in this section that an optimal solution of the  utility maximization problem \eqref{eq:EU.1} necessarily {solves} a coupled FBSDE. Conversely, when suitable integrability conditions are met, the solution to a coupled FBSDE corresponds to an optimal strategy. Subsequently, we give {some examples and conclude this section with} existence results of the associated FBSDEs.}
\subsection{Background: Subgradients for convex functions on convex or non-convex domains}

{
	\noindent For a convex function $g$, the subgradient of $g$ {at $z\in\bbr^{1\times d}$} is defined as 
	\begin{equation*}
		\nabla_c g(t,z) := \{y \in \mathbb{R}^{d}\vert g(t,\tilde{z}) - g(t,z) \geq (\tilde{z}-z)y \text{ for all } \tilde{z} \in \mathbb{R}^{1\times d} \}.
	\end{equation*} 
	If $g$ is differentiable in $z$, then the subgradient has only one element, the derivative of $g$, henceforth, denoted by $g_z$. It follows directly from the definition of a subgradient that a convex function $g$ is minimized in a $z$ if and only if $0\in \nabla_c g(t,z).$ 
For a set $\mathcal{M} \subset \mathbb{R}^d$ we define $a+b\mathcal{M}:=\{a+bm | m \in \mathcal{M}\}$ for $a,b \in \mathbb{R}$. We denote $\nabla_c f:=-\nabla_c (-f)$.}\\

Let $g$ be given as above. {For $t\in[0,T]$, fix $A\in \mathcal{B}(\bbr^d)$.  
We denote the restriction of the function $g$ to $A$ as}
{
\begin{equation}\label{def:gbar}
g_A(t,z)=\begin{cases}
g(t,z) & \text{if }z\in A, \\
\infty & \text{else}.
\end{cases}
\end{equation}}
 {In the later sections, $A$ will either be equal to $\text{Im}_t(Z)$ or to $cl(\text{Im}_t(Z))$,} where we denote by $ cl(\text{Im}_t(Z))$ the pointwise closure of the set $\text{Im}_t(Z(\omega))$ for each $(t,\omega)$. {To ease notation we will suppress the possible dependence of $A$ on $t$}.\\
 
 {Note that $g_A$ and $g$ coincide on the set $A$}. {In particular, $g_A$ is convex on its domain.\footnote{{Meaning that for $z_1,z_2\in \text{dom}(g_A(t,\cdot))$ we have $g_A(t,\lambda z_1+(1-\lambda)z_2)\leq \lambda g_A(t,z_1)+(1-\lambda)g_A(t,z_2)$ \disc{for all $\lambda\in[0,1]$ such that $\lambda z_1+(1-\lambda)z_2\in \text{dom}(g_A(t,\cdot))$}. However, the domain itself may not be convex.}}}
 {The reason for sometimes considering $A=cl(\text{Im}_t(Z))$ (instead of $A=\text{Im}_t(Z)$) is the following:} we can transform the problem \eqref{eq:EU.1} of maximizing over all strategies, into the problem of maximizing over all ``admissible'' $\cZ^\theta$ {with values in $\text{Im}(Z)$}. However, if the set ${\text{Im}(Z)}$ is not closed, the maximum of the latter problem may only be attained in an unadmissible $\cZ$ in the boundary. In this case, the portfolio optimal problem \eqref{eq:EU.1} does not have a solution, see Theorem \ref{Th:inverse} and Remark \ref{remark31} below. Hence, the boundary points of ${\text{Im}(Z)}$ play an important role in the analysis.\\
 
 {
 For $z\in A=\text{dom}(g_A(t,\cdot))$\footnote{{So that $g_A(t,z)<\infty$.}} define for $h\in \R^d$ $$D{g}'_A(t,z,h):=\limsup_{\lambda\downarrow 0} \frac{{g}_A(t,z+\lambda h)-{g}_A(t,z)}{\lambda}.$$}
  {
 Define further the (generalized) subgradient of the (not necessarily convex) function ${g}_A$ as $$\nabla {g}_A(t,z):=\{\zeta\in\R^d|\zeta h\leq D{g}'_A(t,z,h) \text{ for every } h\in\R^d\}.$$
By Theorem 23.2 in \cite{rockafellar1970convex}, $\nabla g_A=\nabla_c g_A$ if $A$ (and therefore $g_A$) is convex. Hence $\nabla$ is a true generalization of $\nabla_c$. The following summarizes some useful properties of $D{g}'_A$ and $\nabla g_A$.}
 {
 \begin{proposition}
 	\label{propnotes}
 \begin{enumerate}
 	\item[(a1)] If $D{g}'_A(t,z,h)<\infty$ for an $h\in\R^d$ and $g(t,\cdot)$ is differentiable in $z$ then $D{g}'_A(t,z,h)=g_z(t,z)h$. In particular, under \textbf{(B1)}, $D{g}'_A(t,z,h)<\infty$ implies that $D{g}'_A(t,z,h)=g_z(t,z)h$.
 	\item[(a2)] $D{g}'_A(t,z,h)< \infty$ if and only if $z+\lambda h\in A$ for all $\lambda\in [0,\varepsilon^{h,z}]$ for an $\varepsilon^{h,z}>0$. In this case $$D{g}'_A(t,z,h)=\lim_{\lambda\downarrow 0} \frac{g(t,z+\lambda h)-g(t,z)}{\lambda}=Dg'(t,z,h)<\infty.$$
 	\item[(a3)] If $B\in\mathcal{F}_t$, $X,Y\in L^0(\mathcal{F}_t)$, and $Dg'(t,X,Y)<\infty$ then $D{g}'_A(t,X,Y\I_{B})=\I_{B}D{g}'_A(t,X,Y)$.
 	\item[(a4)] $D{g}'_A(t,z,\tilde{\lambda}h)=\tilde{\lambda}D{g}'_A(t,z,h)$ for all $\tilde{\lambda}>0$ and $z\in A$.
 	\item[(b)] If $A_1\subset A_2$, then $\nabla g_{A_2}(t,z)\subseteq \nabla g_{A_1}(t,z)$ for all $z\in A_1$. Thus, $\nabla g_c=\nabla g_{\bbr^d}\subseteq \nabla g_A$ for all $A\in \mathcal{B}(\bbr^d)$ and $z\in A$. In particular, $\nabla  {g}_A(t,z)\neq \phi$.
 	\item[(c)] The set $\nabla {g}_A(t,z)$ is convex and weakly closed for $z\in A$.
 	\item[(d)] If ${g}_A(t,\cdot)$ is differentiable in $z$ we have $\nabla {g}_A(t,z)=\{ {g}_{A,z}(t,z)\}=\{g_z(t,z)\}$. If ${g}_A$ is convex, (but not necessarly differentiable) $\nabla  {g}_A$ agrees with the usual definition of a subgradient for convex functions, i.e., $\nabla_c {g}_A=\nabla {g}_A$.
 	\item[(e)] If $z$ is a local minimum, we have $0\leq D{g}'_A(t,z,h)$ for all $h \in \R^d$. This again is equivalent to $0\in\nabla {g}_A(t,z)$.
 	\item[(f)] $\nabla(\beta z+\lambda  {g}_A(t,z))=\beta+\lambda\nabla {g}_A(t,z)$ for $\lambda>0,\beta\in\R^d$. More general, if $f$ is differentiable we have $\nabla (f(z)+\lambda {g}_A(t,z))=f_z(z)+\lambda\nabla {g}_A(t,z)$ for all $z\in A$.
 	\item[(g)] Suppose that $A_1$ and $A_2$, locally agree in $z$ in the sense that $\tn{B_\epsilon (z)\cap A_1=B_\epsilon (z)\cap A_2}$ for an $\epsilon$-neighborhood around $z$. Then $\nabla g_{A_1}(t,z)=\nabla g_{A_2}(t,z)$. In particular, $\nabla g_A(t,z)=\nabla g_{cl(A)}(t,z)$, for all $z\in A$, if $A$ is closed, open, or a multidimensional Cartesian interval.
 \end{enumerate}
 \end{proposition}
}

\subsection{Necessary and sufficient conditions} 

The following theorem gives necessary optimality conditions.

\begin{theorem}\label{Th:FBSDE1new}
Suppose that $\theta^*$ is an optimal strategy of Problem  \eqref{eq:EU.1} with 
$\E[\vert U(X_T^{\theta^*}+H_L)\vert]<\infty$ and $\E [\vert U'(X_T^{\theta^*}+H_L)\vert^{1+\tilde{\epsilon}}]<\infty$ with $\tilde{\epsilon}>0$. Then there exists a continuous adapted process $\zeta$ with $\zeta_T=H_L$ such that $U'(X^{\theta^*}+\zeta)$ is a martingale process. Setting $M_t^\zeta:=d \langle\zeta, W\rangle_t/ \d t$\footnote{{Note that this derivative may only be defined $\d t$-a.e.}} { and $A=\text{Im}_t(Z)$}, we have
\begin{equation}\label{eq:rem}
0\in U''(X^{\theta^*}_t+\zeta_t)(\cZ^{\theta^*}_t-Z_t(H_\mathrm{M})+M^{\zeta}_t)^\intercal-U'(X_t^{\theta^*}+\zeta_t) \nabla {{g}_A}(t,\cZ^{\theta^{*}}_t) ,\quad  \d \P\times \d t \hspace{0,1cm} \mbox{a.s.}
\end{equation}
\end{theorem}
\disc{
\begin{remark}
	Note that equation \eqref{eq:rem} actually gives the necessary optimality condition, in the sense that for an optimal $\theta^*$, \eqref{eq:rem} holds. If $\cZ^{\theta^{*}}_t\in \text{int}(A),$ \eqref{eq:rem} becomes 
	$$0= U''(X^{\theta^*}_t+\zeta_t)(\cZ^{\theta^*}_t-Z_t(H_\mathrm{M})+M^{\zeta}_t)^\intercal-U'(X_t^{\theta^*}+\zeta_t)g_z(t,\cZ^{\theta^{*}}_t) .$$
	On the other hand, in the general case \eqref{eq:rem} means in terms of the directional derivative that
	$$0\leq  -U''(X^{\theta^*}_t+\zeta_t)(\cZ^{\theta^*}_t-Z_t(H_\mathrm{M})+M^{\zeta}_t)^\intercal\, h+U'(X_t^{\theta^*}+\zeta_t) Dg(t,\cZ^{\theta^{*}}_t,h), $$
	 for all $h\in\bbr^d$. 
\end{remark}
}
{Fix $t\in[0,T]$ and $A\in \mathcal{B}(\bbr^d)$. To characterize the necessary conditions for an optimum in terms of a stochastic evolution of an FBSDE we need to define a new functional, say $\mathcal{H}$, which is set-valued. Specifically, for a triple of adapted processes $(X,\zeta,M)$ let $\cH_A(t,X_t(\omega),\zeta_t(\omega),M_t(\omega))$ be the $\mathcal{B}(\bbr^d)$-measurable set of all $h\in A $ that solve the equation }
{ \begin{equation}\label{eq:bijective1}
	0 \in-U'(X_t(\omega)+\zeta_t(\omega)) \nabla {g}_A(t,h) +U''(X_t(\omega)+\zeta_t(\omega))\big(h{-Z_t(H_\mathrm{M})(\omega)}+M_t(\omega)\big).
\end{equation}}
	In other words, $\cH_{\tc{A}}$ is the solution of \eqref{eq:bijective1} involving the gradient of $g_A$. We write this as \begin{align}\label{eq:bijective}
		0 \in&-U'(X_t(\omega)+\zeta_t(\omega)) \nabla {g}_A(t,\cH_A(t,X_t(\omega),\zeta_t(\omega),M_t(\omega)))\nonumber \\ &+U''(X_t(\omega)+\zeta_t(\omega))\big(\cH_A(t,X_t(\omega),\zeta_t(\omega),M_t(\omega)){-Z_t(H_\mathrm{M})(\omega)}+M_t(\omega)\big).
\end{align}
 Note that by definition, $\cH_{\tc{A}}$ only takes values in ${A}$.

	{For $A=\text{Im}_t(Z)$ or $A=cl(\text{Im}_t(Z))$, we may by a measurable selection theorem assume that the image of $(t,\omega)\mapsto\cH_A(t,X_t(\omega),\zeta_t (\omega),M_t^\zeta(\omega))$ is a $\mathcal{P} \otimes {\mathcal{B}}	(\mathbb{R}^d)$-measurable set. } Below we show that the optimal strategy can then be characterized by a solution {(in the sense of the stochastic differential inclusion)} of a fully-coupled {multi-valued} forward-backward system. {Following Aumann (1965), we identify in the sequel the integral of a set-valued function with the set of all integrals of its progressively measurable and integrable (measurable) selections and consider stochastic differential inclusions, see \cite{kree1982diffusion,aubin1984differential,kisielewicz2013stochastic,ararat2023set}. Specifically, if $B$ is a progressively-measurable set, we define $\int f(B)  \d s +\int g(B) \d W_s$ as the set of all progressively-measurable processes say $Y$, such that there exists a progressively-measurable processes $b$ taking values in $B$, with $Y=\int f(b_s) \d s +\int g(b_s) \d W_s$.
}

{Consider the coupled FBSDE
\begin{align}
	\begin{cases} 
		X_t&{\in} \,x_0-\int_0^t g(s,\cH_A(s,X_s,\zeta_s,M_s))\d s +\int_0^t \cH_A(s,X_s,\zeta_s,M_s) \d W_s+\Pi_t(H_{\mathrm{ M}})-\Pi_0(H_{\mathrm{ M}}),\\[3mm]
		\zeta_t&{\in} \, H_L-\int_t^T M_s\d W_s+\int_t^Tg(s, Z_s(H_\mathrm{M})) -g(s,\cH_A(s,X_s,\zeta_s,M_s)) \d s\\[3mm]
		& \hspace{1cm} +\int_t^T\frac{1}{2}\frac{U^{(3)}}{U''}(X_s+\zeta_s) \vert \cH_A(s,X_s,\zeta_s,M_s) -Z_s(H_\mathrm{M})+M_s\vert^2 \d s,
	\end{cases}
	\label{eq:FBSDE}
\end{align}
where $(X,\zeta,M)$ is called the ``solution'' of this FBSDE.
}

\begin{theorem}\label{Th:FB} 
\tn{ Let $A$ be any progressively measurable set for which \eqref{eq:rem} holds.} Under the assumptions of Theorem \ref{Th:FBSDE1new}, there exists a triple $(X,\zeta,M)$ solving the FBSDE \eqref{eq:FBSDE}. The optimal strategy for Problem \eqref{eq:EU.1} is characterized by 
\begin{equation}
\cZ^{\theta^{*}}_t{\in}\cH_{{A}}(t,X_t,\zeta_t,M_t),
\label{eq:Opt.2}
\end{equation}
In particular, \eqref{eq:Opt.2} holds for $A=\text{Im}_t(Z)$. 

\end{theorem}

{The ratio $-\frac{U^{(3)}}{U''}$ is also called prudence in the decision theoretic literature. Note that the utility function of the large investor influences the FBSDE and the optimal solution only through the levels of the underlying prudence it induces. Furthermore, since by \eqref{eq:FBSDE} $U'(X_t+\zeta_t)$ is a martingale, we have
\begin{equation}
X_t+\zeta_t=(U')^{-1}(R_t)=(U')^{-1}(\mathbb{E}[U'(X_T+H_L)\mid \mathcal{F}_t]).
\label{eq:prudence}
\end{equation}
Consequently, $X+\zeta$ is the certainty equivalent of the marginal utility. In other words the expected marginal optimal terminal wealth, corresponds to gaining additional \$$ \zeta_t$ from time $t$ on. On the other hand $M$ gives the local variation of this amount due to noise. It is of course well known from (static) duality theory and the BSDE literature (see the references in the introduction) that the marginal utility plays a key role for analyzing optimal solutions of portfolio selection problems. It is furthermore worthwhile to note that by Theorem \ref{Th:FB} the marginal utility certainty equivalent, $X+\zeta$, consists of the sum of three parts a) a general martingale term due to underlying nose, b) a penalty term involving the function $g$ due to the market maker being risk averse, 
c) a penalty term which is the squared quadratic variation of the expected marginal terminal utility, $X+\zeta$, multiplied with the prudence of the large investor (induced by the utility function $U$). In particular, locally the penalty due to the market maker's risk aversion, and the one due to the prudence of the investor multiplied with the local variability can be additively separated. A higher prudence results in a higher penalty. 
}

\vspace{2mm}

\vspace{2mm}


Below we study the other direction of Theorem \ref{Th:FB}. {For this purpose, we consider the following assumption: \\
\textbf{Assumption (H0)}: One of the following conditions holds: Either
\begin{itemize}
	\item[i)]  ${cl(\text{Im}(Z))}$ is convex \disc{(for our fixed $H_\mathrm{M}$)}; or
	\item[ii)]  the market is complete \disc{(for our fixed $H_\mathrm{M}$)}; or
	\item[iii)] 
 $d=1$.\footnote{{Or $\sigma$ is diagonal, $g$ is of the form $g(t,z)=\sum_{i=1}^ng^i(t,z^i)+g^{n+1}(t,z^{n+1},\ldots,z^d)$, and $	H_\mathrm{M}=\sum_{i=1}^n h^{\mathrm{M},i} (\cR_T^{i})+H_{\mathrm{M}}^{\perp}$ where $H_{\mathrm{M}}^{\perp}=h^{\perp}(W^{n+1},\ldots,W^d)$ is assumed to be measurable with respect to the filtration generated by $(W^{n+1},\ldots,W^d)$.}}
\end{itemize}
}
{
	\begin{proposition}\label{h2}
		If \textbf{(H0)} holds, \eqref{eq:bijective} has a unique solution for $A=cl(\text{Im}_t(Z))$, and we can identify $\mathcal{H}_{cl(\text{Im}_t(Z))}(t,X,\zeta,M)$ as an $\bbr^d$-valued function which for fixed $(t,X,\zeta,M)$ is given  by the unique solution of \eqref{eq:bijective}. In particular, $\cH_{cl(\text{Im}_t(Z))}$ is not multi-valued (i.e., set-valued) but single-valued. 
\end{proposition}}

 Below we show that an optimal strategy can be obtained from the solution of an FBSDE system. To begin, let $\psi_1(x):=\frac{U'}{U''}(x)\leq 0.
$ $-1/\psi_1$ is also called the risk aversion of the investor.
\begin{theorem}\label{Th:inverse}
{Suppose that \textbf{(B1)-(H0)} hold} and that $\psi_1$ is bounded. Let $(X,\zeta,M)$ be a triple of adapted processes which for {$A=cl(\text{Im}_t(Z))$} solves the FBSDE \eqref{eq:FBSDE} (with ``$=$'' instead of ``$\in$''), and satisfies 
$$
\E [U'(X_T+H_L)^2]<\infty,\quad \E [\vert U(X_T+H_L)\vert]<\infty,\quad \E[\int_0^T |M_t|^2 dt]<\infty.
$$
Furthermore, assume that either (a) $\sup_t U'(X_t+\zeta_t)$ is integrable or (b) $\frac{1}{\psi_1}$ is bounded. 
Then, the solution of the problem 
\begin{equation}\label{eq: sternstern}
\sup_{\mathcal{Z}\: \text{\small takes values in } cl(\text{Im}(Z)), \cZ \in \mathcal{L}^2(\d \P \times \d s)}\E[U(X_T^{\mathcal{Z}}+H_L)]
\end{equation}
is given by $(\mathcal{Z}_t^*)_t=((\cH_{{cl(\text{Im}_t(Z))}}(t,X_t,\zeta_t,M_t))_t\in {cl(\text{Im}_t(Z))}$. Furthermore, \tn{if $g$ is Lipschitz or strictly convex, the optimal solution of \eqref{eq: sternstern} is unique.} The optimal portfolio choice Problem (\ref{eq:EU.1}) has a solution $\theta^*$ if and only if there exists a version of  $\mathcal{Z}^*$ such that $\mathcal{Z}^*\in\text{Im}(Z)$. In particular, if \eqref{eq:EU.1} has a solution, $\theta^*_t$ is given by (\ref{eq:Htilde}) below. {Finally, if $\text{Im}_t(Z)$ is open, closed, a multi dimensional Cartesian interval, or, more general, agrees with $cl(\text{Im}_t(Z))$ locally for all $z\in\text{Im}_t(Z)$, we have $\cH_{\text{Im}_t(Z)}(t,X_t,\zeta_t,M_t)=\cH_{cl(\text{Im}_t(Z))}(t,X_t,\zeta_t,M_t)\tn{\cap \text{Im}_t(Z)}$. In particular, \tn{in this case, if $\mathcal{Z}^*\in {\text{Im}_t(Z)}$}, $(X,\zeta, M)$ also solve the FBSDE \eqref{eq:FBSDE} with $A=\text{Im}_t(Z)$.} 
\end{theorem}
{
\begin{corollary}\label{coro37}
Under \textbf{(B1)-(H0)}, there can be at most one solution $(X,\zeta,M)$ to the FBSDE \eqref{eq:FBSDE} with $A=cl(\text{Im}_t(Z))$ satisfying \tn{$X_T\in\mathcal{D}_T$}, $\E[U'(X_T+H_L)^2]<\infty$, \tn{$\E [\vert U(X_T+H_L)\vert]<\infty$} and $\E[\int_{0}^{T}|M_t|^2 \d t]<\infty$ (with "$\in$" in \eqref{eq:FBSDE} replaced by "$=$").
\end{corollary}
}

\begin{remark}\label{remark31}
Without the ``cl'', Problem \eqref{eq: sternstern} corresponds to the optimal portfolio problem \eqref{eq:EU.1}. Now under the conditions of Theorem \ref{Th:inverse} above we always have therefore $\mathcal{Z}^*\in {cl(\text{Im}(Z))}$. On the other hand, $\mathcal{Z}^*$ might be in the boundary of $ {cl(\text{Im}(Z))}$, but not in ${\text{Im}(Z)}$ itself. In this case an admissible optimal strategy $\theta^*$ does not exist. 
\end{remark}
{
\begin{remark}
	If an optimal solution exists, then it follows from the strict comparison result for BSDEs and \eqref{diff} in the appendix, that under assumption \textbf{(B1)} with $d=1$ and $s'\stackrel{(<)}{>}0$, the optimal strategy $\theta^*$ is unique.
\end{remark}
}
Using a measurable selection theorem, see e.g. \cite{aumann1967measurable}, there exists a $\mathcal{P}\otimes\cB(\bbr^d)$-measurable function
$
\wt{\cZ}:\Omega\times[0,T]\times \bbr^d\mapsto \bbr^d
$
such that 
$
\cZ(\omega,t,
\wt{\cZ}(\omega,t,z))=z$ for $z \in {\text{Im}(Z)}.$ 
{Now suppose that $(X,\zeta,M)$ solves the FBSDE \eqref{eq:FBSDE} for $A=cl(\text{Im}_t(Z))$.} By the definition of \tc{$\cH_A$} we have $\cH_{\tc{A}}(t,X_t,\zeta_t,M_t) \in  cl(\text{Im}(Z))$. If $\cH_{\tc{A}}(t,X_t,\zeta_t,M_t)$ actually belongs to ${\text{Im}(Z)}$, an optimal strategy $\theta^*$ is therefore given by
\begin{equation}
\theta^{*}_t=\wt{\cZ}(t,\cH_{\tc{A}}(t,{X_t},\zeta_t,M_t)).
\label{eq:Htilde}
\end{equation}

If one wants to compute the optimal solution numerically in the general case, a possible way suggested by our results is to: a) in a first step solve the coupled FBSDEs \eqref{eq:FBSDE}, to obtain the optimal $\cZ^*$ given by equation \eqref{eq:Opt.2}. Then step b): once the optimal $\cZ^*$ is found, invert the parameterized family of decoupled FBSDEs (or in the Markovian case, equivalent semi-linear PDEs) in order to find the corresponding $\theta^*$ such that $\cZ^*=\mathcal{Z}^{\theta^*}.$ For both steps there are by now numerical algorithms available, see \cite{delarue2006forward}, \cite{zhao2014new}, \cite{huijskens2016efficient}, and \cite{germain2022numerical}, or the earlier work \cite{douglas1996numerical}. {Note that} step b) in low dimensions can actually be implemented by computing  $(\Pi^y, \cZ^y)$) for a fixed $y$, for values of $y$ on a grid. Three further remarks are in order: 
\begin{itemize}
	\item[i)] As shown in Section \ref{se:Regu}, in the case of a complete financial market step a) can often be avoided. A complete financial market means that for any suitable integrable $\mathcal{F}_T$-measurable random variable $H$ there exists $(a,\theta)\in \mathbb{R}\times \Theta$ such that
	$H=a+\mathcal{I}_T(\theta),$
	with $\mathcal{I}_T(\theta)$ corresponding to the P\&L of the strategy $\theta$.
	Sufficient conditions when such a completeness condition holds can be found in \cite{Fukasawa2017}. 
	\item[ii)] The optimal $\cZ^*$ in our setting becomes even explicit (see Proposition \ref{propcomplete} below) if the Market Maker {(but not necessarily the Large Trader)} uses an exponential utility function. It is worthwhile noting that in a setting without price impact the optimal strategy also typically becomes only explicit in special cases. 
	\item[iii)] {As shown by Proposition \ref{Le:Condition3next} below, in the case of a positively homogeneous driver, $\theta^*$ can be expressed directly through the solution of the FBSDE by equations \eqref{lhsp}-\eqref{rhsp}.}
\end{itemize}

\subsection{Examples}

The next proposition gives cases where it is optimal to invest everything in the riskless asset.
\begin{proposition}\label{optimalL} Suppose that $H_L=0$ and $0\in \nabla {g}_A (t,Z_t(H_{\mathrm{M}}))$ {for $A=cl(\text{Im}_t(Z))$}. 
	Then the optimal terminal wealth is given by $X_T^*=x_0$. This means that it is optimal for the Large Trader to invest nothing {in the risky assets}, i.e., $\theta^*=0$. In addition, the triple $(X_t^*=x_0, \zeta_t^*=0,M_t^*=0)$ is then a solution of the FBSDE system \eqref{eq:FBSDE}. 
\end{proposition}
{The next proposition shows that {if $d=1$} the image space of $\cZ^y$ can be expressed as a non-empty Cartesian {random} interval.}
\begin{proposition}\label{multidimintervals} 
{Assume that \tn{$d=1$.}} Then $\text{Im}(Z)$ is a non-empty (random) interval, $ \mathbb{I}=(\mathbb{I}_t(\omega))_{t,w}$, of the form $\mathbb{I}_t (\omega):= [a_t(\omega),b_t(\omega)]$
 $\d\mathbf{P}\times\d t$ a.s. where $a\leq b$ are progressively-measurable processes possibly taking the values $\pm \infty$, and the interval may also be open or half-open. 
	Therefore, $${\text{Im}(Z)}=\mathbb{I}:=(\mathbb{I}_t(\omega))_{t,\omega},\quad \mbox{ and we write}\quad  
	\text{Im}(Z(t,\omega,\cdot))=\mathbb{I}_t(\omega).
	$$
\end{proposition}
%
{
\begin{proposition}\label{Le:Condition3} Suppose that \tn{$d=1$.} 
	Define \[ \hat{U}_{t,X,\zeta,M}(\cH) := -U^\prime(X+\zeta)g_z(t,\cH) + U^{\prime\prime}(X+ \zeta)(\cH -Z_t(H_\mathrm{M})+ M). \] 
	Then, 
	\begin{equation}\label{Ztheta}
		\cZ_t^*= \cH(t,X_t,\zeta_t,M_t),
	\end{equation}
	with 
	$\mathcal{H}(t,X,\zeta,M):= a_t\vee(\hat{U})^{-1}_{t,X,\zeta,M}(0)\wedge b_t$, where $(X,\zeta,M)$ is the solution of the FBSDE \eqref{eq:FBSDE}  with $A=cl(\text{Im}_t(Z))$, and $\cZ^*$ solves the optimization problem \eqref{eq: sternstern}.
\end{proposition}
}
\begin{example}
	Under \tn{$d=1$}, the functional $\cH$ defined above can be computed {even more} explicitly for special cases. For example:
	\begin{itemize}
		\item 
		If $g(t,z)=c_t z+d_t$ for some deterministic functions $c,d$ and $U$ is a CARA function, meaning that $U(x) = -e^{-x\gamma}$, with a risk aversion coefficient $\gamma \in (0,\infty)$, then we have $\cH(t,X,\zeta,M)=a_t\vee (-{c}_t/\gamma-M_t)\wedge b_t$, which is independent of $(X,\zeta)$.
		\item Similarly, if $g(t,z)=c_t \vert z \vert ^2/2$ and $U$ is a CARA utility function we get $\cH(t,X,\zeta,M)=a_t\vee (-\gamma M/(\gamma+ {c}_t))\wedge b_t$, which is also independent of $(X,\zeta)$.
	\end{itemize}
\end{example}
{
Next, we turn our attention to the case \textbf{(B2)} with $g$ being positively homogeneous and $H_\mathrm{M}=0$. For $M \in \mathbb{R}^{1\times d} $, we define \begin{align}\label{lhsp} \frac{-Z_t(-S)M^\intercal+\frac{U^\prime}{U^{\prime\prime}}(X+\zeta) g(t,Z_t(-S))}{\vert Z_t(-S)\vert^2}=:\theta^*(t,X,\zeta,M) \end{align}
if the left-hand side (LHS) of (\ref{lhsp}) is (strictly) positive, or alternatively
\begin{align}\label{rhsp}-\frac{-Z_t(S)M^\intercal+\frac{U^\prime}{U^{\prime\prime}}(X+\zeta)g(t,Z_t(S))}{\vert Z_t(S)\vert^2}=:\theta^*(t,X,\zeta,M)  \end{align}
if the LHS of (\ref{rhsp}) is (strictly) negative. 
Else we choose $\theta^*(t,X,\zeta,M)=0.$ {The following proposition characterizes optimality under \textbf{(B2)}(i.e., under $g$ being positively homogeneous and $H_\mathrm{M}=0$).} 
\begin{proposition}\label{Le:Condition3next} If $g$ is positively homogeneous and $H_\mathrm{M}=0$, an optimal strategy must satisfy $\theta^*:=\theta^*(t,X,\zeta,M)$ with $\theta^*(t,X,\zeta,M)$ satisfying \eqref{lhsp}-\eqref{rhsp} and $(X,\zeta,M)$ being the solution of the FBSDE \eqref{eq:FBSDE} with $A=cl(\text{Im}_t(Z))=\text{Im}_t(Z)$. Furthermore, 
	$$\cZ_t^{\theta^*}=\theta^*(t,X_t,\zeta_t,M_t)Z_t(-S).$$ 
\end{proposition}
}

\vspace{2mm}
\subsection{Existence results for coupled FBSDEs}\label{se:existence}
In this section we show that the FBSDE \eqref{eq:FBSDE} admits a solution under appropriate assumptions on $g$ and the utility function $U$. 
\begin{theorem}\label{th:generalcase}
Suppose that {\textbf{(H0)} and} one of the following conditions holds:
\begin{itemize}
	\item[(i)] There exists $K,\tilde{\epsilon}>0$ such that $|U'(x)|^{1+\tilde{\epsilon}}\leq K(1+|x|+|U(x)|)$ for all $x$, and $g$ grows at least quadratically, meaning that there exists $K_1,K_2>0$ such that $$g(t,z)\geq -K_1+K_2|z|^2.$$ 
\item[(ii)] {$H_L=0$, and for $\alpha\geq 0$}, $g(t,z)=\alpha |z|^2+l(t,z)$ is convex in $z$ and $l$ is Lipschitz in $z$ (also uniformly in $t$ and $\omega$) and continuously differentiable in $z$ with $l(t,0)=0$. There exists $K,K_1,K_2>0$ such that $|U'(x)|^{1+\tilde{\epsilon}}\leq K(1+|x|^2+|U(x)|)$ and $U(x)\leq K_1-K_2|x|^2$ both for all $x$. Furthermore, there exists an $w_0\in \mathbb{R}$ such that for all $x\leq w_0,$ $U$ is strictly increasing, and $U(x)$ is {constant} for all $x\geq w_0$.  
\item[(iii)] $U$ is exponential, i.e. $U(x)=a-be^{-{\gamma x}}$ for $a\in\R$ and $b,\gamma>0$.
\end{itemize}
Then there exists a solution to the FBSDE \eqref{eq:FBSDE} {(with ``$=$'' instead of ``$\in$'') for $A=cl(\text{Im}_t(Z))$}.
\end{theorem}
{
\begin{remark}
	For the uniqueness of the FBSDE \eqref{eq:FBSDE}, see Corollary \ref{coro37}.
\end{remark}}
\begin{remark}
It is shown in the appendix that actually under the conditions of (i) or (ii) in Theorem \ref{th:generalcase}, the maximization problem \eqref{eq: sternstern} above has a solution.
\end{remark}

\begin{remark}
	 Although throughout the paper we require $U$ besides being concave to be strictly increasing and three times continuously differentiable on $\mathbb{R}$, for Theorem \ref{th:generalcase}(ii) we actually only need these conditions on $(-\infty,w_0]$ as we will see in the proof that the optimal solution will only take values in this interval. The classical examples for a utility satisfying (ii) are  quadratic utility functions.
\end{remark}

\section{Connection with BSPDEs }\label{se:BSPDE}
In this section we characterise the value function of our expected utility maximization problem \eqref{eq:EU.1} by a BSPDE which results from a direct application of the It\^o-Ventzel formula for regular families of semimartingales. \tn{For similar results without price impact, see e.g. \cite{mania2017}.} 
 Setting $H_\mathrm{M}= 0$ we first introduce
\begin{equation*}
\cI_\zs{s,t}(\theta):=-\int_s^t {g}(u,\cZ^{\theta}_u)\d u +\int_s^t \cZ^{\theta}_u \d W_u,\,\quad 0\le s\le t\le T,
\label{eq:}
\end{equation*}
which represents the total gain/loss of the strategy $\theta$ in $[s,t]$. For any $t\in[0,T]$ and $x\in\bbr$ we define 
\begin{equation}
V(t,x):={\esssup}_\zs{\theta \in \Theta, \theta_s, s\in[t,T]} \E\big[U(x+\cI_\zs{t,T}(\theta)+H_L)\mid\cF_t\big].
\label{eq:Vx}
\end{equation}

The following condition is assumed throughout this section.

\vspace{2mm}
{\em
\noindent{\bf (CV)} For any $t\in[0,T]$ and $x\in\bbr$, the supremum in \eqref{eq:Vx} is attained, i.e., there exists an admissible strategy $\theta^*(x)_s, s\in[t,T]$ such that $V(t,x)= \E\big[U(x+\cI_\zs{t,T}(\theta^*(x))+H_L)\mid\cF_t\big]$.  \tn{Furthermore, $V$ is a random field, i.e. a mapping from $\Omega\times [0,T]\times \bbr$ to $\bbr$.} 
}
\vspace{2mm}


%
%


%

 \vspace{2mm}
Sufficient conditions for \noindent{\bf (CV)} to hold are {for instance} given in Theorem \ref{th:generalcase} (guaranteeing existence of an optimal $\cZ^*$ solving Problem \eqref{eq: sternstern}) and in Theorem \ref{Th:inverse}. 


\vspace{2mm}

We recall that for any $x\in\bbr$, the process $V(t,x),t\in[0,T]$ is a supermartingale admitting an RCLL modification (see e.g. Theorem 9 in \cite{protter2005stochastic}). Its Galtchouk-Kunita-Watanabe (GKW) decomposition is given by $V(t,x)=V(0,x)-A(t,x)+\int_0^t \alpha(s,x) \d W_s,$ where $A(t,x)$ is an increasing process, and $\alpha$ is a progressively measurable and square integrable process. As in \cite{mania2010,mania2017}, we define a regular family of semimartingales as follows:
\begin{definition}[regular family of semimartingales]\label{Def:1}
The process $V(t,x):\Omega\times [0,T] \times \mathbb{R}\to\mathbb{R}$ is a regular family of semimartingales if
\begin{itemize}
	\item [(a)] $V(t,x)$ is twice continuously differentiable with respect to $x$ for any $t\in[0,T]$.
	\item [(b)] For any $x\in\bbr$, $V(t,x),t\in[0,T]$ is a special semimartingale with progressively measurable finite variation part $A(t,x)$ which admits the representation $A(t,x)=\int_0^t b(s,x) \d s$, where $b(s,x)$ is progressively measurable, i.e.,
	\begin{equation}
V(t,x)=V(0,x)-\int_0^t {b}(s,x) \d s+\int_0^t \alpha (s,x) \d W_s.
\label{eq:ValueBSPDE}
\end{equation}
	\item [(c)] For any $x\in\bbr$, the derivative process $V_x(t,x)$ is a special semimartingale with decomposition
$
V_x(t,x)=V_x(0,x)-\int_0^t {b}_x(s,x) \d s+\int_0^t \alpha_x (s,x) \d W_s,
$
where $\alpha_x$ and $b_x$ denote the derivative of $\alpha$ and $b$ with respect to $x$ respectively. 
\end{itemize}
\end{definition}
We in addition assume that the following condition holds for the coefficients of the regular family of semimartingales $V(t,x),t\in[0,T]$:

\vspace{2mm}
{\em
\noindent{\bf (CR)} The functions $b(t,x),\alpha(t,x)$ and $\alpha_x(t,x)$ in Definition \ref{Def:1} are continuous with respect to $x$
and satisfy, for any constant $c>0$,
$
\E\bigg[\int_0^T  \max_\zs{\vert x\vert \le c}(\vert b(t,x)\vert, {\vert b_x(t,x)\vert},\vert\alpha(t,x)\vert^2 ,\vert\alpha_x(t,x)\vert ^2)\d s\bigg]<\infty.
$
}
\vspace{1mm}
\\
{Condition ${\bf (CR)}$ is standard in the analysis of stochastic flows and typically satisfied in examples, see for instance Section \ref{se:Regu}.}
Below, $\cV^{1,2}$ denotes the class of all regular families {of} semimartingales $V$ defined by Definition \ref{Def:1} whose coefficients $b$, and $\alpha$ satisfy Condition ${\bf (CR)}$. Recall also that a process $V$ belongs to the class $D$ if the family of processes $V_{\tau}{\bf 1}_\zs{ \tau \le T}$ for all stopping times $\tau$ is uniformly integrable.

\vspace{2mm}
The following proposition shows a dynamic programming principle.
\begin{proposition}\label{Le:supermart} 
Let $\theta$ be admissible and $s\in[0,T]$. Then, the process $\{V(t,x+\cI_\zs{s,t}(\theta)),t\ge s\}$ is a supermartingale for all $x\in\bbr$. Furthermore, 
\begin{equation}
V(s,x)=\esssup_\zs{\theta \in \Theta,\theta_u,u\in[s,T]} \E \bigg[V(t,x+\cI_\zs{s,t}(\theta))\mid \cF_s\bigg]
\label{eq:super}
\end{equation}
and a strategy $\theta^*$ is optimal if and only if $V(t,x+\cI_\zs{s,t}(\theta^*))$ is a martingale process for every $s$.
\end{proposition}

\begin{proposition}\label{Le:VlamdaV} \tn{Under \textbf{(H0)} the value function $V(t,x)$ is concave with respect to $x$. Moreover,} {under \textbf{(HL)-(H0)}} the value function $V(t,x)$ is strictly concave with respect to $x$ meaning that {for all $\lambda\in(0,1)$,  $V(t,\lambda x_1+(1-\lambda)x_2)> \lambda V(t,x_1)+ (1+\lambda) V(t,x_2)$ a.s. if $x_1\neq x_2$}. 
	
\end{proposition}

The next lemma prepares the ground for showing that under appropriate smoothness conditions, the value function solves a BSPDE.

\begin{lemma}\label{Le:abc} \tn{Suppose that $V\in \cV^{1,2}$ and that \textbf{(H0)} holds. Furthermore, assume either \textbf{(HL)} or that $g$ is coercive and strictly convex.} Then, there exists a progressively measurable process denoted by ${\upsilon}(t,x)$ such that the supremum of \begin{equation}
\cL^V(t,x):=\esssup_{\cZ\in{cl(\text{Im}_t(Z))}}\bigg(-g(t,\cZ)V_x(t,x)+\frac{1}{2}\vert \cZ\vert ^2 V_{xx}(t,x)+\cZ\alpha_x(t,x)\bigg),
\label{eq:cL}
\end{equation}
is attained {at $\cZ_t^*(x)=\upsilon(t,x)$}. In particular, ${\upsilon}(t,x)$ satisfies
the first order condition $ 0\in \cU^V(t,\upsilon(t,x),x)$, where $\cU^V(t,z,x):=-\nabla {g}_A^\intercal(t,z)V_x(t,x)+ z V_{xx}(t,x)+\alpha_x(t,x)$ {with $A=cl(\text{Im}_t(Z))$}. {Moreover, $\upsilon$ is unique under \textbf{(H0)}.}

\end{lemma}

We can now present the two main theorems of this section.

 \begin{theorem}\label{Th:SBPDEss}
Suppose that there exists a modification of the value function, also denoted by $V$, which is in $\cV^{1,2}$. \tn{Furthermore, assume either \textbf{(HL)} or that $g$ is coercive and strictly convex.} Then $V$ is a solution of the BSPDE
\begin{align}
V(t,x)=V(0,x)+\int_0^t \alpha(s,x) \d W_s-\int_0^t \cL^V(s,x) \d s, \label{eq:VLV}
\end{align}
where $V(T,x)=U(x+H_L)$ and the operator $\cL$ is defined by \eqref{eq:cL}.
Moreover, {under \textbf{(H0)}} a strategy $\theta^* \in \Theta$ with $V(t,X_t^{\theta^*})$ belonging to class $D$ is optimal if and only if 
$\mathcal{Z}^{\theta^*}_t=\upsilon(t,X^{\theta^*}_t)$, i.e.
\begin{equation}
0 \in \cU^V(t,\upsilon(t,X^{\theta^*}_t),X^{\theta^*}_t):=-\nabla{g}_A{^\intercal}(t,\upsilon(t,X^{\theta^*}_t))V_x(t,X^{\theta^*}_t)+ \upsilon(t,X_t^{\theta^*}) V_{xx}(t,X^{\theta^*}_t)+\alpha_x(t,X^{\theta^*}_t),
\label{eq:Z*gen}
\end{equation}
{with $A=cl(\text{Im}_t(Z))$}.
The optimal wealth process $X_t^{\theta^*}$ is then characterized by the forward SDE
\begin{equation}
X_t^{\theta^*}=x_0-\int_0^tg(s,\mathcal{Z}^{\theta^*}_s) \d s+\int_0^t \mathcal{Z}^{\theta^*}_s \d W_s.
\label{eq:Xoptbgen}
\end{equation}
\end{theorem}

\vspace{2mm}

We have seen that the value function of an optimal strategy can be characterized by a BSPDE \eqref{eq:VLV}-\eqref{eq:Xoptbgen}. Differentiating this BSDPE (assuming all derivatives below exist) we obtain
\begin{equation}
\begin{cases}
V_x(t,x)&=V_x(0,x)+\int_0^t \alpha_x(s,x) \d W_s- \int_0^t \cL^V_x(s,x) \d s,\quad V_x(T,x)=U'(x), \label{eq:VLVx}\\[2mm]
X_t^{\theta^*}&=x_0-\int_0^tg(s,\mathcal{Z}^{\theta^*}_s) \d s+\int_0^t \mathcal{Z}^{\theta^*}_s \d W_s.
\end{cases}
\end{equation}

The following theorem gives the connection between the FBSDEs \eqref{eq:FBSDE} and the BSPDE \eqref{eq:VLV}.

\begin{theorem}\label{Th:6.2}
Assume {\textbf{(HL)-(H0)}}, all the conditions of Theorem \ref{Th:SBPDEss}, that $(V_x(t,x), \alpha_x(t,x),\cL^V_x(t,x),X_t^{\theta^*})$ is a solution of the BSPDE \eqref{eq:VLVx} and that $V_x(t,x)$ is a regular family of semimartingales \tn{and $V(t,X_t^{\theta^*})$ is of class D}. Let $\upsilon(t,X^{\theta^*}_t)$ be the unique adapted maximizer process in \eqref{eq:cL} and taking only values in the interior of ${\text{Im}_t(Z)}$.
Then the triple $(X_t^{\theta^*},\zeta_t,M_t)$  defined by
\begin{equation*}
\zeta_t=I(V_x(t,X_t^{\theta^*}))-X_t^{\theta^*},\quad M_t=\frac{\upsilon(t,X^{\theta^*}_t)V_{xx}(t,X_t^{\theta^*})+\alpha_x(t,X_t^{\theta^*})}{U''(X_t^{\theta^*}+\zeta_t)}-\upsilon(t,X^{\theta^*}_t),
\label{eq:}
\end{equation*}
is a solution of the FBSDE \eqref{eq:FBSDE} {with $A=cl(\text{Im}_t(Z))$}. 
\end{theorem}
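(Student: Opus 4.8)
The plan is to run the construction from the proof of Theorem \ref{Th:FBSDE1} in reverse, starting now from the BSPDE side. There the optimality FBSDE was assembled from the martingale $R_t=\E[U'(X^{\theta^*}_T)\vert\cF_t]$ and the auxiliary process $\zeta_t=I(R_t)-X^{\theta^*}_t$, so the task here is to show that $R_t:=V_x(t,X^{\theta^*}_t)$ plays exactly this role, and that the $\zeta_t$ and $M_t$ defined in the statement form the resulting solution of \eqref{eq:FBSDE}. Throughout I write $\cZ^{\theta^*}_t=\upsilon(t,X^{\theta^*}_t)$, which is the identification provided by Theorem \ref{Th:SBPDEss}.

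First I would apply the It\^o--Ventzel formula (Appendix \ref{AppIto}) to $V_x(t,X^{\theta^*}_t)$, using that $V_x$ is a regular family of semimartingales with decomposition \eqref{eq:VLVx} and that $X^{\theta^*}$ solves the forward SDE \eqref{eq:Xoptbgen} with diffusion coefficient $\cZ^{\theta^*}_t$. Collecting terms, the finite-variation part of $V_x(t,X^{\theta^*}_t)$ equals
\[
-\cL^V_x(t,X^{\theta^*}_t)-g(t,\cZ^{\theta^*}_t)V_{xx}(t,X^{\theta^*}_t)+\tfrac12\vert\cZ^{\theta^*}_t\vert^2 V_{xxx}(t,X^{\theta^*}_t)+\alpha_{xx}(t,X^{\theta^*}_t)\cZ^{\theta^*}_t,
\]
while its martingale integrand is $\beta_t:=\alpha_x(t,X^{\theta^*}_t)+V_{xx}(t,X^{\theta^*}_t)\cZ^{\theta^*}_t$. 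The crucial step is the envelope identity: since $\upsilon(t,x)$ maximizes $G(t,x,\cdot)$ of \eqref{eq:G} in \eqref{eq:cL}, the first-order condition $G_z(t,x,\upsilon(t,x))=0$ (which is \eqref{eq:Z*gen}) holds, so that $\cL^V_x(t,x)=G_x(t,x,\upsilon(t,x))=-g(t,\upsilon)V_{xx}+\tfrac12\vert\upsilon\vert^2V_{xxx}+\alpha_{xx}\upsilon$. Evaluating at $x=X^{\theta^*}_t$ cancels the displayed drift exactly, so $R_t=V_x(t,X^{\theta^*}_t)$ is a local martingale, and a genuine martingale by the integrability and class-$D$ hypotheses inherited from Theorem \ref{Th:SBPDEss}, with $\d R_t=\beta_t\,\d W_t$.

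Next I verify that $(X^{\theta^*},\zeta,M)$ coincides with the objects of Theorem \ref{Th:FBSDE1} and solves \eqref{eq:FBSDE}. The terminal condition $V_x(T,\cdot)=U'$ yields $R_T=U'(X^{\theta^*}_T)$, hence $\zeta_T=I(R_T)-X^{\theta^*}_T=0$ and $R_t=U'(X^{\theta^*}_t+\zeta_t)$. Applying It\^o to $\zeta_t=I(R_t)-X^{\theta^*}_t$, with $I'=1/U''$ and $I''=-U^{(3)}/(U'')^3$, gives diffusion coefficient $\beta_t/U''(X^{\theta^*}_t+\zeta_t)-\cZ^{\theta^*}_t$, which is precisely the $M_t$ of the statement; this identifies $M_t=M^\zeta_t$ and recovers $\beta_t=U''(X^{\theta^*}_t+\zeta_t)(\cZ^{\theta^*}_t+M_t)$ as in \eqref{eq:beta}. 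Substituting $\beta_t^2=(U''(X^{\theta^*}_t+\zeta_t))^2\vert\cZ^{\theta^*}_t+M_t\vert^2$ collapses the drift of $\zeta$ to $-\tfrac12\frac{U^{(3)}}{U''}\vert\cZ^{\theta^*}_t+M_t\vert^2+g(t,\cZ^{\theta^*}_t)$, which matches the backward equation in \eqref{eq:FBSDE}; since the forward equation for $X^{\theta^*}$ is \eqref{eq:Xoptbgen} verbatim, the triple solves \eqref{eq:FBSDE}. The optimality conditions then come for free from the same first-order condition: rewriting \eqref{eq:Z*gen} as $\beta_t=U'(X^{\theta^*}_t+\zeta_t)g_z(t,\cZ^{\theta^*}_t)$ and equating with $\beta_t=U''(X^{\theta^*}_t+\zeta_t)(\cZ^{\theta^*}_t+M_t)$ gives \eqref{eq:opcond} after multiplication by $\partial\cZ_y(t,\theta^*_t)$; in the positively homogeneous case, substituting $\upsilon(t,x)=\wh\theta(t,x)Z_t(-S)$ and \eqref{eq:starone} yields \eqref{eq:derivatives} on $\{\theta^*_t\neq0\}$, while on $\{\theta^*_t=0\}$ (where $\cZ^{\theta^*}_t=0$ and $\beta_t=\alpha_x$) the complementarity inequality characterizing $\wh\theta=0$ becomes \eqref{eq:starstar}.

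I expect the main obstacle to be the positively homogeneous case, where $\upsilon$ and $g(t,\cdot)$ fail to be differentiable at $\theta=0$: the It\^o--Ventzel drift cancellation and the envelope step must then be argued separately on the random sets $\{\theta^*_t\neq0\}$ and $\{\theta^*_t=0\}$, using the one-sided/subgradient form of the first-order condition together with the $\tilde\Theta=\Theta^+$ restriction imposed by Condition {\bf (CH)}. A secondary technical point is upgrading $R_t=V_x(t,X^{\theta^*}_t)$ from a local to a genuine martingale and justifying the use of expectations, which is handled by the class-$D$ and square-integrability assumptions carried over from Theorem \ref{Th:SBPDEss}.
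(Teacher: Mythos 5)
Your proposal is correct and takes essentially the same route as the paper's proof: It\^o--Ventzel applied to $V_x(t,X^{\theta^*}_t)$, cancellation of the drift via the envelope identity for $\cL^V_x$ (the paper's Lemmas \ref{le:ap1} and \ref{le:ap2}, the latter carrying out exactly the case split at $\hat{\theta}=0$ that you flag as the main obstacle in the positively homogeneous case), then It\^o applied to $I(V_x(t,X_t^{\theta^*}))$ to identify $\beta_t$ and $M_t$, verify the FBSDE \eqref{eq:FBSDE}, and read off the optimality conditions from \eqref{eq:Z*gen} and \eqref{eq:starone}. A minor point in your favor: your envelope formula correctly ends in $\upsilon\,\alpha_{xx}$, whereas the statements of Lemmas \ref{le:ap1}--\ref{le:ap2} misprint this term as $\upsilon\,\alpha_x$ (the drift computation in the paper's proof of Theorem \ref{Th:6.2} uses $\alpha_{xx}$, as the cancellation requires).
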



\section{Regularity of the value function} \label{se:Regu}
In this section we provide some results on the regularity (see Definition \ref{Def:1}) of the dynamic value function assuming that $H_L=0$. These properties together with the results from the last section imply that the value function satisfies a corresponding BSPDE. We first consider the case of CARA utility functions. 
\begin{proposition} \label{propcara}
Assume $U(x)=-\frac{1}{\gamma}e^{-\gamma x}$ and that the essential supremum in \eqref{eq:Vx} is attained. Then, the value function in \eqref{eq:Vx} satisfies conditions (a)-(c) in Definition \ref{Def:1} and can be written as $V(t,x)=U(x)V_t$, where $V_t$ satisfies the BSDE.
$V_t=V_0+\int_0^t \alpha_s \d W_s-\int_0^t \cL^V_s \d s,$ 
with terminal condition $V_T=1$, where the operator $\cL_t^V$ is defined by  
$
\cL^V_t:=\essinf_{\cZ\in cl({\text{Im}_t(Z)})}\bigg(\gamma g(t,\cZ)V_t+\frac{1}{2}\gamma^2\vert \cZ\vert ^2 V_t-\gamma\cZ\alpha_t\bigg), 
$  \tn{ where $\essinf$ should be understood as the largest progressively measurable process being ``$\le$" $\d \P \times \d t$ a.s. }
Moreover, the value function satisfies the BSPDE \eqref{eq:VLV} with $\alpha (t,x)=U(x)\alpha_t $ and $\cL^V_t$ defined above.
\end{proposition}
{
\begin{remark}
If the essential supremum in \eqref{eq:Vx} is attained, the value function itself solves the BSDE above. If $\text{Im}(Z)$ is bounded, the above BSDE is solvable as well, since then the right-hand side is uniformly Lipschitz in $V$ and $\alpha$, so that existence follows from classical BSDE results.
\end{remark}}
Next let us consider the case of a quadratic $g$ and $H_{\mathbf{M}}=H_L=0$. We show that the optimal investment strategy and its FBSDE characterization can be explicitly determined in settings where the market is complete, i.e., for any $X\in \mathcal{D}_T$ there exists a strategy $\theta\in \Theta$ such that $X=-\Pi_0(-X) +\mathcal{I}_T(\theta).$ See \cite{Fukasawa2017} for sufficient conditions for market completeness. Furthermore, in this case $V$ can be shown to be a regular family of semimartingales. To this end, we suppose that the utility $U$ in addition satisfies the Inada condition $\lim_{x\to -\infty}U'(x)=\infty\quad \mbox{and}\quad \lim_{x\to +\infty} U'(x)=0.$
We assume further that the Market Maker evaluates the market risks in terms of an exponential utility certainty equivalent principle under an equivalent measure $\Q\sim\P$. More precisely, for any $X\in \cD_T$, we assume that the evaluation $\Pi$ of the market maker is given by $\Pi_t(X)=-\frac{1}{\gamma}\log\bigg[\E^\Q[e^{-\gamma X}\vert \cF_t]\bigg],$
where the constant $\gamma>0$ is the risk aversion and 
\begin{equation*}
\frac{\d\Q}{\d \P}=\exp\bigg\{-\frac{1}{2}\int_0^T \vert\eta_t \vert^2 \d t-\int_0^T \eta_t \d W_t\bigg\}:=\xi_T,
\label{eq:}
\end{equation*}
where $\eta$ is a deterministic and bounded process. 
By Girsanov's Theorem we have that $W_t^{\Q}=W_t+\int_0^t \eta_s\d s$ is a standard Brownian motion under $\Q$. Using It\^o's lemma (see e.g. \cite{Fukasawa2017}) we observe that for any $X\in \cD_T$, 	$X=\Pi_t(X)+\int_t^T g(s,Z_s(X))\d s-\int_t^T Z_s(X) \d W_s,$
where $g(t,z)=\frac{1}{2}\gamma \vert z\vert^2-\eta_tz$, see also Example \ref{example0}.2. 
Next, we show that the expected utility maximization \eqref{eq:EU.1} in this special case can be solved by a martingale approach. By our assumptions, 
 any terminal value $X_T \in \mathcal{D}_T$ can be hedged perfectly with an initial endowment $x_0=-\Pi_0(-X_T)$ which for our $g$ by Example \ref{example0}.2 in Section 3 is equivalent to $e^{\gamma x_0}= \E^\Q[e^{\gamma X_T}]=\E[e^{\gamma X_T}\xi_T].$

Therefore, the problem of utility maximization \eqref{eq:EU.1} is equivalent to the following static optimization problem 
\begin{equation*}\label{eq:EUcompl}
\max_{X}\E[U(X)], \quad \mbox{s.t.}\quad \E[e^{\gamma X}\xi_T]\le e^{\gamma x_0}.
\end{equation*}
Using that by the concavity of $U$, $\lim_{x\to+\infty}U'(x)e^{-\gamma x}=0\quad \mbox{and}\quad\lim_{x\to -\infty}U'(x)e^{-\gamma x}=+\infty,$
we can conclude that
$U'(x)e^{-\gamma x}{\gamma ^{-1}}$ is a decreasing function on $\bbr\to\bbr_+$ whose inverse we denote by $f$.  The next proposition gives an explicit solution of the optimal portfolio in case of a complete market. {The proof of the first part can be done using Lagrangian techniques while the second part can be seen using It\^o calculus. For similar results without FBSDEs, see \cite{S07,S14,anthropelos2018}.
%
\begin{proposition}
	\label{propcomplete}
Assume that for any $\lambda>0$, $\E[e^{\gamma f(\lambda \xi_T)}\xi_T]<\infty.$ The optimal terminal wealth of Problem \eqref{eq:EU.1} is then given by $X^*_T:= f(\lambda \xi_T)$,
where $\lambda$ is determined such that the budget constraint $\E[e^{\gamma X^*_T}\xi_T]=e^{\gamma x_0}$ is met. The optimal strategy can be characterized as the strategy $\theta^*$ that perfectly replicates the terminal optimal wealth $X^*_T$, i.e.,
$
X^*_t=x_0-\frac{\gamma}{2}\int_0^t \vert\cZ^{\theta^*}_s\vert^2\d s +\int_0^t \cZ^{\theta^*}_s \d W_s^\Q,\quad t\in[0,T],
$
where
\begin{equation}
\cZ^{\theta^*}_t:=\frac{1}{\gamma}\bigg(\frac{\beta^*_t}{R^*_t}+\eta_t\bigg)
\label{eq:theta*}
\end{equation}
with $\beta^*_t$ being the progressively measurable process resulting from the martingale representation
\begin{equation}
R_t^*:=\E[U'(X^*_T)\vert \cF_t]=U'(X^*_T)-\int_t^T\beta^*_s\d W_s.
\label{eq:RT*}
\end{equation}
Furthermore, define $
\zeta^*_t:=I(R_t^*)-\frac{1}{\gamma}\log\bigg(\frac{R_t^*}{\gamma\lambda \xi_t}\bigg),$
and $
M_t^*:=\frac{\beta^*_t}{U''(X_t^*+\zeta^*_t)}-\frac{1}{\gamma}\bigg(\frac{\beta^*_t}{R^*_t}+\eta_t\bigg).
$
Then the triple $(X_t^*,\zeta_t^*,M_t^*)$ solves the FBSDE \eqref{eq:FBSDE}. Finally, if additionally $U(x)=-e^{-\gamma_A x}$, for some constant $\gamma_A>0$ then
$\cZ^{\theta^*}_t=\frac{\eta_t}{\gamma+\gamma_A}.$
\end{proposition}
\begin{remark}
\tn{	Since in the above proposition $X^*_T= f(\lambda \xi_T)$ and $\eta$ in the definition of $\xi_T$ is deterministic, $(X_t^*)$ is a Markov process and $V(t,x)$ is deterministic.}
\end{remark}
For convenience, we have included the proof in the online version of the paper. In the sequel, let $\wt{U}(y)$ be the "exponential"-conjugate of $U$, which is defined by
\begin{equation}
\wt{U}(y):=\sup_{x}(U(x)-y e^{\gamma x}),\quad y>0.
\label{eq:Utilde}
\end{equation}

By the definition of $f$, we have $U'(f(y))e^{-\gamma f(y)}{\gamma ^{-1}}=y$, for $y>0$. It can be verified that $f'(y)<0$ and that $\wt{U}(y)=U(f(y))-y e^{\gamma f(y)}$ is a continuously differentiable convex function and we have the following conjugate relation
$
U(x)=\inf_{y>0}(\wt{U}(y)+y e^{\gamma x}),
$
and
$\wt{U}'(y)=-e^{\gamma f(y)}$. Let $$V(x):=\sup_{\cZ\in L^2(\d \P \times \d s)} \E\bigg[U\bigg(x+\int_0^T (\frac{1}{2}\gamma \vert \cZ_t\vert^2-\eta_t \cZ_t)  \d t -\int_0^T \cZ_t \d W_t\bigg)\bigg]$$ and the dual value function is given 
$
\wt{V}(y):=\E[\wt{U}(y\xi_T)]$ for $ y>0$.	
The following result gives {the optimal wealth explicitly through a duality approach} with the linear terms used for instance in \cite{kramkov1999asymptotic,schachermayer2001optimal} being replaced with an exponential function.
\begin{proposition}\label{Le:duality} Assume that for any $\lambda>0$, $\E[e^{\gamma f(\lambda \xi_T)}\xi_T]<\infty.$ The following statements hold:
\begin{enumerate}
	\item [(i)] The value functions $V(x)$ and $\wt{V}(y)$ are exponential-conjugate, i.e
	$$
	\wt{V}(y)=\sup_{x}(V(x)-y e^{\gamma x}),\quad {V}(x)=\inf_{y>0}(\wt{V}(y)+y e^{\gamma x}).
	$$
	They are continuously differentiable concave (resp. convex) function defined and finite valued on $\bbr$ (resp. $\bbr_+$) and satisfy
$$
\lim_{x\to \infty} V'(x)=0,\quad \lim_{x\to -\infty} V'(x)=\infty, \quad \lim_{y\to 0} \wt{V}'(y)=-\infty,\quad \lim_{y\to \infty} \wt{V}'(y)=0.
$$
\item [(ii)] An optimal investment strategy $\theta^*$ exists, is unique and the following duality holds
\begin{equation*}
\frac{1}{\gamma}U'(X_T^{\theta^*}(x))e^{-\gamma X_T^{\theta^*}(x)}=y \xi_T,\quad \mbox{or equivalently,}\quad e^{\gamma X_T^{\theta^*}(x)}=-\wt{U}'(y \xi_T),
\label{eq:duality}
\end{equation*}
where $X_T^{\theta^*}(x)=x{-}\int_0^T (\frac{1}{2}\gamma \vert Z_t^{\theta^*}\vert^2-\eta_t Z_t^{\theta^*})  \d t {+}\int_0^T Z_t^{\theta^*}  \d W_t$ and $y=\frac{1}{\gamma}V'(x)e^{-\gamma x}$.
\item [(iii)] For any $x\in\bbr$ and $ y=\frac{1}{\gamma}V'(x)e^{-\gamma x}$, the process $e^{\gamma X_t^{\theta^*}(x)}=\E^{\Q}[-\wt{U}(y\xi_T)\vert \cF_t]=\E^{\Q}[e^{\gamma X_T^{\theta^*}(x)}\vert \cF_t]$ is a $\Q$-martingale.  
\end{enumerate}
\end{proposition} 
Let
$
R_1(x):=-\frac{U''(x)}{U'(x)}$ and $R_2(x):=-\frac{U^{(3)}	(x)}{U''(x)}.$ {$R_1$ and $R_2$ are sometimes also called risk aversion and prudence respectively.} We now provide sufficient conditions on the utility function $U$ which gives the smoothness needed for the existence of a solution of the BSPDE \eqref{eq:VLV}.

\begin{proposition}\label{Th:quadg}
 Assume that $R_1(x)$ and  $R_2(x)$are bounded and bounded away from zero, $zf(z)$ is bounded, and $\E[U(f(\lambda \xi_T)]<\infty,$ and $\E[e^{\gamma f(\lambda \xi_T)}\xi_T]<\infty,$ for any $\lambda>0$. Assume further that $|F(x,y)|:=|U'(f( \lambda(x)\exp\{-\eta y-\frac{\eta^2 T}{2} \}))|$, 
	$|\frac{\partial F}{\partial y}|$, $|\frac{\partial F}{\partial x \partial y}|$ and $|\frac{\partial F}{\partial x^2 \partial y}|$
	are bounded by $K(x)\exp(a|y|^2)$ with $0<a<\frac{1}{2T}$ and $K$ a continuous function. Then there exists a modification of the value function in \eqref{eq:Vx} satisfying conditions (a)-(c) in Definition \ref{Def:1}. 
%
\end{proposition}

\section{Conclusion}\label{sec: conclusion}
We considered a continuous-time setting with permanent endogeneous price impact induced by a change in the inventory of the market maker. We showed that trading in such a setting corresponds to non-linear stochastic integrals and arises naturally as the limit of discrete time trading. In general incomplete markets with endowments of the investor and the market maker, we then characterized optimal solutions in terms of coupled FBSDEs and BSPDEs, thereby highlighting the interplay between these approaches. Finally, we gave new existence results for the arising FBSDEs and BSPDEs and furnished examples in scenarios where the driver function { is positively homogeneous}, the market is complete, and/or the utility function adopts an exponential form. Future works in this direction are to consider {questions pertaining to arbitrage, a numerical analysis,} different probabilistic settings or other decision theoretic preferences for the large investor or the market maker.

\vspace{2mm}

\bibliography{arxivresub_MOR}
\bibliographystyle{plain}

\normalsize
\begin{appendix}

\section{BMO's}\label{sec:appBMO}
Consider the set of progressively measurable processes $\phi$ satisfying $\E[\int_0^T\vert \phi_t\vert^2 \d t]<\infty$. The natural extension of the space of bounded processes is the space of BMO processes defined by
$$
BMO(\P)=\bigg\{\phi\in  \,\vert\,\exists\, C\quad\mbox{s.t.}\quad \forall t,\, \E\bigg[\int_t^T\vert \phi_s\vert^2 \d s\bigg|\cF_t\bigg]\le C\bigg\}.
$$ 
The BMO norm of a process $\phi\in BMO(\P)$ is defined as the smallest constant $C$ in the above definition. The stochastic integral process $\int_0^\cdot \phi_s\d W_s $ is called a $BMO($\P$)$ martingale if $\phi\in BMO(\P)$. We remark that for $\phi\in BMO(\P)$, the corresponding Dol\'eans-Dade exponential is a Radon-Nikodym derivative giving rise to a measure, say $\mathbf{Q}$, for which the Girsanov transformation is well-defined and for which the $BMO$ norms are equivalent to the ones under $\mathbf{P}$; see \cite{carmona2008,kazamaki2006continuous}.

\section{Proofs of Section \ref{sec:mod}}\label{se:proofpropmarkov}
{To ease notation, we will sometimes interpret derivatives like $g_z$ as vectors.}\\
\textbf{Proof of Proposition \ref{propmarkov}. }
{Under \textbf{(B2)} (positive homogeneity) note that for $y\geq0$,
	\begin{align*}
		y\Pi_t(-S) = -yS-\int_t^T g(s,yZ_s(-S)) \d s + \int_t^T yZ_s(-S) \d W_s .
	\end{align*}
	Hence, $(\Pi_t(-yS),Z_t(-yS))=(y\Pi_t(-S),yZ_t(-S))$. On the other hand for $y<0$,
	\begin{align*}
		& -y\Pi_t(S) = -yS-\int_t^T g(s,-yZ_s(S)) \d s + \int_t^T -yZ_s(S) \d W_s  
	\end{align*}
	so that also for $y<0$, $(\Pi_t(-yS),Z_t(-yS))=(-y\Pi_t(S),-yZ_t(S))$.}
	{
	This shows the proposition with $Z(t,\omega,y):=|y|Z_t(-\sign(y)S)$.}
	
{
	 Now assume that \textbf{(B1)} holds. By \cite{stroock1982lectures,pardoux1992backward} the random field $\mathcal{R}_u^{t,r}$ has a version which is a.s. jointly continuous in $(t,u,r)$ together with its partial derivatives with respect to $r$ of order one and two. From the definition of $\mathcal{R}$ it follows by taking derivatives with respect to $r$ in (\ref{SDE}) that $\nabla_r\cR^{t,r}, \nabla_{rr}\cR^{t,r}$ and $(\nabla_r\cR^{t,r})^{-1}$ are bounded 
	and bounded away from zero 
	By Theorem 3.A.2 in \cite{L15}, $|Z|$ is bounded by $C:=||h_r-ys_r||_\infty ||\nabla_r R_T||_\infty$ and therefore indeed grows only linearly in $y$.}
	
	 {It remains to show that there exists a continuous version of $Z^y$ under any of the assumptions \textbf{(B1)}i)-iv). Now this is immediate under \textbf{(B1)}iv), and follows under \textbf{(B1)}ii) from Lemma 5.5 in \cite{BK13} (using a logarithm transformation). For the other cases,
	 } define a new smooth driver function coinciding with $g$ on $[0,T]\times[-C,C]^{\textcolor{blue}{d}}$ { and being zero on $[0,T]\times (\bbr^d\setminus[-(C+1),C+1]^d)$}. Since the $Z$-part of the (old) BSDE is bounded by $C$, the solution for the (old) BSDE is also a solution for the (new) BSDE with the {new driver. For the sake of simplicity, denote the new driver function again by $g$.}
	 
	Now it is well known that $Z^{t,r}$ can be characterized by taking formally the derivative (of the modified drivers $g$) with respect to $r$ in
{ the BSDE
	\begin{align*} \Pi_{\tilde{s}}^{t,r}\bigg(H_\mathrm{M}- yS\bigg)&=\big(h^{\mathrm{M}}(\cR_T^{t,r})-ys
		(\cR_T^{t,r})\big)\notag \\
		&\hspace{0,4cm}-\int_{\tilde{s}}^Tg(u,Z_u^{t,r}(H_\mathrm{M}-y S))\,\d u +\int_{\tilde{s}}^T Z_u^{t,r}(H_\mathrm{M}-y S)\,\d W_u,\quad {\tilde{s}}\in [t,T].
	\end{align*}}

	In particular, \tn{by Corollary 2.11 in \cite{pardoux1992backward}} (which we can apply since the first three derivatives of $g$ are bounded), \tn{the gradient is the unique solution of the BSDE}
		
	\begin{align*}
	   \nabla \Pi_{\tilde{s}}^{t,r,y}&=(h^{\mathrm{M}}_r(\cR_T^{t,r})-ys_r(\cR_T^{t,r}) )\nabla_r
	   \cR_T^{t,r}\nonumber \\
		&\hspace{0,4cm}-\int_{\tilde{s}}^T g_z(u,Z_u^{t,r}(H_\mathrm{M}-y S))\nabla Z^{t,r,y}_u\,\d u+\int_{\tilde{s}}^T
		\nabla Z^{t,r,y}_u\,\d W_u. \end{align*}	
		
		By \tn{Lemma 2.5} in \cite{pardoux1992backward} we have 
	\begin{equation}
	~\label{Zdiff}
	Z^{t,r}_u(h^{\mathrm{M}}(\mathcal{R}_T^{t,r})-ys(\mathcal{R}_T^{t,r}))={\nabla \Pi^{t,r,y}_u(\nabla_r\cR_{u}^{t,r})^{-1}\sigma},
	\end{equation}				
see also \cite{delbaen2011backward}. 
	Next, denote by $(F^{t,r,y},V^{t,r,y})$ the solution to the BSDE
{
	\begin{align}\label{diff}
		F^{t,r,y}_{\tilde{s}}& = (h^{\mathrm{M}}_r(\cR_T^{t,r})-ys_r(\cR_T^{t,r}) )\nabla_r
		\cR_T^{t,r}\nonumber \\
		&\hspace{0,4cm}-\int_{\tilde{s}}^T g_z(u,-F_u^{t,r,y}(\nabla_r\cR_u^{t,r})^{-1}\sigma)V^{t,r,y}_u\,\d u+\int_{\tilde{s}}^T
		V^{t,r,y}_u\,\d W_u. \end{align}}
{
	Note that the index $y$ in $F$ and $V$ simply refers to the dependence of these processes in $y$ through the terminal condition. 
{Now if \textbf{(B1)}iii) holds, $(F,V)=(\nabla \Pi, \nabla Z)$ are bounded and we may therefore assume without loss of generality that the driver of \eqref{diff} is Lipschitz in $(F,V)$.}

{
Now, the existence of a version of $F$ continuous in $y$ follows by Proposition 2.4 in \cite{elkaroui97} (the proof relying on Kolmogorov's criteria) and our assumptions which imply that the first two derivatives of $h_\mathrm{M}$ and $s$ are uniformly bounded.}
Finally, if we assume \textbf{(B1)}i), then it may be seen that in case that $d>1$, the BSDEs are dimensionwise separable, so that we may assume without loss of generality that $d=1$. We remark further that $(F^{t,r,y},V^{t,r,y})$ is again the solution of a BSDE with driver function satisfying the quadratic growth conditions (H1), (H2) and (H3) in \cite{kobylanski2000}. This can be seen as follows: The mapping $f_0(u,\omega,F,v):=v g_z(u,F(\nabla \mathcal{R}_u^{t,r})^{-1}(\omega)\sigma)$ satisfies for all $(F,v)$ a.s. $|f_0|\leq K_1(1+|v|)$, a.s. $$\bigg |\frac{\partial f_0}{\partial v}\bigg |=|g_z(u,F(\nabla \mathcal{R}_u^{t,r})^{-1}\sigma)| \leq K_2 \text{ a.s. and}$$ 
	$$\bigg |\frac{\partial f_0}{\partial F}\bigg |=|v||(\nabla \mathcal{R}_T^{t,r})^{-1}\sigma g_{zz}(u,F(\nabla \mathcal{R}_u^{t,r})^{-1}\sigma) |\leq K_3|v|\leq l_\epsilon+\epsilon|v|^2 \text{ a.s.}$$ for any $\epsilon>0$, and $l_\epsilon$ appropriately chosen, with $K_1, K_2, K_3>0$. Hence, the assumptions (H1)-(H3) in \cite{kobylanski2000} hold. Thus, by Theorem 2.3 and Theorem 2.6 in \cite{kobylanski2000} a unique solution to \eqref{diff} exists which is bounded by $M:=||h^{\mathrm{M}}_r-ys_r||_\infty ||\nabla_r
	\cR_T^{t,r}||_\infty$, and therefore also satisfies a uniform growth condition in $y$. By uniqueness $(F,V)=(\nabla \Pi,\nabla Z)$. 
	
\disc{ To show now continuity we will use results on stochastic flow properties of solutions of Markovian quadratic BSDEs. While $\nabla_r \mathcal{R}$ may not be Markov, $( \mathcal{R}_u^{t,r},(\nabla \mathcal{R}_u^{t,r})^{-1})$ is again Markov, and as we will see immediately determines $F$ in \eqref{diff} through a function, say $w$. To derive this rigorously, we will in the sequel with a slight abuse of notation write , $F^{t,r,\tilde{r},y}$ and $(\nabla \mathcal{R}^{t,r,\tilde{r}})^{-1}$ to condition on $( \mathcal{R}_t^{t,r},(\nabla \mathcal{R}_t^{t,r})^{-1})=(r,\tilde{r})$.  Note that since
	$$d (\nabla \mathcal{R}_u^{t,r,\tilde{r}})^{-1}=-(\nabla \mathcal{R}_u^{t,r,\tilde{r}})^{-1}\mu_R(u,\mathcal{R}^{t,r}_u)\d u, $$
	clearly, $(\nabla \mathcal{R}_u^{t,r,\tilde{r}})^{-1}$ is bounded, and therefore $( \mathcal{R}_u^{t,r},(\nabla \mathcal{R}_u^{t,r})^{-1})$ is a Markov-process satisfying a multi-dimensional SDE with Lipschitz-continuous coefficients (as $\mu_R$ and $\mu_{RR}$ are both assumed to be bounded). Furthermore, the terminal condition of \eqref{diff} is a continuous bounded function of $( \mathcal{R}_T^{t,r},(\nabla \mathcal{R}_T^{t,r})^{-1})$.
	Let $\rho:\mathbb{R}\to [0,1]$ be a smooth truncation function with bounded derivative such that $\rho(F)=1$ if $|F|\leq M$, and $\rho(F)=0$ if $|F|>M+1$. Set $$\tilde{f}_0(u,F,\tilde{r},v):=v g_z(u,F \tilde{r}\sigma)\rho(F),$$ 
	and note that since we have already established that $(F_t)$ is bounded by $M$, $\tilde{f}_0(u,F,(\nabla \mathcal{R}_T^{t,r})^{-1},v)$ can be seen as the driver of the BSDE \eqref{diff}.
	Clearly $$|\tilde{f}_0|\leq K_1(1+|v|),\quad a.s.,$$ $$\Big |\frac{\partial \tilde{f}_0}{\partial v}\Big |=|g_z(u,F \tilde{r}\sigma)\rho(F)| \leq K_2 \text{ a.s.,}$$ 
	$$\bigg |\frac{\partial \tilde{f}_0}{\partial F}\bigg |=\bigg|v\tilde{r} \sigma g_{zz}(u,F \tilde{r}\sigma)\rho(F) +v g_z(u,F \tilde{r}\sigma)\rho'(F)\bigg |\leq K_4|v|\leq \tilde{l}_\epsilon+\epsilon|v|^2 \text{ a.s.}$$ 
	$$\bigg |\frac{\partial \tilde{f}_0}{\partial \tilde{r}}\bigg |=\bigg|v F  \sigma g_{zz}(u,F \tilde{r}\sigma)\rho(F) \bigg |\leq K_5|v|\leq \tilde{l}_\epsilon+\epsilon|v|^2 \text{ a.s.}$$ for any $\epsilon>0$, and $\tilde{l}_\epsilon$ appropriately chosen, with $ K_4, K_5>0$. Hence, the assumptions (H4)-(H5) in \cite{kobylanski2000} hold, and by Theorem 3.8 in \cite{kobylanski2000}, we have that $F^{t,r,\tilde{r},y}_u=w^y(t,\mathcal{R}_u^{t,r},(\nabla \mathcal{R}_u^{t,r,\tilde{r}})^{-1})$ with $w^y:[0,T]\times \mathbb{R}\times \mathbb{R}\to \mathbb{R}$ being the continuous viscosity solution of a suitable parabolic PDE (for each fixed $y$). Finally, since for $(t',r',\tilde{r}',y')\to (t,r,\tilde{r},y)$ the corresponding terminal conditions converge in $L^\infty,$ and $\tilde{f}_0(u,F,(\nabla \mathcal{R}_u^{t',r',\tilde{r}'})^{-1},v)$ converges locally uniformly to $\tilde{f}_0(u,F,(\nabla \mathcal{R}_u^{t,r,\tilde{r}})^{-1},v)$, by Theorem 2.8 in \cite{kobylanski2000}, the flow $(t,r,\tilde{r},y)\to (F^{t,r,\tilde{r},y}_s)_{s\geq t}$ is a.s. continuous uniformly in $s$. In particular,  $(t,r,\tilde{r},y)\to F^{t,r,\tilde{r},y}_t=w^y(t,r,\tilde{r})$ is a jointly continuous function, and hence also $y\to F_t^{0,r_0,y}=w^y(t,\mathcal{R}^{0,r}_t,(\nabla \mathcal{R}^{0,r_0,\tilde{r}_0}_t)^{-1})$ is continuous (except possibly on the zero set where $\mathcal{R}^{0,r}_t,(\nabla \mathcal{R}^{0,r_0,\tilde{r}_0})^{-1})$ are not defined).}

	{To conclude we can in each case \textbf{(B1)}i) and \textbf{(B1)}iii)} $\omega$-wise set $$Z(t,y):=-F_t^{0,r_0,y}(\nabla_r
	\cR_t^{0,r_0})^{-1}\sigma=Z^{0,r_0}_t(h^{\mathrm{M}}(\mathcal{R}_T^{0,r_0})-ys(\mathcal{R}_T^{0,r_0}))=Z^{0,r_0}_t\bigg(H_\mathrm{M}-\sum_{i=1}^n y^i S^i\bigg).$$  This finishes the proof of Proposition \ref{propmarkov}. 
	The last two equalities in both equations hold $L^2(\d\P\times \d t)$ a.s. {It also follows directly from the remarks before about $F$, that $Z(t,y)$ is continuous in $y$, and that a linear growth condition of $Z$ in $y$ holds.} 
	 \endproof

\noindent \textbf{Proof of Proposition \ref{boundedHM}.}
{Under \textbf{(B2)}, Remark \ref{HM} yields that $Z(H_\mathrm{M})=0$.} That $Z(H_\mathrm{M})$ is bounded {also under \textbf{(B1)}} follows with an analogous argument as made in the proof of Proposition \ref{propmarkov}.
\endproof
\section{Proofs of Section \ref{sec:FBSDE}}
{
\noindent \textbf{Proof of Proposition \ref{propnotes}.}
 \begin{enumerate}
 	\item[(a1)-(a4)] Clear, using the definition of $D{g}'_A$, and in (a2) that $g$ is convex and finite.
 	\item[(b)] The first part follows from $Dg'_{A_2}(t,z,h)\leq Dg'_{A_1}(t,z,h)$ for $z\in A_1$, if $A_1\subseteq A_2$. For the second part, note that by Theorem 23.2 in \cite{rockafellar1970convex}, $\nabla_c g(t,z)=\{\zeta|\zeta h\leq Dg'(t,z,h)\text{ for all } h\in\R^d\}$. Since $Dg(t,z,h)\leq D{g}'_A(t,z,h)$ for all $z\in A$, we get that $\nabla_c g(t,z) \subset \nabla {g}_A(t,z)$.
 	\item[(c)] Clear.
 	\item[(d)] If ${g}_A$ is differentiable, then it follows that $z\in \text{int}(\text{dom}({g}_A))$. Hence, $D{g}'_A(t,z,h)$ for all $h$ coincides with the usual directional derivative for convex functions (with $\lim$ instead of $\limsup$), and $D{g}'_A(t,z,h)=Dg'(t,z,h)=g_z(t,z)h$ for all $h\in \R^d$, see \cite{rockafellar1970convex}. Thus, $g_z(t,z)$ is the only element of the subgradient $\nabla g_A$. 
 	The last part follows by Theorem 23.2 in \cite{rockafellar1970convex}.
\item[(e)] 
Clear.
 	\item[(f)] It is $D(z\beta+\lambda {g}_A)'(t,z,v)=v\beta+\lambda D{g}'_A(t,z,v)$. Therefore, $\zeta\in \nabla(z\beta+\lambda {g}_A(t,z))$ is equivalent to $\zeta v\leq \beta v + \lambda D{g}_A(t,z,v)$ for all $v$. This corresponds to $\frac{\zeta -\beta}{\lambda}v\leq D{g}'_A(t,z,v)$, which is equivalent to $\frac{\zeta -\beta}{\lambda}\in \nabla {g}_A(t,z)$. Hence, $\zeta\in \beta +\lambda \nabla {g}_A(t,z)$. The second part follows similarly.
 	\item[(g)] If for an $\epsilon$-environment around $z$, say $B_\epsilon(z)$, we have $\tn{B_\epsilon (z)\cap A_1=B_\epsilon (z)\cap A_2}$, the directional derivatives $Dg'_{A_1}(t,z,h)$ and $Dg'_{A_2}(t,z,h)$, by definition, agree for all $h$. 
	Thus, $\nabla g_{A_1}(t,z)=\nabla g_{A_2}(t,z)$.\endproof
 \end{enumerate}}

\tn{In the proof of the next result, we extend the results in \cite{horst2014,santacroce2014,santacroce2023} to our setting with market price impact.}

\noindent\textbf{Proof of Theorem \ref{Th:FBSDE1new}.}
\normalfont 
Clearly,
\begin{equation}\label{eq:proof}
\E[U(X_T^{\theta^*}+H_L)]=\sup_{\theta \in \Theta}\E[U(X_T^{\theta}+H_L)]=\sup_{\mathcal{Z}\in {\text{Im}(Z)}}\E[U(X_T^{\mathcal{Z}}+H_L)]=\E[U(X_T^{\mathcal{Z}^{\theta^*}}+H_L)],
\end{equation}
where with a slight abuse of notation we write $$X_T^{\mathcal{Z}} := x_0-\int_0^T g(s,\mathcal{Z}_s)\d s + \int_0^T\mathcal{Z}_s\d W_s +H_\mathrm{M}-\Pi_0(H_\mathrm{M}).$$ 
Denote $\mathcal{Z}^*:=\mathcal{Z}^{\theta^*}\in {\text{Im}(Z)}$ by assumption 
 attaining the supremum in (\ref{eq:proof}). 
Define $R_t:=\E[U'(X_T^{\cZ^{*}}+H_L)\vert \cF_t]$ and $\zeta_t:=I(R_t)-X^{\cZ^{*}}_t$ with $I=(U')^{-1}$. Then $\zeta$ is progressively-measurable and $R$ is a martingale. By the (local) martingale representation theorem there exists a locally square integrable progressively-measurable process $\beta$ taking values in $\mathbb{R}^{1\times d}$ such that
\begin{equation}
\label{beta}
R_t=U'(X_T^{\cZ^{*}}+H_L)-\int_t^T \beta_s \d W_s. 
\end{equation}
Hence, $\d R_t=\beta_t \d W_t$ with terminal condition $R_T=U'(X_T^{\cZ^{*}}+H_L)$. By the assumption in the theorem and Doob's Maximal inequality $\sup\limits_s |R_s|$ is in $L^{1+\tilde{\epsilon}}$, and by the Burkholder-Davis Gundy (BDG) inequality $\E[(\int_0^T|\beta_s^2|ds)^{\frac{1+\tilde{\epsilon}}{2}}]<\infty$. Note that $I(R_t)=X^{\cZ^{*}}_t+\zeta_t$ and $\zeta_T=H_L$. Applying It\^{o}'s formula to $I(R_t)$ we have the following backward representation
\begin{align*}
X^{\cZ^{*}}_t+\zeta_t&=X^{\cZ^{*}}_T+H_L-\int_t^T  \frac{\d R_s}{U''(X^{\cZ^{*}}_s+\zeta_s)}+\frac{1}{2}\int_t^T\frac{U^{(3)}}{(U'')^3}(X^{\cZ^{*}}_s+\zeta_s) \d \langle R,R\rangle_s\\
&=X^{\cZ^{*}}_T+H_L-\int_t^T \frac{\beta_s\d W_s}{U''(X^{\cZ^{*}}_s+\zeta_s)}+\frac{1}{2}\int_t^T\frac{U^{(3)}}{(U'')^3}(X^{\cZ^{*}}_s+\zeta_s) \vert\beta_s\vert^2 \d s,
\end{align*}
where $U''$ and $U^{(3)}$ are the second and the third order derivatives of $U$, respectively. Hence, $\zeta_t$ is a solution of the following BSDE
\begin{align*}
\zeta_t=H_L&-\int_t^T \bigg(\frac{\beta_s}{U''(X^{\cZ^{*}}_s+\zeta_s)}-(\cZ^{*}_s-Z_s(H^M))\bigg)\d W_s\notag\\
&+\frac{1}{2}\int_t^T\bigg(\vert\beta_s\vert^2\frac{U^{(3)}}{(U'')^3}(X^{\cZ^{*}}_s+\zeta_s)  -2g(s,\cZ^{*}_s)+2g(s,Z_s(H^M))\bigg)\d s.
\label{eq:zeta1}
\end{align*}
By construction the marginal utility process $U'(X^{\cZ^{*}}_t+\zeta_t)=R_t$ is a martingale and by the definition of $M_t^\zeta$ above  
\begin{equation}
\beta_t=U''(X^{\cZ^{*}}_t+\zeta_t)(\cZ^{*}_t+M^{\zeta}_t-Z_t(H_\mathrm{M})).
\label{eq:beta}
\end{equation}
Plugging the last equation into the dynamics of $\zeta$ we get
\begin{align}
\zeta_t&=H_L-\int_t^T M^\zeta_s\d W_s+\frac{1}{2}\int_t^T\bigg(\vert\cZ^{*}_s-Z_s(H_\mathrm{M})+M^\zeta_s\vert^2\frac{U^{(3)}}{U''}(X^{\cZ^{*}}_s+\zeta_s)\notag\\
&\hspace{0,4cm}-2(g(t,\cZ^{*}_t)-g(t,Z_t(H_\mathrm{M})))\bigg)\d s.
\label{eq:zeta2}
\end{align}
{
Let $A=\text{Im}_t(Z)$. Fix a bounded $1\times d$-dimensional progressively-measurable process $\tilde{h}$ with $Dg'_A(t,\cZ^*_t,\tilde{h}_t)<\infty$ and $|\tilde{h}_t|\leq\min (\frac{\tilde{K}}{|\mathcal{Z}^*_t|},\tilde{K})$ for a $\tilde{K}>0$. }
 {Then $Dg'_A(t,\cZ^*_t,\tilde{h}_t)<\infty$ implies by Proposition \ref{propnotes}(a2) and a measurable selection theorem (see \cite{aumann1967measurable}) that there exits a progressively-measurable $\delta_{t}(\omega)>0$ such that $\cZ^*_t+\tilde{\varepsilon}\tilde{h}_t\in A=\text{Im}_t(Z)$ for all $\tilde{\varepsilon}\in [0,\delta_{t}(\omega)]$. Without loss of generality assume that $\delta_t<1$, and scale $\tilde{h}$ by setting $h_t:=\delta_t\tilde{h}_t$ so that $\cZ^\varepsilon:=\cZ^*+\varepsilon h \in \text{Im}(Z)$ for all $\varepsilon\in[0,1]$.}
 Define then \begin{align}
 X_T^{\mathcal{Z}^{\varepsilon}} :&= x_0-\int_0^T g(s,\mathcal{Z}_s^{*}+ \varepsilon h_s)\d s + \int_0^T(\mathcal{Z}_s^{*}+ \varepsilon h_s)\d W_s\nonumber +H_\mathrm{M}-\Pi_0(H_\mathrm{M}).
 \end{align}
 Set $\phi(\varepsilon):=U(X_T^{\mathcal{Z}^{\varepsilon}}+H_L)$. Since {$\epsilon\mapsto X_T^{\cZ^\epsilon}$} is a.s. concave in {$\epsilon\in[0,1]$} (see also \eqref{uniqness2} below), and $U$ is increasing and concave, $\frac{\phi(\varepsilon)-\phi(0)}{\varepsilon}$ is decreasing in $\varepsilon$.
This implies that for $0<\varepsilon<1$ the function $\Phi(\varepsilon)$ defined by
$$
\Phi(\varepsilon):=\frac{\phi(\varepsilon)-\phi(0)}{\varepsilon}=\frac{U(X_T^{\cZ^{\varepsilon}}+H_L)-U(X_T^{\mathcal{Z}^*}+H_L)}{\varepsilon},
$$
 is decreasing in $\varepsilon$ and 
$
\lim_{\varepsilon\downarrow 0}\Phi(\varepsilon)=U'(X_T^{\mathcal{Z}^*}+H_L)\wh{\chi}^{\theta^*}_T(h)\quad \text{ a.s.},
$
where $\wh{\chi}^{\theta^*}_s(h):=-\int_0^s {Dg'(t,\cZ^{*}_t,h_t)} \d t+\int_0^s h_t \d W_t.$ {Let us show that all moments of $\wh{\chi}^{\theta^*}_T(h)$ exist.} {In case of \textbf{(B1)}, $Dg'(t,\cZ^{*}_t,h_t)=g_z(t,\cZ^{*}_t,h_t)h_t$. 
In case of \textbf{(B2)} by Lipschitz continuity of $g$ with Lipschitz constant, say $K>0$: $|Dg'(t,\cZ^*_t,h_t)|\leq K|h_t|$.
Since $h_t$ by construction is bounded by $\min (\frac{\tilde{K}}{|\mathcal{Z}^*_t|},\tilde{K})$, by {\bf (Hg)} or {\bf (HL)}
$Dg'(t,\cZ^{*}_t,h_t)$ in each case is uniformly bounded.}
Thus, by the BDG inequality for semimartingles indeed all moments of $\wh{\chi}^{\theta^*}_T(h)$ exist. Since by assumption $\E[\vert U'(X_T^{\theta^*} +H_L)\vert^{1+\tilde{\epsilon}}]<\infty$, we obtain 
$ \E[\vert U'(X_T^{\mathcal{Z}^*}+H_L)\wh{\chi}^{\theta^*}_T(h)\vert]<\infty.$
Therefore, $\E[\Phi(\varepsilon)]$ is well-defined as $\E[\Phi(\varepsilon)^+]\le \E[( U'(X_T^{\mathcal{Z}^*}+H_L)\wh{\chi}^{\theta^*}_T(h))^+]<\infty$ for all $1>\varepsilon >0$. Hence, by the monotone convergence theorem we conclude that
{
\begin{equation} \label{eq:mono}
0\geq\lim_{\varepsilon\downarrow 0}\Phi(\varepsilon)=\lim_{\varepsilon\downarrow 0}\E\bigg[\frac{U(X_T^{\mathcal{Z}^{\varepsilon}}+H_L)-U(X_T^{\mathcal{Z}^*}+H_L)}{\varepsilon}\bigg]= \E[U'(X_T^{\mathcal{Z}^*}+H_L)\wh{\chi}^{\theta^*}_T(h)],
\end{equation}
where the first inequality hold as $\cZ^\epsilon\in\text{Im}(Z)$ is admissible\footnote{Since $\cZ^\epsilon_t\in\text{Im}(Z)$ by the measurable selection theorem in \cite{aumann1967measurable}, there exists a $\mathcal{P}$-measurable mapping $\theta^\epsilon$ such that $\cZ_t^\epsilon(\omega)=\cZ(t,\theta^\epsilon_t(\omega))$.} and $\cZ^*$ is optimal.}

Next, define an increasing sequence of stopping times $\tau_k:=\inf\limits_{s\geq 0}\{\int_0^s \vert\beta_t \wh{\chi}_t^{\theta^*}(h)+ U'(X_t^{\cZ^{*}}+\zeta_t) h_t\vert^2 \d t\geq k\}\wedge T $ with $\P [\tau_k=T]\to 1$ as $k\to\infty$.
Applying It\^{o}'s lemma to the product $U'(X_t^{\cZ^{*}}+\zeta_t)\wh{\chi}_t^{\theta^*}(h)=R_t\wh{\chi}_t^{\theta^*}(h)$ we have 
\begin{align}
U'(X_{\tau_k}^{\cZ^{*}}+\zeta_{\tau_k})\wh{\chi}^{\theta^*}_{\tau_k}(h)
=&\int_0^{\tau_k}\bigg(h_t\beta_t-U'(X_t^{\cZ^{*}}+\zeta_t){Dg'(t,\cZ^{*}_t,h_t)}\bigg)\d t\notag\\
&{}+ \int_0^{\tau_k} \bigg(\beta_t \wh{\chi}_t^{\theta^*}(h)+ U'(X_t^{\cZ^{*}}+\zeta_t) h_t \bigg) \d W_t.
\label{eq:last}
\end{align}

Taking expectations and the limit on both sides and noting that $\zeta_T=H_L$ leads to
\begin{align}
&\lim\limits_{k\to\infty}\E\bigg[\int_0^{\tau_k} \bigg(h_t\beta_t-U'(X_t^{\cZ^{*}}+\zeta_t){Dg'(t,\cZ^{*}_t,h_t)}\bigg)\d t\bigg] \nonumber\\
&=\lim\limits_{k\to\infty} \E\bigg[ U'(X_{\tau_k}^{\cZ^{*}}+\zeta_{\tau_k})\wh{\chi}^{\theta^*}_{\tau_k}(h)\bigg]
=\E\bigg[ U'(X_T^{\cZ^{*}}+H_L)\wh{\chi}_T^{\theta^*}(h)\bigg] \le 0 ,
\label{eq:equation1}
\end{align}
{by \eqref{eq:mono}}. 
In the last equality we have used uniform integrability, since $\sup_s |R_s|\in L^{1+\tilde{\epsilon}}$ and 
the BDG inequality for semimartingales (see for instance Theorem 2, Chapter V in \cite{protter2005stochastic}), {implies that} all moments of $\sup_t |\wh{\chi}_t(h)|$ exist.
Now let $B_t:=h_t\beta_t-U'(X_t^{\cZ^*}+\zeta_t){Dg'(t,\cZ^{*}_t,h_t)}$
 and {choose $h_t^{new}=h_t{\bf 1 }_{B_t>0}$. Note that by Proposition \ref{propnotes}(a3) $\big(h_t^{new}\beta_t-U'(X_t^{\cZ^*}+\zeta_t)Dg'(t,\cZ^{*}_t,h_t^{new})\big) I_{[0,\tau_k]}=\big(h_t\beta_t-U'(X_t^{\cZ^*}+\zeta_t)Dg'(t,\cZ^{*}_t,h_t)\big) {\bf 1 }_{B_t>0}I_{[0,\tau_k]}$ is increasing in $k$ so that by the monotone convergence theorem, the limit on the outer left-hand side in \eqref{eq:equation1} (with $h$ replaced by $h^{new}$) can be taken inside. Hence, arguing by contradiction, from \eqref{eq:equation1} we get that $\{(t,\omega): B_t(\omega)>0\}$ must be a zero set, and therefore $B_t\le 0$, $\d \mathbf{P}\times \d t$ almost everywhere. }
{
By the positive homogeneity of the directional derivative (see Proposition \ref{propnotes}(a4)), and the fact that by Proposition \eqref{propnotes}(a2), $Dg'_A(t,z,h)=Dg'(t,z,h)$ if $Dg'_A(t,z,h)<\infty$, it follows that $h_t\beta_t-U'(X_t^{\cZ^*}+\zeta_t)Dg'_A(t,\cZ^{*}_t,h_t)\leq 0$ $\d \mathbf{P}\times \d t$ almost everywhere for all $h$ (and not only for $h$ which were scaled)\footnote{{For $h$ with $Dg'_A(t,\cZ^*_t,h)=\infty$, this inequality is clear.}}. By Proposition \ref{propnotes}(e)-(f), this entails that $0\in \big(\beta_t-U'(X_t^{\cZ^*}+\zeta_t)\nabla g_A(t,\cZ^{*}_t)\big)$. Using \eqref{eq:beta} leads to the desired conclusion. \endproof} 

\noindent\textbf{Proof of Theorem \ref{Th:FB}.}
 $(\zeta,M^\zeta)$ is a solution of the BSDE \eqref{eq:zeta2} with driver 
$$M\to\vert\cZ^{\theta^*}_s-Z_s(H_\mathrm{M})+M\vert^2\frac{U^{(3)}}{U''}(X^{\theta^*}_s+\zeta_s)-2(g(s,\cZ^{\theta^*}_s)-g(s,Z_s(H_\mathrm{M}))).$$
{By Theorem \ref{Th:FBSDE1new},  
	\begin{equation}\label{Ztheta2}
				\cZ_t^{\theta^*} \in \cH_{\tn{A}}(t,X_t^{\theta^*},\zeta_t,M_t^\zeta). 
	\end{equation} holds.} 
The proof is completed by plugging this identity back into the BSDE and into the dynamics of $X_t^{\theta^*}$. \qed \\
\noindent\textbf{Proof of Proposition \ref{h2}}\\
{
	If the market is complete, $\text{Im}(Z)=\bbr^d$, and therefore $\text{Im}(Z)$ is convex. On the other hand, if \textbf{(H0)(iii)} by Proposition \ref{multidimintervals} below,  $\text{Im}(Z)$ is convex as well. In particular, in each of these cases  $cl(\text{Im}(Z))$ is convex which is therefore the only case we need to consider.\\ }
	
	{
		Fix $(X,\zeta,M) \in \bbr \times \bbr \times \bbr^d$. For $h\in \bbr^d$ let \begin{equation}\label{eq: F}
			F(h):=-U'(X+\zeta){g}_A(t,h) +U''(X+\zeta)\big(\frac{1}{2}|h|^2{-Z_t(H_\mathrm{M})h}+Mh\big).
		\end{equation} {Note that for $h\notin A=cl(\text{Im}(Z))$ we have $F(h)=\infty$ (by the definition of ${g}_A$). Clearly, $F$ is strictly concave since $A$ is convex. Note further that $F$ is lower-semicontinuous (as $A$ is closed), and coercive on $\bbr^d$ (i.e., $F(h)\to -\infty$ if $|h|\to\infty$). In particular, $F$ has a unique maximum on its domain, $A=cl(\text{Im}(Z))$, say $h^*$. Since this \tn{maximum}, $h^*$, is unique, it is the only solution to
		\begin{equation}\label{nablaF}
			0\in \nabla_c F(h)=\nabla F(h)=-U'(X+\zeta)\nabla {g}_A(t,h) +U''(X+\zeta)\big(h{-Z_t(H_\mathrm{M})}+M\big),
		\end{equation}}
		where we used Proposition \ref{propnotes}(e)-(f). Hence, \eqref{eq:bijective} has exactly one solution. 
		\endproof\\}
		

Next let us show a technical lemma for the proof of Theorem \ref{Th:inverse}. 
{As in the proof of Proposition \ref{h2} above we may assume  under \textbf{(H0)} that $cl(\text{Im}(Z))$ is convex. In particular, ${g}_A(t,z)$ is convex with $A=cl(\text{Im}_t(Z))$, and therefore by Proposition \ref{propnotes}(d) $\nabla_c g_A$ (the usual subgradient of a convex function) and $\nabla g_A$ agree. Furthermore, by Proposition \ref{h2}, $\cH$ is single-valued, i.e., a proper function.}
\begin{lemma}\label{lineargrove}
	Suppose that \textbf{(H0)} holds and $\psi_1=U'/U''$ is bounded. Then the function $\cH_{{A}}(t,X,\zeta,M)$ defined in \eqref{eq:bijective} {with $A=cl(\text{Im}_t(Z))$} grows at most linearly in $M$, i.e., there exists $C>0$ such that	$\vert \cH_{{A}}(t,X,\zeta,M)\vert \leq C (1+ \vert M \vert).$
\end{lemma}
{
	\proof
	Let us denote $\cH:=\cH_A$. Note that $Z(H_\mathrm{M})=\cZ^0\in \text{Im}(Z)$. Next we get from \eqref{eq:bijective} that
	\begin{equation}\label{H}
	\cH = -M+Z(H_\mathrm{M})+ \psi_1 Y 
	\end{equation}
	for a $Y=(Y^1,\ldots,Y^d)\in\nabla {g}_A(t,\cH)=\nabla_c {g}_A(t,\cH)$. Choose $\bar{y}_t=(\bar{y}_t^1,\ldots, \bar{y}_t^d)\in \nabla_c {g}_A(t,Z(H_\mathrm{M}))$. Note that  $\bar{y}_t$  is uniformly bounded (because of Assumption \textbf{(Hg)} or \textbf{(HL)} and Proposition \ref{boundedHM}).  Using that $\psi_1\le 0$ and that by convexity $(\nabla_c {g}_A(t,\cH) - \nabla_c {g}_A(t,Z(H_\mathrm{M})))(\cH-Z(H_\mathrm{M}))^\intercal\geq 0$, multiplying both sides in \eqref{H} by $\cH$ yields that
	\begin{align*}
		\vert\cH\vert^2 &= - \cH (M-Z(H_\mathrm{M}))^\intercal  + \psi_1 \cH Y^\intercal   \\
		&= - \cH (M-Z(H_\mathrm{M}))^\intercal+\psi_1 Z(H_\mathrm{M})Y^\intercal  + \psi_1 (\cH-Z(H_\mathrm{M})) (Y-\bar{y})^\intercal + \psi_1 (\cH-Z(H_\mathrm{M})) \bar{y}^\intercal   \\
		& \leq \frac{\vert \cH \vert ^2}{2a^2} + \frac{a^2\vert M-Z(H_\mathrm{M})\vert^2}{2}+\psi_1 Z(H_\mathrm{M})Y^\intercal  + \psi_1 (\cH-Z(H_\mathrm{M})) \bar{y}^\intercal \\
		& \leq \frac{\vert \cH \vert^2}{2a^2} + a^2\big(\vert M \vert^2 +\vert Z(H_\mathrm{M})\vert^2\big) + \frac{\|\psi_1\|_\infty a^2}{2}\vert Z(H_\mathrm{M})\vert^2 +\|\psi_1\|_\infty\frac{|Y|^2}{2a^2} + \|\psi_1\|_\infty \vert\vert \bar{y} \vert\vert_\infty \bigg (\vert \cH \vert+|Z(H_\mathrm{M})|\bigg )\\
		&\leq \frac{\vert \cH \vert^2}{2a^2} +  a^2\big(\vert M \vert^2 +\vert Z(H_\mathrm{M})\vert^2\big) + \frac{\|\psi_1\|_\infty a^2}{2}\vert Z(H_\mathrm{M})\vert^2  \\
		&\hspace{0.5cm}+\|\psi_1\|_\infty\frac{K^2(1+|\cH|)^2}{2a^2}+\frac{\|\psi_1\|^2_\infty \vert |\bar{y} \vert|^2_\infty}{2}+\frac{1}{2}|\cH|^2 + \|\psi_1\|_\infty \vert |\bar{y} \vert|_\infty |Z(H_\mathrm{M})|,
	\end{align*}
	with $a>0$, where we have used Assumption \textbf{(Hg)} or \textbf{(HL)} in the last inequality and that $c|\cH|\leq \frac{c^2}{2}+\frac{1}{2}|\cH|^2$.} Since by Proposition \ref{boundedHM}, $Z(H_\mathrm{M})$ is bounded, we can choose a fixed and sufficiently large constant $a$ \tn{such that} the lemma follows.
\endproof

\noindent\textbf{Proof of Theorem \ref{Th:inverse}.}
{Let us denote $\cH:=\cH_A$ with $A=cl(\text{Im}_t(Z))$. As in the proof of Proposition \ref{h2} we may assume that {$cl(\text{Im}(Z))$} is convex. In particular, ${g}_A$ is convex.} 
 Since $\cH$ is single-valued and by Lemma \ref{lineargrove}, grows at most linearly in $M$, by assumption $\E [\int_0^T  \vert\cH(t,X_t,\zeta_t,M_t)\vert^2 \d t]<\infty$. Therefore, $\mathcal{Z}^*=\cH(t,X_t,\zeta_t,M_t) \text{ takes values in }{cl(\text{Im}(Z))}$ and is square-integrable.  
 For a $d$-dimensional progressively measurable process $Y$ with $Y_t \in\nabla {g}_A(t,\cH(t,X_t,\zeta_t,M_t))$ 
it holds that 
{
\begin{align}
\d U'(X_t+\zeta_t) =U''(X_t+\zeta_t) (\mathcal{H}(t,X,\zeta_t,M_t)+M_t-Z_t(H_\mathrm{M})) \d W_t=:U'(X_t+\zeta_t)Y_t \d W_t,
\label{eq:Ude}
\end{align}}
where we used \eqref{eq:FBSDE} in the first, and \eqref{eq:bijective} in the second equation (where $\mathcal{H}$ by Proposition \ref{h2} is single-valued, i.e., a proper function {such that ``$\in$''  in \eqref{eq:FBSDE} becomes ``$=$''}). Hence, $U'(X_t+\zeta_t)$ is always a local martingale.
In case of assumption (a) in the theorem, $U'(X_t+\zeta_t)$ by assumption is actually  a positive martingale. That the same holds in case of (b) is seen as follows: By Proposition \ref{boundedHM}, $Z(H_\mathrm{M})$ is bounded. Thus, the couple  $(\zeta, M)$ may be viewed as a solution of a quadratic FBSDE, and by Proposition 2.1 in \cite{kobylanski2000} { (or by classical results on Lipschitz-continuous BSDEs)}, $\zeta$ is bounded.  Hence, by Proposition 3.13 in Barrieu and El Karoui \cite{carmona2008},  
$M$ is a $BMO(\P)$ (see the Appendix for the definition of BMOs). Lemma \ref{lineargrove} entails then that $\cZ^*$ is $BMO(\P)$, and therefore since by assumption in case (b) $U''/U'=1/\psi_1$ is bounded, by \eqref{eq:bijective} $Y$ is $BMO(\P)$. It follows by Kazamaki's criterion then that $U'(X_t+\zeta_t)$ is a positive martingale.

This discussion entails that
$U'(X_t+\zeta_t)/\E [U'(X_T+\zeta_T)]=\mathcal{E}\bigg(\int_0^t {Y_s} \d W_s\bigg),$
with $\mathcal{E}$ corresponding to the Dol\'eans-Dade exponential.
Furthermore,  by Girsanov's theorem $\d W^{\Q_0}_t=\d W_t-{Y_t} \d t$ is a standard Brownian motion under the measure $\Q_0$ defined by 
$
\frac{\d \Q_0}{\d \P}:=\frac{U'(X_T+\zeta_T)}{\E [U'(X_T+\zeta_T)]}.
$ 
\newline

 Let $\wh{X}$ be an arbitrary admissible portfolio corresponding to an ``attainable'' $\wh{Z}$ which satisfies $\E [\int_0^T  \vert\wh{Z}_t\vert^2 \d t]<\infty$ and \text{ takes values in }{$A=cl(\text{Im}(Z))$}. Due to the concavity of $U$, we obtain
 $$
 U(\wh{X}_T+H_L)-U(X_T+H_L)\le U'(X_T+H_L) (\wh{X}_T-X_T).
 $$ 

Note that $g$ and ${g}_A$ coincide on  {$cl(\text{Im}(Z))$}. We observe that
\begin{align*}
\wh{X}_T-X_T&=\int_0^T (\wh{Z}_t-\cH(t,X_t,\zeta_t,M_t)) \d W_t-\int_0^T ({g}_{{A}}(t,\wh{Z}_t)-{g}_{{A}}(t,\cH(t,X_t,\zeta_t,M_t)))\d t\notag \\
&=\int_0^T (\wh{Z}_t-\cH(t,X_t,\zeta_t,M_t))\d W_t^{\Q_0}
\notag \\
& \hspace{0.2cm} -\int_0^T   \bigg({g}_{{A}}(t,\wh{Z}_t)-{g}_{{A}}(t,\cH(t,X_t,\zeta_t,M_t))-(\wh{Z}_t-\cH(t,X_t,\zeta_t,M_t))
Y_t\bigg)\d t.
\end{align*}
Since $Y\in\nabla{g}_{{A}}(t, \cH(t,X_t,\zeta_t,M_t))=\nabla_c{g}_{{A}}(t, \cH(t,X_t,\zeta_t,M_t))$ the last integral is always non-negative due to the convexity of ${g}_{{A}}$. 
 Therefore,
\begin{align}\label{underH1}
\E[U(\wh{X}_T+H_L)-U(X_T+H_L)]&\le \E[ U'(X_T+H_L) (\wh{X}_T-X_T)] \notag \\
&=\E [U'(X_T+H_L)]\E^{\Q_0}[\wh{X}_T-X_T]\\
&\le\E [U'(X_T+H_L)]\E^{\Q_0}\bigg[\int_0^T (\wh{Z}_t-\cH(t,X_t,\zeta_t,M_t))\d W^{\Q_0}_t\bigg], \notag 
\end{align}
\tn{where the first inequality is strict if $\wh{X}_T\neq X_T$ while the second inequality is strict if $\wh{Z}_t\neq \cZ^*_t$ and $g$ is strictly convex, giving the uniqueness in this case.}
By the BDG inequality and the Cauchy-Schwarz inequality using that $U'(X_T+H_L)$ is square-integrable (\tn{see also \cite{horst2014}}), the last expectation is zero, showing that $X$ is indeed optimal. The uniqueness \tn{in the case where $g$ is Lipschitz} is given by Proposition \ref{uniqueness} below. {The last part follows from Proposition \ref{propnotes}(g).}

{
	\begin{proposition}\label{uniqueness}
		Under the conditions of Theorem \ref{Th:inverse}, Problem \eqref{eq: sternstern} has \tn{at most one solution such that $X_T^{{\cZ}}\in \cD_T$. In particular, if $g$ is Lipschitz, the optimal solution of \eqref{eq: sternstern}  is unique. }
	\end{proposition}
	\proof 
	Let us prove Proposition \ref{uniqueness} by contradiction. Assume there exists $\cZ_1,\cZ_2$ both attaining the maximum in \eqref{eq: sternstern} with $\cZ_1\neq \cZ_2$. Let us first show that then necessarily $X_T^{{\cZ}_1}\neq X_T^{{\cZ}_2}$. Assume by contradiction that $X_T^{{\cZ}_1}= X_T^{{\cZ}_2}$. Then by the uniqueness and continuity of solutions of BSDE, see for the quadratic case Corollary 6 in \cite{briand2008quadratic}, we have $X_t^{\cZ_1}=X_t^{\cZ_2}$ for all $t$ a.s.. 
	Thus, $\cZ_t^1=\frac{\d \langle X^{\cZ_1}, W \rangle_t}{\d t}=\frac{\d \langle X^{\cZ_2}, W \rangle_t}{\d t}=\cZ_t^2 \  \d \mathbf{P}\times \d t$ a.s. which is a contradiction.
	Therefore, indeed $X_T^{{\cZ}_1}\neq X_T^{{\cZ}_2}$.}\\
	
	{
	Now, define $\tilde{\cZ}:=\lambda\cZ_1+(1-\lambda)\cZ_2\in cl(\text{Im}(Z))$ as $cl(\text{Im}(Z))$ is assumed to be convex. Using the convexity of $g$ \begin{align}\label{uniqness2}
		X_T^{\tilde{\cZ}}&=-\int_0^T g(s,\lambda \cZ^1_s+(1-\lambda)\cZ^2_s)\d s +\int_0^T  (\lambda \cZ^1_s+(1-\lambda)\cZ^2_s) \d W_s+\Pi_T(H_\mathrm{M})-\Pi_0(H_\mathrm{M})
		\nonumber \\
		&\geq-\int_0^T \lambda g(s,\cZ^1_s)\d s +\int_0^T \lambda \cZ^1_s \d W_s+\lambda(\Pi_T(H_\mathrm{M})-\Pi_0(H_\mathrm{M}))
		\nonumber\\
		&\hspace{0.5cm}-\int_0^T (1-\lambda)g(s, \cZ^2_s)\d s +\int_0^T (1-\lambda) \cZ^2_s \d W_s+(1-\lambda)(\Pi_T(H_\mathrm{M})-\Pi_0(H_\mathrm{M}))
		\nonumber\\
		&=\lambda\bigg(-\int_0^T g(s,\cZ^1_s)\d s +\int_0^T  \cZ^1_s \d W_s+\Pi_T(H_\mathrm{M})-\Pi_0(H_\mathrm{M})\bigg)
		\nonumber\\
		&\hspace{0.5cm}+(1-\lambda)\bigg(-\int_0^T g(s, \cZ^2_s)\d s +\int_0^T  \cZ^2_s \d W_s+\Pi_T(H_\mathrm{M})-\Pi_0(H_\mathrm{M})\bigg) = \lambda X_T^{\cZ^1}+(1-\lambda)X_T^{\cZ^2}.
\end{align}}
{Hence, 
	\begin{align*}
		\mathbb{E}[U(X_T^{\tilde{\cZ}})]\geq \mathbb{E}[U(\lambda X^{\cZ_1}_T+(1-\lambda)X_T^{\cZ_2})]>\lambda\mathbb{E}[U(X_T^{\cZ_1})]+(1-\lambda)\mathbb{E}[U(X_T^{\cZ_2})]=\max_\cZ \mathbb{E}[U(X_T^\cZ)]
	\end{align*}
	which is again a contradiction. 
	\endproof
}
{\noindent\textbf{Proof of Corollary \ref{coro37}.}
Suppose that $(X^1,\zeta^1,M^1)$ and $(X^2,\zeta^2,M^2)$ are two solutions to the FBSDE \eqref{eq:FBSDE}. By the uniqueness of $\cZ^*$ shown in Proposition \ref{uniqueness}, we have $\cH_A(t,X_t^1,\zeta_t^1,M_t^1)=\mathcal{Z}_t^*=\cH_A(t,X_t^2,\zeta_t^2,M_t^2)$. By the first equation of \eqref{eq:FBSDE}, the two corresponding forward processes are \tn{then} the same, i.e., $X^1=X^2$. Consequently, by \eqref{eq:prudence}
$$
\zeta_t^1=(U')^{-1}(\mathbb{E}[U'(X_T^1+H_L)\mid \mathcal{F}_t])-X_t^1=(U')^{-1}(\mathbb{E}[U'(X_T^2+H_L)\mid \mathcal{F}_t])-X_t^2=\zeta_t^2.
$$
In particular, $M_t^1=\d \langle\zeta^1,W\rangle_t / \d t=\langle \zeta^2, W\rangle_t / \d t= M_t^2, \ \d \mathbf{P}\times \d t$ a.s., which concludes the proof.\endproof}
 
 \noindent\textbf{Proof of Proposition \ref{optimalL}.}
 It is straightforward to see that the triple $(X_t^*=x_0, \zeta_t^*=0,M_t^*=0)$ solves the FBSDE system \eqref{eq:FBSDE}, where $\cH\equiv Z(H_{\mathrm{M}})=\cZ^0$ and satisfies \eqref{eq:bijective}. Clearly, the resulting constant process $U'(x_0)$ is a bounded strictly positive martingale and $\cZ^{\theta^*}=\cZ^0$ is square-integrable. Consequently, Theorem \ref{Th:inverse} applies without needing the assumption that $\psi_1$ is bounded.\endproof
  
\noindent \textbf{Proof of Proposition \ref{multidimintervals}.}
 {
 	Since by Proposition \ref{propmarkov} the mapping $y\mapsto Z(t,\omega,y)$ is continuous in $y$, the image in $\bbr$ must be an interval.}
\endproof \\
\noindent\textbf{Proof of Proposition \ref{Le:Condition3}}\\
{Set $\cH=\cH_A$ with $A=cl(\text{Im}_t(Z))$. Since $g_z(t,\cdot)$ is decreasing and $U''(X+\zeta)<0$, the continuous function 
$\hat{U}_{t,X,\zeta,M}$
is bijective on the interval $[a,b]=cl(\text{Im}(Z))$ so that $ a_t\vee(\hat{U})^{-1}_{t,X,\zeta,M}(0)\wedge b_t\in  cl(\text{Im}_t(Z)) ,$
by Proposition \ref{h2} is well defined being the unique solution of \begin{equation}
	0 \in-U'(X+\zeta) \nabla {g}_A(t,\cH(t,X,\zeta,M)) +U''(X+\zeta)\big(\cH(t,X,\zeta,M){-Z_t(H_\mathrm{M})}+M)\big).
	\label{eq:H}
\end{equation}  
Finally, by Theorem \ref{Th:inverse} and the bijectivity of $\hat{U}$, if $\cZ^*$ is the optimal solution, and we actually must have $\cZ_t^*=\cH(t,X_t,\zeta_t,M_t)= a_t\vee(\hat{U})^{-1}_{t,X,\zeta,M}(0)\wedge b_t$. 
\endproof }\\

\noindent\textbf{Proof of Proposition \ref{Le:Condition3next}.}
{It can be seen by simple algorithmic manipulations, that on $\theta^*\neq 0$ we have 
\begin{align}\label{eq:3.20}
	Z_t(-\mathrm{sgn}(\theta^*)S)U''(X+\zeta)(|\theta^*| Z_t(-\mathrm{sgn}(\theta^*)S)+M)-U'(X+\zeta) g(t,Z_t(-\mathrm{sgn}(\theta^*)S))  =0,
\end{align}
for each ($\omega,t,X,\zeta,M)$ if and only if $\theta^*$ satisfies (\ref{lhsp}) or (\ref{rhsp}).} 

{
Next let us show that an optimal $\theta^*$ solves \eqref{eq:3.20}. 
By Remark \ref{HM}, $Z(H_\mathrm{M})=0$. Furthermore, by Proposition \ref{propmarkov},} { $\cZ^{\theta^*}=|\theta^*|Z(-\sign(\theta^*)S)$ and hence $A=\text{Im}_t(Z)=\{\theta Z_t(-S)|\theta \geq 0\} \cup \{-\theta Z_t(S)|\theta<0\}$. For $\xi \in \nabla g(t,|\theta| Z_t(-\sign(\theta)S))$ on $\theta>0$, it holds, by the definition of a subgradient, that  
\begin{align*}
	\pm \xi Z_t(-S) &\leq D{g}'_A(t, \theta Z_t(-S),\pm Z(-S))\\
	&= Dg'(t,\theta Z_t(-S),\pm Z_t(-S))\\
	&=\lim_{\lambda\downarrow 0}\frac{1}{\lambda}\bigg(g(t,\theta Z_t(-S)\pm\lambda Z_t(-S))-g(t,\theta Z_t(-S))\bigg)\\
	&=\lim_{\lambda\downarrow 0}\frac{1}{\lambda}\bigg((\theta\pm\lambda)g(t,Z_t(-S))-\theta g(t,Z_t(-S))\bigg)=\pm g(t,Z_t(-S)),
\end{align*} for $\theta>0$ by positive homogeneity of $g$. Multiplying this inequality by $-1$ it follows that $\xi Z_t(-S)=g(t,Z_t(-S))$. Similarly on $\theta<0$, \begin{align*}
\xi Z_t(S)&\leq\lim_{\lambda\downarrow 0}\frac{1}{\lambda}g(t,-\theta Z_t(S)\pm \lambda Z_t(S))-g(t,-\theta Z_t(S))\\
&=\lim_{\lambda\downarrow 0}\frac{1}{\lambda}\bigg\{(-\theta \pm \lambda)g(t, Z_t(S))+\theta g(t,Z_t(S))\bigg\}=\pm g(t,Z_t(S)).
\end{align*} Multiplying both sides by $-1$ we again obtain $\xi Z_t(S)=g(t,Z_t(S))$ on $\theta<0$. In particular, $\xi Z_t(-\sign(\theta)S)=g(Z_t(-\sign(\theta)S))$. Hence, multiplying \eqref{eq:rem} with $Z_t(-\sign(\theta^*)S)$ implies that on $\theta^*_t \neq 0$ we have}
{
\begin{equation} \label{eq:derivativesnew}
	-U'(X^{\theta^*}_t+\zeta_t) g(t,Z_t(-\mathrm{sgn}(\theta_t^*)S)) + Z_t(-\mathrm{sgn}(\theta_t^*)S) U''(X^{\theta^*}_t+\zeta_t) (\cZ_t^{\theta^*} + M^\zeta_t)^\intercal =0.
\end{equation} 
We can use ``$=$'' instead of ``$\in$'' in \eqref{eq:rem}, because, by the argument above, the right-hand side of \eqref{eq:rem} only consists of one point on $\theta^*\neq 0$. Similarly, on $\theta^*_t = 0$ by the argument above the following inequalities hold:
\begin{eqnarray} 
	& U''(X^{\theta^*}_t+\zeta_t) M^\zeta_t Z^\intercal_t(-S) \leq U'(X^{\theta^*}_t+\zeta_t)g(t,Z_t(-S))\label{eq:starstarnew} \\
	& U''(X^{\theta^*}_t+\zeta_t) M^\zeta_t Z^\intercal_t(S) \leq U'(X^{\theta^*}_t+\zeta_t)g(t,Z_t(S))\label{eq:starstar1new}.
\end{eqnarray}
\eqref{eq:3.20} can now be seen from \eqref{eq:derivativesnew}. Furthermore, if the left-hand side of \eqref{lhsp} is negative and at the same time the left-hand side of \eqref{rhsp} is positive, we must have $\theta^*=0$ and  \eqref{eq:starstarnew}-\eqref{eq:starstar1new} holds. }
\endproof

\noindent\textbf{Proof of Theorem \ref{th:generalcase}.} 
{Note that we may assume by \textbf{(H0)} that $cl(\text{Im}(Z))$ is convex, see also the proof of Proposition \ref{h2}. Furthermore, by Proposition \ref{h2}, $\cH$ is a function and the stochastic inclusion FBSDE \eqref{eq:FBSDE} therefore becomes a proper FBSDE.} Consider the problem
\begin{equation}\label{eq:appendC}
\underset{\cZ \: \mbox{\small  takes values in } {cl(\text{Im}(Z)}),\: \cZ\in L^2(d\mathbf{P}\times \d t)}{\sup}\E\bigg[U(X_T^\cZ+H_L)\bigg].
\end{equation}
For the rest of the proof we will call the $\cZ$'s over which the supremum in (\ref{eq:appendC}) is taken admissible.
{Replacing in the proof of Theorem \ref{Th:FB} $\text{Im}_t(Z)$ with $cl(\text{Im}_t(Z))$, it can be seen that} to show Theorem \ref{th:generalcase} it is actually sufficient to prove that the supremum in (\ref{eq:appendC}) is attained for some $\cZ^*$ and that
$\E[|U(X^{\cZ^*}_T+H_L)|]<\infty   $ and $\E[|U'(X^{\cZ^*}_T+H_L)|^{1+\tilde{\epsilon}}]<\infty   $.
To prove that this holds in case of (i), we first show that the space over which the supremum in \eqref{eq:appendC} is taken is weakly compact, and, in addition to being concave, the function to be optimized is upper semi-continuous. \newline
\textbf{Proof of (i).}\newline
First of all, note that $\E[X_T^\mathcal{Z}]$ is finite for any admissible $\cZ$ by \textbf{(Hg)} or \textbf{(HL)}. In particular $\E[ U(X^\cZ_T+H_L)]<\infty$.  Furthermore, since the mapping $\mathcal{Z}\mapsto X_T^{\mathcal{Z}}$ is concave in $\mathcal{Z}$ {and since by \textbf{(H0)}, we may assume that ${cl(\text{Im}(Z))}$ is convex as well}, we have as in \eqref{uniqness2}
\begin{align*}
U\bigg(X_T^{\lambda \mathcal{Z}_1+(1-\lambda)\mathcal{Z}_2}+H_L\bigg)&\geq U\bigg(\lambda X_T^{\mathcal{Z}_1}+H_L+(1-\lambda)X_T^{\mathcal{Z}_2}+H_L\bigg)\nonumber\\ 
&\geq\lambda U(X_T^{\mathcal{Z}_1}+H_L)+(1-\lambda)U(X_T^{\mathcal{Z}_2}+H_L).
\end{align*}
It follows that also the mapping $\mathcal{Z}\mapsto\E[U(X_T^{\mathcal{Z}}+H_L)]$ from $L^2(\d \textbf{P}\times\d s)$ to $\R\cup\{-\infty\}$ is concave. 
Now, to show that a concave functional attains its maximum, it is sufficient to show that we may restrict the maximization problem to a weakly compact set, and that the functional is (strongly) upper semi-continuous.

{\it Weak Compactness}: 
Note that we may restrict ourselves to $\cZ$ such that
$
\E\bigg[U(x_0+H_L)\bigg]\leq\E\bigg[U(X_T^{\cZ}+H_L)\bigg]
\leq a+\tilde{K}\E\bigg[X_T^{\cZ}+H_L\bigg],
$ for a certain $a$ and $\tilde{K}>0$, where the second inequality holds as $U$ grows at most linearly by concavity. 
Hence, we may restrict the optimization problem to wealth processes satisfying
\begin{equation}\label{eq:wealth}
\E[X_T^\cZ+H_L]\geq\frac{\E\big[U(x_0+H_L)\big]}{\tilde{K}}-\frac{a}{\tilde{K}}.
\end{equation}
For a suitable $\hat{K}>0$, the corresponding $\mathcal{Z}\in L^2(\d\P\times\d s)$ of such a wealth process satisfies
\begin{align*}
\E\bigg[\int_0^T\mid \mathcal{Z}_s|^2\d s\bigg]&\leq \hat{K}\bigg(1+\E\bigg[\int_0^T g(s,\mathcal{Z}_s)\d s\bigg]\bigg)\\
&=\hat{K}\bigg(1+x_0+\E[H_\mathrm{M}]-\Pi_0(H_\mathrm{M}) \\
&\hspace{0.5 cm}+\E\bigg[-x_0+\int_0^T g(s,\mathcal{Z}_s)\d s-\int_0^T\mathcal{Z}_s\d W_s-H_\mathrm{M}+\Pi_0(H_\mathrm{M})\bigg]\bigg)\\
&= \hat{K}\bigg(1+x_0+\E[H_\mathrm{M}]-\Pi_0(H_\mathrm{M})+\E[-X^\cZ_T]\bigg)\\
&\leq \hat{K}\bigg(1+x_0+\E[H_\mathrm{M}]-\Pi_0(H_\mathrm{M})+\frac{a}{\tilde{K}}+\E[H_L]-\frac{\E\big[U(x_0+H_L)\big]}{\tilde{K}}\bigg).
\end{align*}
The first inequality holds since $g$ is assumed to grow at least quadratically. The last inequality holds by (\ref{eq:wealth}). Hence, there exists $K'>0$ with $\E\big[\int_0^T|\cZ_s|^2\d s\big]\leq K'$ and we can set $\E\big[U\big(X_T^\cZ+H_L\big)\big]=-\infty$ for any other $\cZ$ without affecting the maximization problem.\par
\noindent {\it Upper semi-continuity}: If $\mathcal{Z}^m\rightarrow\mathcal{Z}$ in $L^2(\d \textbf{P}\times\d s)$ as $m$ tends to infinity, then there exists a subsequence converging a.s., which guarantees that $\mathcal{Z}\text{ takes values in } {cl(\text{Im}(Z)})$ (pointwise closure). By (\textbf{Hg}) or (\textbf{HL}), $|g(s,\mathcal{Z}_s^m)|\leq K(1+|\mathcal{Z}_s^m|^2),$ which is uniformly integrable in $L^1(\d \textbf{P}\times\d s)$. Therefore,
\begin{align*}
-K&\leq\int_0^Tg(s,\mathcal{Z}_s^m)\d s\overset{L^1}{\rightarrow}\int_0^Tg(s,\mathcal{Z}_s)\d s\hspace{0.5cm}\text{ and }\hspace{0.5cm}\int_0^T\mathcal{Z}_s^m\d W_s\overset{L^2}{\rightarrow}\int_0^T\mathcal{Z}_s\d W_s, \quad \mbox{ as }m\to\infty.
\end{align*}
Thus, $X^{\mathcal{Z}^m}_T\rightarrow X^{\mathcal{Z}}_T$ in $L^1$ as $m\to\infty$. Hence, denoting by $(\cdot)^+$ the positive part of a number (and the negative part similarly) we have $\big(U\big(X^{\mathcal{Z}^m}_T+H_L \big)\big)^+\leq\big(a+K(X^{\mathcal{Z}^m}_T+H_L)\big)^+$
is uniformly integrable. Denoting by $m_j$ a subsequence which attains the $\lim\sup$ and by $m_{j_l}$ a subsequence of $m_j$ along which $X^{\mathcal{Z}^{m_{j_l}}}_T$ converges to $X_T$ a.s., we obtain by Fatou's generalized Lemma
\begin{align*}
\underset{m}{\lim\sup}\hspace{0.1cm}\E\big[U\big(X^{\mathcal{Z}^m}_T+H_L\big)\big]=\underset{m_{j_l}}{\lim\sup}\hspace{0.1cm}\E\big[U\big(X_T^{\mathcal{Z}^{m_{j_l}}}+H_L\big)\big]\leq\E\big[U\big(X_T^{\mathcal{Z}}+H_L\big)\big].
\end{align*}
This shows upper semi-continuity.
\vspace{2mm}\newline
Upper semi-continuity  together with the weak compactness of the optimization set guarantees for a concave functional the existence of an optimal $\cZ^*$.
Next note that by {\bf (Hg)} or {\bf (HL)}, $X^{\mathcal{Z}^*}_T \in L^1$ which entails that for the positive utility part we have $\E[U^+(X^{\mathcal{Z}^*}_T+H_L)]<\infty.$ 
Since $\theta^*=0$ is an admissible strategy with corresponding wealth process $x_0$, we obtain $-\infty<U(x_0)\leq \E[U(X^{\mathcal{Z}^*}_T+H_L)]$ so that we can not have that  $\E[U^-(X^{\mathcal{Z}^*}_T+H_L)]=-\infty$. Hence, overall 
$\E[|U(X^{\mathcal{Z}^*}_T+H_L)|]<\infty.$ From this and from the fact that the first moment of the optimal wealth process exists, we obtain from our growth conditions on $U'$ that also $\E[|U'(X^{\mathcal{Z}^*}_T+H_L)|^{1+\tilde{\epsilon}}]<\infty.$ By the remark in the beginning of the proof, (i) follows.\\

\noindent\textbf{Proof of (ii).} Since by assumption $U(x)=U(w_0)$ for all $x\geq w_0$ we may assume that for the optimal 
${\cZ}^*$ we have that $X_T^{\cZ^*}\leq w_0$. This can be seen as follows: {Recall that by Remark \ref{HM}, $\cZ^0=Z(H_M)$ corresponds to a $\cZ$ induced by investing nothing, which by \eqref{eq:stern} entails that the corresponding wealth process $X$ is constant.} Denote $\tau=\inf\{t>0\mid X_t\geq w_0\}$ and set \begin{equation}
\label{Ztau}
\cZ^\tau_t={\cZ}^*_t\mathbf{1}_{t\leq\tau}+Z_t(H_{\mathrm{M}})\mathbf{1}_{t>\tau}.\end{equation} Then $\cZ^\tau$ 
is admissible (i.e., square integrable and taking values in $ {cl(\text{Im}(Z))}$) and $X_T^{\cZ^\tau}=X^{\cZ^*}_{T\wedge\tau}\leq w_0$. Furthermore, clearly 
\begin{align*}
\E\big[U(X_T^{\cZ^\tau})\big]&=\E\big[U(X_T^{{\cZ}^*})\mathbf{1}_{T<\tau}\big]+\E\big[U(w_0)\mathbf{1}_{T\geq\tau}\big]\\
&\geq \E\big[U(X_T^{{\cZ}^*})\mathbf{1}_{T<\tau}\big]+\E\big[U(X_T^{{\cZ}^*})\mathbf{1}_{T\geq\tau}\big]= \E\big[U(X_T^{Z^*})\big],
\end{align*}
so that $\cZ^\tau$ is optimal as well. Therefore, it is indeed sufficient to consider $\cZ$ with $X_T^{\cZ}\leq w_0$. 

Next, assume first that \textbf{(HL)} holds and therefore $g$ is Lipschitz-continuous. 
Let $X_T^{\cZ^m}$ be a sequence with   $X_T^{\cZ^m}\leq w_0$ whose expected utility converges as an increasing sequence to the supremum in \eqref{eq:appendC} as $m$ tends to infinity, i.e., $\E\big[U(X_T^{\cZ^1})\big]\leq \E\big[U(X_T^{\cZ^2})\big]\leq \ldots$ and
\begin{equation}
\label{vstar} -\infty < v^*:=\lim_m \E\big[U(X_T^{\cZ^m})\big]= \underset{\cZ \: \text{\small taking values in } {cl(\text{Im}(Z))}}{\sup}\E\big[U(X_T^\cZ)\big]  .
\end{equation}
{Assume by contradiction that $||X_T^{\cZ^m}||_2 \to \infty$ as $m\to\infty$.}
As $X_T^{\cZ^m}$ is bounded from above, we have $||\big(X_T^{\cZ^m}\big)^{-}||_2\rightarrow\infty$ and consequently also $||\big(X_T^{\cZ^m}\big)^-||_2\rightarrow\infty$ as $m\to\infty$. Therefore, as $U^-$ decreases faster than quadratically,
\begin{align*}
\E\big[U(X_T^{\cZ^{m}})\big]&\leq\E\big[-\big(U(X_T^{\cZ^m})\big)^-\big]+U(w_0)\\
&\leq -K'\big|\big|(X_T^{\cZ^m})^-\big|\big|_2^2+K'+U(w_0)\underset{m\rightarrow\infty}{\rightarrow}-\infty,
\end{align*}
for some $K'>0$, which {is a contradiction to \eqref{vstar}}. The sequence $X_T^{\cZ^m}$ must therefore be bounded in $L^2$ entailing that there exists a subsequence (for the sake of simplicity again denoted by $m$) which converges weakly to an $X^*\in L^2.$ 
 
\noindent{\it
\textbf{Claim A:} $X^*$ is admissible (i.e., $X^*=X^{\cZ^*}_T$ for an admissible $\cZ^*$) and attains the supremum in (\ref{eq:appendC}).}

Let us show that Claim A holds. By Mazur's lemma there exists a function ${\displaystyle N:\mathbb {N} \to \mathbb {N} }$ and a sequence of  non-negative numbers 
${\displaystyle \left\{\lambda (m)_{k}:k=m,\dots ,N(m)\right\}}$
such that
${ \sum _{k=m}^{N(m)}\lambda (m)_{k}=1}$ and the sequence
defined by the convex combination
$ \tilde{X}^m_T:=\sum _{k=m}^{N(m)}\lambda (m)_{k} X_T^{\cZ^k}$ has the property that 
$ \tilde{X}^m_T$ converges in $L^2$ to $X^*$ as $m$ tends to infinity. By switching to a subsequence if necessary we may assume that additionally convergence holds a.s. 
Finally, as $X_T^{\cZ^k}\leq w_0$ we have that
\begin{equation}
\label{kleineru0}
\tilde{X}^m_T\leq w_0, \quad\mbox{ and therefore }\quad  X^*\leq w_0,
\end{equation}  
as well.
Denote $\hat{\cZ}^m:=\sum _{k=m}^{N(m)}\lambda (m)_{k}\cZ^k$ which is in $L^2(\d \P\times \d t)$ since the $\cZ^k$ themselves are by construction admissible and hence in $L^2(\d \P\times \d t)$. Furthermore, by assumption, $ {cl(\text{Im}(Z))}$ is convex. Therefore, $\hat{\cZ}^m \: \text{takes values in } {cl(\text{Im}(Z))}$ as a convex combination of elements of $ {cl(\text{Im}(Z))}.$  
Hence, under our assumptions  $\hat{\cZ}^m$ is admissible.

Next, note that by concavity of $-g$ we have that

\begin{align}\label{nstern}
X^{\hat{\cZ}^m}_T&{=-\int_0^T g(s,\sum _{k=m}^{N(m)}\lambda (m)_{k}\cZ^k_s)\d s +\int_0^T \sum _{k=m}^{N(m)}\lambda (m)_{k}\cZ^k_s \d W_s+\Pi_T(H_\mathrm{M})-\Pi_0(H_\mathrm{M})
}\nonumber \\&
{\geq-\sum _{k=m}^{N(m)}\lambda (m)_{k}\int_0^T g(s,\cZ^k_s)\d s +\sum _{k=m}^{N(m)}\int_0^T \lambda (m)_{k}\cZ^k_s \d W_s+\Pi_T(H_\mathrm{M})-\Pi_0(H_\mathrm{M})
}\nonumber\\
&{=\sum _{k=m}^{N(m)}\lambda (m)_{k}\bigg(-\int_0^T g(s,\cZ^k_s)\d s +\int_0^T \cZ^k_s \d W_s+\Pi_T(H_\mathrm{M})-\Pi_0(H_\mathrm{M})\bigg)
}\nonumber\\
&= \sum _{k=m}^{N(m)}\lambda (m)_{k} X_T^{\cZ^k}=\tilde{X}^m_T.
\end{align}
Thus, 
\begin{equation}
\label{starstar}
\liminf_m \E\big[U(X_T^{\hat{\mathcal{Z}}^m})\big]
\geq \liminf_m \E\big[U(\tilde{X}^m_T)\big]\geq \liminf_m \E\big[U(X_T^{\cZ^m})\big]= v^*,
\end{equation}
where the last inequality follows as $U$ is concave, $\tilde{X}^m$ is a convex combination of $X_T^{\cZ^m},X_T^{\cZ^{m+1}}\ldots,$ and by construction $\E\big[U(X_T^{\cZ^m})\big]$ increases in $m$. Hence, since $\hat{\cZ}^m$ are admissible by the definition of $v^*$ in (\ref{vstar}), all equalities in (\ref{starstar}) must be equalities. In particular, $X_T^{\hat{\mathcal{Z}}^m}$ itself is an admissible maximizing sequence and the same holds true for the sequence which ``stops'' at $w_0$, $X_T^{\mathcal{\hat{\mathcal{Z}}}^{m,\tau^m}}$ with $\tau^m=\inf\{t>0 \mid X^{\hat{\mathcal{Z}}^m}_t\geq w_0\}$ and $\mathcal{\hat{\mathcal{Z}}}^{m,\tau^m}={\mathcal{\hat{\mathcal{Z}}}^m}_t\mathbf{1}_{t\leq\tau^m}+Z_t(H_{\mathrm{M}})\mathbf{1}_{t>\tau^m}$ as defined in (\ref{Ztau}). Furthermore, by (\ref{starstar})
and Fatou's lemma we have that
\begin{equation}
\label{superoptimal}
v^*= \lim_m \E\big[U(\tilde{X}^m_T)\big]\leq \E\big[U(X^*)\big],
\end{equation}
showing that $X^*$ is superoptimal. Now to show Claim A, we will first show the following Claim B:

\noindent{\it \textbf{Claim B:} $X_T^{\hat{\mathcal{Z}}^{m,\tau^m}}$ converges to $X^*$ in $L^2.$}

To see this, the first
key observation is that the admissible wealth process $X_T^{\hat{\mathcal{Z}}^{m,\tau^m}}=X^{\hat{\mathcal{Z}}^m}_{T\wedge\tau}=X_T^{\hat{\mathcal{Z}}^{m}} {\bf 1}_{\tau^m>T}+w_0  {\bf 1}_{\tau^m\leq T}$ is either equal to $X_T^{\hat{\mathcal{Z}}^m}$ (which, {by \eqref{nstern}}, is greater {or equal} than $\tilde{X}_T^m$) or is equal to $w_0$ (which by (\ref{kleineru0}) is greater {or equal} than $\tilde{X}_T^m$ as well). Thus,
\begin{equation}
\label{ui}
\tilde{X}_T^m\leq X_T^{\hat{\mathcal{Z}}^{m,\tau^m}} \leq w_0.
\end{equation}     
As $\tilde{X}_T^m$ converges to $X^*$ in $L^2$, $|X_T^{\hat{\mathcal{Z}}^{m,\tau^m}}|^2$ is uniformly integrable. Next, let us show that $X_T^{\hat{\mathcal{Z}}^{m,\tau^m}}$ converges in probability to $X^*$ as $m$ tends to infinity which would prove Claim B. To this end, recall that $\tilde{X}_T^m$ converges to $X^*$ a.s. and that therefore by (\ref{ui})
\begin{equation}
\label{liminf}
w_0\geq \liminf_m X_T^{\hat{\mathcal{Z}}^{m,\tau^m}}\geq \lim_m \tilde{X}_T^m=X^*.
\end{equation}
Clearly, on the set $A:=\{X^*=w_0\}$ we must have that all inequalities in (\ref{liminf}) are equalities. Since an analogous argument holds {when taking the $\limsup$ instead of the $\liminf$ in \eqref{ui}}, we have that
\begin{equation}
\label{first}
\lim_m X_T^{\hat{\mathcal{Z}}^{m,\tau^m}}=X^*,\quad\mbox{ on } A.
\end{equation}
Last, let us remark that by Lemma \ref{Le: conv}, also on  
$A^c= \{X^*<w_0\}, $ (the complement of $A$) 
we actually must have that 
$X_T^{\hat{\mathcal{Z}}^{m,\tau^m}}$  converges to $X^*$ in probability as well. As shown in Lemma \ref{Le: conv} below, the main idea is that from (\ref{liminf}) we already know that the $\liminf$ of $X_T^{\hat{\mathcal{Z}}^{m,\tau^m}}$ dominates $X^*$ a.s. Observe that this domination can actually not be strict on $A^c$, as this would be a contradiction  
that by (\ref{superoptimal}) $X^*$ is superoptimal. Now, the fact that $X_T^{\hat{\mathcal{Z}}^{m,\tau^m}}$ converges in probability to $X^*$ together with  uniform integrability of $|X_T^{\hat{\mathcal{Z}}^{m,\tau^m}}|^2$ proves Claim B. 

Next, let us complete proving Claim A. By Claim B, we have that the admissible terminal wealth $X_T^{\hat{\mathcal{Z}}^{m,\tau^m}}$ converges to $X^*$ in $L^2$. From well known results on the continuity of BSDEs with Lipschitz-continuous driver in the terminal condition, it follows then that the tuple 
\begin{align*}
&\big(\Pi_t(-X_T^{\hat{\mathcal{Z}}^{m,\tau^m}}+H_\mathrm{M}-\Pi_0(H_\mathrm{M})),\hat{\mathcal{Z}}^{m,\tau^m}_t\big)\\
&=\Big(-X_T^{\hat{\mathcal{Z}}^{m,\tau^m}}+H_\mathrm{M}-\Pi_0(H_\mathrm{M})-\int_t^Tg(t,\hat{\mathcal{Z}}^{m,\tau^m}_t)\d t +\int_t^T\hat{\mathcal{Z}}^{m,\tau^m}_t\d W_t,\hat{\mathcal{Z}}^{m,\tau^m}_t\Big)
\end{align*} converges to
$$\big(\Pi_t(-X^*+H_\mathrm{M}-\Pi_0(H_\mathrm{M})),\hat{\mathcal{Z}}^{*}_t\big)=\Big(-X^*+H_\mathrm{M}-\Pi_0(H_\mathrm{M})-\int_t^Tg(t,\hat{\mathcal{Z}}^{*}_t)\d t +\int_t^T\hat{\mathcal{Z}}^{*}_t\d W_t,\hat{\mathcal{Z}}^{*}_t\Big)$$
in $\mathcal{S}^2\times L^2(\d\P\times\d s)$ with $\wh{\cZ}^*$ being the unique square-integrable process from (\ref{bsdeX}) {with terminal condition $-X^*+H_\mathrm{M}-\Pi_0(H_\mathrm{M})$}. 
Note that $X^*=X_T^{\wh{\cZ}^*}$ (plug in $t=0$ above and compare the first component of the tuple with \eqref{eq:stern}, see also below \eqref{eq:stern}). By switching to a subsequence if necessary we may assume that $\hat{\mathcal{Z}}^{m,\tau^m}$ converge to $\wh{\cZ}^*$ $\d\P\times\d s$ a.s. so that $\wh{\cZ}^*\: \text{takes values in } {cl\big(\text{Im}(Z)\big)} $. In particular, $X^*$ is an admissible portfolio of problem (\ref{eq:appendC}), and  by (\ref{superoptimal}) attains the optimum, showing (ii) the existence of an optimal $\mathcal{Z}^*$ in case the driver is Lipschitz.

Finally, if the driver is not Lipschitz, it grows at least quadratically { by assumption}. In particular, (i) already shown above applies showing the existence of an optimal $\mathcal{Z}^*$. For both cases \textbf{(HL)} and \textbf{(Hg)} it can be checked as in (i) that the integrability conditions of $U(X_T^{\mathcal{Z}^*})$ and $U'(X_T^{\mathcal{Z}^*})$ are satisfied so that by the remark above, (ii) follows.\\

\noindent\textbf{Proof of (iii).} So assume that $U$ is a CARA utility function, i.e. $U(x)=a-be^{-\gamma x}$ for $a\in\R$ and $b,\gamma>0$. Then, from \eqref{eq:bijective} we deduce that $\cH(t,X,\zeta,M)=\cH(t,M)$ does not depend on $(X,\zeta)$. 
The FBSDE in \eqref{eq:FBSDE} decouples and becomes then 
\begin{equation}
\zeta_t=H_L-\int_t^T M_s\d W_s+\int_t^Tf(s,M_s) \d s, \quad \zeta_T=H_L,
\label{eq:Cara1}
\end{equation}
where, {
$
f(t,M)=\tn{-}\frac{\gamma}{2}\vert \cH(t,M)-Z_t(H_\mathrm{M}))+M\vert^2+g(t,Z_t(H_\mathrm{M}))-g(t,\cH(t,M_t)). $}
%
From Proposition \ref{multidimintervals}, we have that $Z_t(H_\mathrm{M})$ is bounded. Furthermore, by Lemma \ref{lineargrove}, $\cH$ grows at most linearly in $M$, implying that $f$ grows at most quadratically. The existence of a solution to the FBSDE follows then from Theorem 2.3 in \cite{kobylanski2000}.\qed \\
\begin{lemma}\label{Le: conv}
In the proof of Theorem \ref{th:generalcase} (ii), $X^{\hat{\mathcal{Z}}^{m,\tau^m}}_T$ converges to $X^*$ on $A^c$ in probability
\end{lemma}
\label{Ap:G}
\proof
Let us show that on $A^c= \{X^*<{w_0}\} , $
$X_T^{\hat{\mathcal{Z}}^{m,\tau^m}}$ converges to $X^*$ in probability. 
First of all note that if we could show that for any arbitrary $\delta>0$,
\begin{equation}
\label{convprop}
 \lim_m\P[X_T^{\hat{\mathcal{Z}}^{m,\tau^m}}>X^*+\delta,\,\,X^*<w_0 ]=0,
\end{equation}
then
\begin{align*}\lim_m \P[|X_T^{\hat{\mathcal{Z}}^{m,\tau^m}}-X^*        |>\delta,\,\,X^*<w_0]& \leq \limsup_m \P[X_T^{\hat{\mathcal{Z}}^{m,\tau^m}}>X^*+\delta,\,\,X^*<w_0]\\
& \hspace{0.5cm} + 
\limsup_m \P[X_T^{\hat{\mathcal{Z}}^{m,\tau^m}}<X^*-\delta,\,\,X^*<w_0]\\
&=0 ,
\end{align*}
where the second term converges to zero by \eqref{liminf}. So all what is left is to show  (\ref{convprop}). Assume by contradiction that (\ref{convprop}) does not hold and we have that
$ \P[X_T^{\hat{\mathcal{Z}}^{m,\tau^m}}>X^*+\delta,X^*<w_0 ]>2\varepsilon ,$
for a subsequence again indexed by $m$ and an $\varepsilon>0.$ Then there exists $N_0\in\mathbb{N}$ such that for all $m$
$$ \P[C_{m}]:=  \P\bigg [X_T^{\hat{\mathcal{Z}}^{m,\tau^m}}>X^*+\delta,X^*<w_0-\frac{1}{N_0} \bigg]>\varepsilon. $$
Let $C_m^c$ denote the complement of $C_m$. Note that $U(x+y)-U(x)$ is decreasing in $x$ and increasing in $y$. Since $U(x)$ is constant for $x+y\geq \omega_0$, this entails that on $X^*<w_0-\frac{1}{N_0}$ we have
$U(X^*+\delta)-U(X^*)\geq U(w_0)- U(w_0-\min(\delta,\frac{1}{N_0}))   .$
 Therefore,
\begin{align*}
\lim_m \E[& U(X_T^{\hat{\mathcal{Z}}^{m,\tau^m}})]\\
&= \lim_m\bigg\{ \E[U(X_T^{\hat{\mathcal{Z}}^{m,\tau^m}}){\bf 1}_{C_m}]+ \E[U(X_T^{\hat{\mathcal{Z}}^{m,\tau^m}}){\bf 1}_{C^c_m}]\bigg\}\\
&\geq \limsup_m\bigg\{ \E[U(X^*+\delta){\bf 1}_{C_m}]+ \E[U(X_T^{\hat{\mathcal{Z}}^{m,\tau^m}}){\bf 1}_{C^c_m}]\bigg\}\\&\geq
\limsup_m \bigg\{\E[U(X^*){\bf 1}_{C_m}]+[U(w_0)-U(w_0-\min(\delta,\frac{1}{N_0}))]\P[C_{m}]\\
&\hspace{2cm}+ \E[U(X_T^{\hat{\mathcal{Z}}^{m,\tau^m}}){\bf 1}_{C^c_m}]\bigg\}\\
&>\limsup_m  \bigg\{\E[U(X^*){\bf 1}_{C_m}]+ \E[U(X_T^{\hat{\mathcal{Z}}^{m,\tau^m}}){\bf 1}_{C^c_m}] \bigg\}\\
&\geq \liminf_m  \bigg\{\E[U(X^*){\bf 1}_{C_m}]+ \E[U(X^*){\bf 1}_{C^c_m}] \bigg\}= \E[U(X^*)]\geq v^*,
\end{align*}
which is a contradiction to the definition of $v^*$ in (\ref{vstar}), where we used the definition of $C_m$ in the first, (\ref{liminf}) in the fourth and (\ref{superoptimal}) in the last inequality.\\
\endproof

\section{Proofs of Section \ref{se:BSPDE}}\label{Ap:F}

\noindent\textbf{Proof of Proposition \ref{Le:supermart}.}
Let $\theta^0\in\Theta$. Denote by  
$\Theta(\theta^0,t,T)$ the set of all admissible strategies being equal to $\theta^0$ until time $t$, i.e.,
$\theta\in \Theta(\theta^0,t,T)$ if $\theta\in\Theta$ and $\theta_s{\bf 1}_{0\le s\le t}=\theta^0_s{\bf 1}_{0\le s\le t}$. Let us show that the family 
$$
\big\{\Upsilon_t^\theta:=\E[U(x+\cI_\zs{t,T}(\theta)+H_L)\vert \cF_t], \,\theta\in \tilde{{\Theta}}(\theta^0,t,T)\big\}
$$ admits the lattice property. Indeed, for $\theta^1,\theta^2 \in \tilde{{\Theta}}(\theta^0,t,T)$ we define
$$
\theta_s:=\theta^0_s{\bf 1}_{0\le s< t}+\bigg(\theta^1_s{\bf 1}_\zs{\Upsilon_t^{\theta^1}\ge \Upsilon_t^{\theta^2}}+\theta^2_s {\bf 1}_\zs{\Upsilon_t^{\theta^1}< \Upsilon_t^{\theta^2}}\bigg) {\bf 1}_\zs{T\ge s\ge t} .
$$
Note that for any $y_1,y_2\in \bbr^n$ and $A\in \cF_t$, we have $Z_t(-y_1S {\bf 1}_A-y_2S {\bf 1}_{A^c})=Z_t(-y_1S){\bf 1}_A+Z_t(-y_2S) {\bf 1}_{A^c}$. It is then clear that 
$$\mathcal{Z}_s^\theta=\mathcal{Z}^{\theta^0}_s{\bf 1}_\zs{0\le s< t}+\bigg(\mathcal{Z}^{\theta^1}_s {\bf 1}_\zs{\Upsilon_t^{\theta^1}\ge \Upsilon_t^{\theta^2}}+\mathcal{Z}^{\theta^2}_s{\bf 1}_\zs{\Upsilon_t^{\theta^1}< \Upsilon_t^{\theta^2}}\bigg) {\bf 1}_\zs{T\ge s\ge t},$$ and $\theta$ is admissible. Since ${\bf 1}_\zs{\Upsilon_t^{\theta^1}\ge \Upsilon_t^{\theta^2}}$ is $\cF_t$-measurable we deduce that
$
\Upsilon_t^\theta={\bf 1}_\zs{\Upsilon_t^{\theta^1}\ge \Upsilon_t^{\theta^2}}\Upsilon_t^{\theta^1}+{\bf 1}_\zs{\Upsilon_t^{\theta^1}< \Upsilon_t^{\theta^2}} \Upsilon_t^{\theta^2}=\max(\Upsilon_t^{\theta^1},\Upsilon_t^{\theta^2}).
$
Noting that the lattice property allows to interchange essential supremum and conditional expectations, it follows that $V(t,x+\cI_\zs{s,t}(\theta))$ is a supermartingale and \eqref{eq:super} holds.
Finally, the equivalence property can be seen as follows: if $\theta^*$ is optimal then
	\begin{align*}
	V(0,x)=\sup_{\theta\in\Theta} \E[U(x+\cI_\zs{0,T}(\theta)+H_L)]=\E[U(x+\cI_\zs{0,T}(\theta^*)+H_L)]=\E[V(T,x+\cI_\zs{0,T}(\theta^*))].
	\end{align*}
	Hence, the supermartingale process $V(t,x+\cI_\zs{s,t}(\theta^*))$ is a martingale. {Assume now that $V(t,x+\cI_\zs{s,t}(\theta^*))$ is a martingale. We have then
	by the supermartingale property 
	$$
	\sup_{\theta\in\Theta} \E[U(x+\cI_\zs{0,T}(\theta)+H_L)]\le V(0,x)=\E[V(T,x+\cI_\zs{0,T}(\theta^*))]=\E[U(x+\cI_\zs{0,T}(\theta^*)+H_L)],
	$$
	which shows that $\theta^*$ is optimal.} \qed\\

	\textbf{Proof of Proposition \ref{Le:VlamdaV}.} 
Note that, $ \E[U(x-\int\limits_0^T g(s,Z_s)\d s + \int\limits_t^T Z_s \d Ws+H_L)]$ is jointly concave in $(x,Z)$. By assumption,  ${cl(\text{Im}(Z))}$, the set where $Z$ takes values in, is a convex set. Thus, $V$ is concave as maximum of a jointly concave function in one of its arguments over a convex set.\\
To show the strict concavity let $\theta^1,\theta^2$ be the corresponding optimal strategy starting with initial wealth $x_1$, $x_2$ respectively, i.e.,
$$
V(t,x_1)=\E\big[U(x_1+\cI_\zs{t,T}(\theta^1)+H_L)\mid\cF_t\big],\quad \mbox{and}\quad V(t,x_2)=\E\big[U(x_2+\cI_\zs{t,T}(\theta^2)+H_L)\mid\cF_t\big].
$$ Let us assume that $\lambda V(t,x_1)+(1-\lambda)V(t,x_2)=V\big(t,\lambda x_1+(1-\lambda)x_2\big)$ for some $\lambda\in[0,1]$ {on a non-zero $\mathcal{F}_t$-measurable set, say $A$}. Due to the strict concavity of $U$ we deduce that
$
x_1+\cI_\zs{t,T}(\theta^1)=x_2+\cI_\zs{t,T}(\theta^2),
$
{on $A$} which leads to 
\begin{equation}
	x_2-x_1=-\int_t^T (g(s,\mathcal{Z}^{\theta^1}_s)-g(s,\mathcal{Z}^{\theta^2}_s))\d s+ \int_t^T (\mathcal{Z}^{\theta^1}_s-\mathcal{Z}^{\theta^2}_s)\d W_s=\int_t^T  (\mathcal{Z}^{\theta^1}_s-\mathcal{Z}^{\theta^2}_s) \d W_s^G,
	\label{eq:x12}
\end{equation}
{on $A$}, {for $T\geq s\geq t$, where $W^G_s:=W_s-W_t-\int_t^s \sum_{j=1}^d G^j(u,\mathcal{Z}^{\theta^1}_u,\mathcal{Z}^{\theta^2}_u)\d u$} and $$G^j(s,z,\tilde{z}):=\frac{g(s,z^1,\ldots,z^{j},\tilde{z}^{j+1},\ldots,\tilde{z}^d)-g(s,z^1,\ldots,z^{j-1},\tilde{z}^j,\ldots,\tilde{z}^d)}{z^j-\tilde{z}^j},$$ with $\frac{0}{0}:=0$. Under Condition ${\bf (HL)}$, $G$ is bounded. 
Hence, by Novikov's condition, $W^G$ is a Brownian motion under $\P^G$ conditionally on $A$, where $\P^G$ is defined by 
$\d \P^G/\d \P|\mathcal{F}_t=\cE_T\big(-\int_t^TG(s,\mathcal{Z}^{\theta^1}_s,\mathcal{Z}^{\theta^2}_s)\d W_s\big)$. 
Moreover, it follows from Girsanov's Theorem 
that 
$\int_t^u  (\mathcal{Z}^{\theta^1}_s-\mathcal{Z}^{\theta^2}_s) \d W_s^G$ is a martingale under $\P^G$. 
Taking expectations conditional on $\P^G[\cdot|\mathcal{F}_t]$ on both sides in \eqref{eq:x12} we obtain {that $x_2=x_1$} and the proof is complete.\qed \\
	
\noindent\textbf{Proof of Lemma \ref{Le:abc}.} {The first part of the lemma follows as the right-hand side of \eqref{eq:cL} is a lower semi-continuous, coercive function in $\cZ.$ For the second part, note that the right-hand side of \eqref{eq:cL} is strictly convex, as $cl(\text{Im}(Z))$ is convex.} \qed

\bigskip
\noindent\textbf{Proof of Theorem \ref{Th:SBPDEss}.}
Using It\^o-Ventzel's formula we can represent 
\begin{align}
V(t,x+\cI_\zs{s,t}(\theta))=&V(s,x)+\int_s^t \Big(G(u,x+\cI_\zs{s,u}(\theta),\mathcal{Z}^\theta_u)-b(u, x+\cI_\zs{s,u}(\theta))\Big) \d u\notag\\
&+\int_s^t \Big(V_x(u,x+\cI_\zs{s,u}(\theta))\mathcal{Z}^\theta_u+\alpha(u,x+\cI_\zs{s,u}(\theta))\Big)\d W_u,
\label{eq:Vform1}
\end{align}
where $G(u,p,z):=-{g}(u,z)V_x(u,p)+\frac{1}{2}\vert z\vert^2 V_\zs{xx}(u,p)+\alpha_x(u,p)z.$

We recall from Proposition \ref{Le:supermart} that for any $x$, the value process $V(t,x+\cI_\zs{s,t}(\theta))$ is a supermartingale. Hence, the finite variation part in \eqref{eq:Vform1} is decreasing. Therefore, for any admissible strategy $\theta$ and $\epsilon>0$
$$
\int_s^{s+\epsilon} b(u, x+\cI_\zs{s,u}(\theta)) \d u\ge \int_s^{s+\epsilon} G(u,x+\cI_\zs{s,u}(\theta),\mathcal{Z}^\theta_u)\d u.
$$
Dividing both sides by $\epsilon$ and letting $\epsilon\to 0$ we obtain
$
 b(s, x)\ge G(s,x,\mathcal{Z}^\theta_s) \quad \d \P\times \d t \quad \text{a.s}.
$
Noting that $V_{xx}\le 0$ and $V_x>0$ a.e. we get from Lemma \ref{Le:abc} for all $x$ 
\begin{equation}
b(t,x)\ge \esssup_\zs{\cZ\in {cl(\text{Im}_t(Z))}} G(t,x,\mathcal{Z})=\cL^V(t,x),
\label{eq:es}
\end{equation}
where $\cL^V(t,x)$ is defined by \eqref{eq:cL}. Now assume that $\theta^*$ is an optimal strategy, i.e.,  $V(t,x+\cI_\zs{s,t}(\theta^*))$ is a martingale. Let $X^{\theta^*}_{s,t}(x)=x+\cI_\zs{s,t}(\theta^*)$ and note that $X^{\theta^*}_{0,t}(x)=x+\cI_\zs{0,t}(\theta^*)= X^{\theta^*}_{t}.$ Using the It\^o-Ventzel formula we get for any $s\in [0,t]$,
$G(t,X^{\theta^*}_{s,t}(x),\mathcal{Z}^{\theta^*}_t)-b(t, X^{\theta^*}_{s,t}(x))=0, \: \d \P\times \d t \ \text{a.s}.$
It follows from \eqref{eq:es} then that $\mathcal{Z}^{\theta^*}_t$ must be the maximizer of $G(t,X^{\theta^*}_{s,t}(x),\mathcal{Z})$ and therefore $\mathcal{Z}^{\theta^*}_t=\upsilon(t,X^{\theta^*}_{s,t}(x))$ for a function $\upsilon$ defined in \eqref{eq:cL}. 
Hence, for any $s\in[0,t]$ we have $0\in \cU^V(t,\mathcal{Z}^{\theta^*}_t,X^{\theta^*}_{s,t}(x))$. Consequently, taking $s=0$ and $s=t$ we obtain \eqref{eq:Z*gen} and $b(t,x)=\cL^V(t,x)$,
which leads to the BSPDE \eqref{eq:VLV}. 

On the other hand, if $\wt{\theta}$ is a strategy satisfying \eqref{eq:Z*gen}-\eqref{eq:Xoptbgen}, {and $cl(\text{Im}(Z))$ is convex, $\upsilon$ is unique, and} one can verify using the preceding steps that the value process $V(t,x+\cI_{s,t}(\wt{\theta}))$ is a local martingale. Since it by assumption belongs to the class $D$, $V(t,x+\cI_{s,t}(\wt{\theta}))$ is a martingale and $\wt{\theta}$ is optimal by Proposition \ref{Le:supermart}.\qed \\

\noindent\textbf{Proof of Theorem \ref{Th:6.2}.}
First, by the It\^o-Ventzel formula, it can be seen directly that
\begin{align*}
V_x(t,X^{\theta^*}_t)=&V_x(0,x)+\int_0^t \big(\mathcal{Z}^{\theta^*}_sV_{xx}(s,X^{\theta^*}_s) +\alpha_x(s,X^{\theta^*}_s)\big)\d W_s\\
&+ \int_0^t \big(-\cL^V_x(s,X^{\theta^*}_s)+\frac{1}{2}\vert\mathcal{Z}^{\theta^*}_s\vert^2 V_{xxx}(s,X^{\theta^*}_s)-g(s,\mathcal{Z}^{\theta^*}_s)V_{xx}(s,X^{\theta^*}_s)+\mathcal{Z}^{\theta^*}_s\alpha_{xx}(s,X^{\theta^*}_s)\big) \d s.
\end{align*}
From Lemma \ref{le:ap1} below we observe that the finite variation term is zero, which implies that $V_x(t,X^{\theta^*}_t)$ is a local martingale whose decomposition is given by 
\begin{align*}
V_x(t,X^{\theta^*}_t)=V_x(0,x)+\int_0^t \big(\mathcal{Z}^{\theta^*}_sV_{xx}(s,X^{\theta^*}_s) +\alpha_x(s,X^{\theta^*}_s)\big)\d W_s.
\end{align*}
Let $\zeta_t=I(V_x(t,X_t^{\theta^*}))-X_t^{\theta^*}$. Clearly, $V_x(t,X_t^{\theta^*})=U'(X_t^{\theta^*}+\zeta_t)$ and therefore 
$$I'(V_x(t,X_t^{\theta^*}))=\frac{1}{U''(X_t^{\theta^*}+\zeta_t)},\quad 
I''(V_x(t,X_t^{\theta^*}))=-\frac{U^{(3)}(X_t^{\theta^*}+\zeta_t)}{(U'')^3(X_t^{\theta^*}+\zeta_t)}.$$ 
Note that $\zeta_T=I(U'(X_T^{\theta^*}+H_L))-X_T^{\theta^*}=H_L$. By It\^o's formula we get
\begin{align}
\d \zeta_t=& \frac{\mathcal{Z}^{\theta^*}_t V_{xx}(t,X_t^{\theta^*})+\alpha_x(t,X_t^{\theta^*})}{U''(X_t^{\theta^*}+\zeta_t)} \d W_t -\mathcal{Z}^{\theta^*}_t \d W_t \notag \\
&+\frac{1}{2}I''(V_x(t,X_t^{\theta^*})) \vert\mathcal{Z}^{\theta^*}_t V_{xx}(t,X_t^{\theta^*})+\alpha_x(t,X_t^{\theta^*})\vert^2 \d t+g(t,\mathcal{Z}^{\theta^*}_t) \d t. \label{eq:twostar}
\end{align}
Set 
\begin{equation}
M_t=\frac{\upsilon(t,X^{\theta^*}_t)V_{xx}(t,X_t^{\theta^*})+\alpha_x(t,X_t^{\theta^*})}{U''(X_t^{\theta^*}+\zeta_t)}-\upsilon(t,X^{\theta^*}_t).
\label{eq:spdem}
\end{equation}
By \eqref{eq:Z*gen} we have $\upsilon(t,X^{\theta^*}_t)V_{xx}(t,X_t^{\theta^*})+\alpha_x(t,X_t^{\theta^*})\in \nabla {g}_A^\intercal(t,\upsilon(t,X^{\theta^*}_t)) V_x(t,X_t^{\theta^*})$ {with $A=cl(\text{Im}_t(Z))$}. \tn{As $\upsilon$ only takes values in the interior of $\text{Im}_t(Z))$, $\nabla{g}_{\text{Im}_t(Z)}(t,\upsilon(t,X^{\theta^*}_t))=g_z(t,\upsilon(t,X^{\theta^*}_t))=g_{cl(\text{Im}_t(Z))}(t,\upsilon(t,X^{\theta^*}_t))$}. Hence, it follows from \eqref{eq:spdem} that
$0\in -\nabla{g}_{\tn{z}}^\intercal(t,\upsilon(t,X^{\theta^*}_t))U'(X_t^{\theta^*}+\zeta_t)+U''(X_t^{\theta^*}+\zeta_t)(M_t+\upsilon(t,X^{\theta^*}_t))$, and we have $\cH_{{\tn{cl(\text{Im}_t(Z))}}}(t,X_t^{\theta^*},\zeta_t,M_t)=\cZ_t^{\theta^*}=\upsilon(t,X^{\theta^*}_t)$, where the last equation holds by Theorem \ref{Th:SBPDEss}. \tn{To see the first equation, note first that by Theorem \ref{Th:FB}, $\cH_{{\tn{cl(\text{Im}_t(Z))}}}(t,X_t^{\theta^*},\zeta_t,M_t)$ is a singleton. Furthermore, as $\cZ^{\theta^*}$ by assumption lies on the interior of $\text{Im}_t(Z)$, $\nabla{g}_{\text{Im}_t(Z)}(t,\cZ^{\theta^*}_t)=g_z(t,\cZ^{\theta^*}_t)=\nabla{g}_{cl(\text{Im}_t(Z))}(t,\cZ^{\theta^*}_t)$ so by Theorem \ref{Th:FB}, $\cZ^{\theta^*}_t\in \cH_{{cl(\text{Im}_t(Z))}}(t,X_t^{\theta^*},\zeta_t,M_t)$.} It is then straightforward to see using (\ref{eq:twostar}) that the triple $(X_t^{\theta^*},\zeta_t,M_t)$ satisfies the FBSDE \eqref{eq:FBSDE} with $H_{\mathrm{M}} =0$.\qed


\begin{lemma}\label{le:ap1}
	If $\upsilon$ only takes values in the interior of $\text{Im}(Z)$ we have  
	\begin{align*}
	\mathcal{L}_x^V(t,x) = - g(t,\upsilon(t,x))V_{xx}(t,x) + \frac{1}{2}\upsilon^2(t,x)V_{xxx}(t,x)+\upsilon(t,x)\alpha_{xx}(t,x).
	\end{align*}
\end{lemma}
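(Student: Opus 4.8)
The plan is to prove the identity by the envelope theorem, differentiating in the spatial variable $x$ the representation $\cL^V(t,x)=G(t,x,\upsilon(t,x))$, where $G$ is the integrand from \eqref{eq:G}, i.e. $G(t,x,z)=-g(t,z)V_x(t,x)+\tfrac12 z^2 V_{xx}(t,x)+z\,\alpha_x(t,x)$. By Lemma \ref{Le:abc}, in the complete-market case the essential supremum defining $\cL^V$ is attained at the unique point $\upsilon(t,x)$, characterised by the first-order condition $\cU^V(t,\upsilon(t,x),x)=0$, which is precisely $G_z(t,x,\upsilon(t,x))=0$.

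First I would verify that $x\mapsto\upsilon(t,x)$ is continuously differentiable, so that the chain rule may be applied. Strict concavity of $V$ in $x$ gives $V_{xx}<0$, whence $z\mapsto G(t,x,z)$ is strictly concave and the maximiser is unique; applying the implicit function theorem to the defining relation $\cU^V(t,\upsilon,x)=-g_z(t,\upsilon)V_x+\upsilon V_{xx}+\alpha_x=0$ --- legitimate because $g$ is twice continuously differentiable in $z$ and because $V_x,V_{xx},V_{xxx},\alpha_x,\alpha_{xx}$ exist and are continuous in $x$ under the regular-family assumptions of Definition \ref{Def:1} together with Condition {\bf (CR)} --- then yields $\upsilon_x$ together with its continuity.

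Next I would differentiate $\cL^V(t,x)=G(t,x,\upsilon(t,x))$ by the chain rule,
\[
\cL^V_x(t,x)=G_x(t,x,\upsilon(t,x))+G_z(t,x,\upsilon(t,x))\,\upsilon_x(t,x),
\]
and observe that the second summand vanishes since $G_z(t,x,\upsilon)=\cU^V(t,\upsilon,x)=0$; this is the envelope step, and it is precisely where all the contributions carrying $\upsilon_x$ cancel. It then remains to compute $G_x$, the partial derivative of $G$ in its spatial slot at the frozen optimiser $z=\upsilon(t,x)$: the first two summands of $G$ yield $-g(t,\upsilon)V_{xx}$ and $\tfrac12\upsilon^2 V_{xxx}$, while the summand that is linear in $z$ contributes the remaining term, giving the right-hand side of the lemma.

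The step I expect to be the main obstacle is not the formal differentiation but the rigorous justification of the envelope argument, namely that $\cL^V$ is genuinely $C^1$ in $x$ --- equivalently, that differentiation may be interchanged with the essential supremum. This rests on combining the uniqueness and differentiability of the maximiser $\upsilon$ with the continuity and local integrability of $b,\alpha,\alpha_x$ supplied by Condition {\bf (CR)}; once these are in place the identity follows at once from the chain rule and the first-order condition.
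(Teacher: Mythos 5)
Your proof is correct and is essentially the paper's own argument: both invoke the first-order condition $\cU^V(t,\upsilon(t,x),x)=0$ from Lemma \ref{Le:abc}, obtain continuous differentiability of $\upsilon$ in $x$ via the implicit function theorem (using $g_z$ increasing, $V_x>0$, $V_{xx}<0$), and then differentiate $G(t,x,\upsilon(t,x))$ by the chain rule so that the term carrying $\upsilon_x$ is annihilated by the first-order condition. One small remark: carried out honestly, the computation of $G_x$ makes the last term $\upsilon(t,x)\alpha_{xx}(t,x)$ rather than $\upsilon(t,x)\alpha_x(t,x)$; this discrepancy is a typo present in the paper's own statement and proof of the lemma, and the $\alpha_{xx}$ version is the one actually used in the proof of Theorem \ref{Th:6.2}.
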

\proof
By Lemma \ref{Le:abc} and since $g$ is differentiable, $\upsilon$ is characterized by 
\begin{equation}\label{eq:ap1}
-g^\intercal_z(t,v(t,x))V_x(t,x)+ v(t,x) V_{xx}(t,x)+\alpha_x(t,x)=\mathcal{U}^V(t,\upsilon(t,x),x)=0.
\end{equation}
Note that $g_z$ is increasing, $V_x>0$, and $V_{xx}<0$. As $\upsilon$ by assumption only takes values in the interior of ${\text{Im}(Z)_t}$ all functions involved are continuously differentiable and the implicit function theorem gives that $\upsilon$ is continuously differentiable in $x$. Hence, taking the derivative of $\mathcal{L}^V$ in \eqref{eq:cL} w.r.t. $x$ we obtain using \eqref{eq:ap1}
\begin{align*}
\mathcal{L}_x^V(t,x) &= \upsilon_x(t,x)(-g^\intercal_z(t,\upsilon(t,x))V_x(t,x) +\upsilon(t,x)V_{xx}(t,x) + \alpha_x(t,x)) \\
&\hspace{1cm} - g(t,\upsilon(t,x))V_{xx}(t,x) + \frac{1}{2}\upsilon^2(t,x)V_{xxx}(t,x)+\upsilon(t,x)\alpha_{xx}(t,x) \\ 
&=  - g(t,\upsilon(t,x))V_{xx}(t,x) + \frac{1}{2}\upsilon^2(t,x)V_{xxx}(t,x)+\upsilon(t,x)\alpha_{xx}(t,x).
\end{align*}

\section{Proofs of Section \ref{se:Regu}}
\noindent\textbf{Proof of Proposition \ref{propcara}.}
By \eqref{eq:Vx} and (\textbf{CV}), the value function can be written as $V(t,x)=U(x)V_t$, where
\begin{align*}
V_t:&=\essinf_{\zs{\theta \in \Theta, \theta_s, s\in[t,T]} }\E\big[e^{-\gamma \cI_\zs{t,T}(\theta)}\mid\cF_t\big]=\E\big[e^{-\gamma \cI_\zs{t,T}(\theta^*)}\mid\cF_t\big]=e^{\gamma \cI_\zs{0,t}(\theta^*)}\E\big[e^{-\gamma \cI_\zs{0,T}(\theta^*)}\mid\cF_t\big].
\label{eq:VxExp}
\end{align*}
By the martingale representation theorem and since $\cI_\zs{0,t}{(\theta^*)}$ is an It\^o-process, $V_t$ is an It\^o-process as well {and of the form $V_t=V_0-\int_{0}^{t}b_s \d s+\int_0^t \alpha_s \d W_s $.} Therefore, $V(t,x)$ is a special semimartingale. Hence, for $V(t,x)=U(x)V_t$, conditions (a)-(c) in Definition \ref{Def:1} are fulfilled. The definition of $V_t$ and the separation structure allow us to rewrite the BSPDE \eqref{eq:ValueBSPDE} as $V(t,x)=U(x)V_t$ with $\alpha (t,x)=U(x)\alpha_t,$ $b(t,x)=U(x)b_t$, where $V$ satisfies a BSDE of the form $V_t=V_0-\int_0^t {b}_s \d s+\int_0^t \alpha_s \d W_s \text{ and } V_T=1.$
Repeating the steps in the proof of Theorem \ref{Th:SBPDEss} (where, due to the separation structure, we can simply apply It\^o's Lemma rather than the It\^o-Ventzel formula) we obtain the finite variation term $\cL^V_t$ of the proposition.  
\qed \\

\begin{lemma}\label{Le:der}
For $x=f(y)$, { which is the inverse of the  decreasing  function $U'(x)e^{-\gamma x}{\gamma ^{-1}}$ on $\bbr\to\bbr_+$, we have}
\begin{enumerate}
	\item [(i)]$y\wt{U}''(y)=\frac{\gamma}{R_1(x)+\gamma}e^{\gamma x}$,
	\item [(ii)] $
	y^2\wt{U}'''(y)=-\gamma^2\bigg(\frac{\gamma}{R_1(x)+\gamma}\bigg)^2e^{\gamma x}+\gamma\frac{-R_1(x) R_2(x)+2\gamma R_1(x)-\gamma^2}{ (R_1(x)+\gamma)^3} e^{\gamma x},
	$
	\item [(iii)] if, in addition, the relative risk aversion $R_1(x)$ and $R_2(x)$ are bounded and bounded away from zero and $\E[U(f(\lambda \xi_T))]<\infty,$ and $\E[e^{\gamma f(\lambda \xi_T)}\xi_T]<\infty,$ for any $\lambda>0$,  then $\wt{V}(y)$ is well-defined and three times differentiable for $y>0$.
\end{enumerate}
\end{lemma}

\noindent\textbf{Proof of Lemma \ref{Le:der}.}
It can be derived from \eqref{eq:Utilde} that $\wt{U}'(y)=-e^{\gamma x}$, $\wt{U}''(y)=-\gamma e^{\gamma x}x'$ and $\wt{U}'''(y)=-\gamma^2 e^{\gamma x}(x')^2-\gamma e^{\gamma x} x''$ (suppressing the dependence of the optimal $x$ on $y$). Using the relation $\gamma y=U'(x) e^{-\gamma x}$ we obtain $x'=\frac{\gamma e^{\gamma x}}{U''(x)-\gamma U'(x)}$ and $x''=-\gamma^2 \frac{U'''(x)-2\gamma U''(x)+\gamma^2 U'(x)}{(U''(x)-\gamma U'(x))^3}e^{2\gamma x}$, which implies the first two conclusions. For (iii), it suffices to observe that for $y>0$,
$
\wt{V}(y)=\E[\wt{U}(y\xi_T)]$ and $\wt{V}'(y)=\E[\xi_T\wt{U}'(y\xi_T)]=\E[-\xi_Te^{\gamma f(y\xi_T)}]<\infty
$
are well-defined by assumption. Moreover, using (i) and (ii), we can show that $${\wt{V}''(y)}=\E[\xi_T^2\wt{U}''(y\xi_T)]=\frac{1}{y}\E[\xi_T(y\xi_T)\wt{U}''(y\xi_T)]=\frac{1}{y}\E[\xi_T\frac{\gamma}{R_1(f(y\xi_T))+\gamma}e^{\gamma f(y\xi_T)}]<\infty$$
and, similarly,
\begin{align*}
{\wt{V}'''(y)}=\E[\xi_T^3\wt{U}'''(y\xi_T)]&=\frac{1}{y^2}\E[\xi_T(y\xi_T)^2\wt{U}'''(y\xi_T)]=\frac{1}{y^2}\E[(-\gamma^2)\bigg(\frac{\gamma}{R_1(x)+\gamma}\bigg)^2 \xi_Te^{\gamma f(y\xi_T)}]\\
&-\gamma\E[\frac{-R_1(f(y\xi_T)) R_2(f(y\xi_T))+2\gamma R_1(f(y\xi_T))-\gamma^2}{\gamma^2 (R_1(f(y\xi_T))+\gamma)^3}\xi_T e^{\gamma f(y\xi_T)}]<\infty.
\end{align*}
{Note that derivatives and expectations can be interchanged since by assumption, the relative risk aversion $R_1(x)$ and $R_2(x)$ are bounded and bounded away from zero, $\E[e^{\gamma f(\lambda \xi_T)}\xi_T]<\infty$ for any $\lambda>0$, and $f$ is monotone.} 
\qed\\

\begin{lemma}\label{Le:basic}
Under the assumptions of Proposition \ref{Th:quadg} the inverse of the stochastic flow $x\rightarrow X_t(x)$ exists and is three times continuously differentiable for any $t\in[0,T].$ Furthermore, all coefficients of $X_t(x)$ are differentiable and locally Lipschitz. Finally, denoting the stochastic integrand of $X_t(x)$ by $\cZ^*(t,x), \cZ^*$ and $\frac{\partial \cZ^*}{\partial x}$ are locally Lipschitz. 
\end{lemma}
\noindent\textbf{Proof of Lemma \ref{Le:basic}.}
By Proposition \ref{propcomplete} we have that $X_T(x)=f(\lambda(x)\xi_T)$ with $\lambda(x)$ being uniquely determined by the budget constraint
		\begin{equation} 
		\label{implicit}
		-\frac{1}{\gamma}\log\bigg(\E^\Q[e^{-\gamma f(\lambda\xi_T)}]\bigg)-x=0,
		\end{equation}
		and $X_t(x)=-\frac{1}{\gamma}\log\bigg(\E^\Q[e^{-\gamma f(\lambda(x)\xi_T)}\vert \cF_t]\bigg).$
	Clearly, the derivative w.r.t. $\lambda$ of the left hand-side in \eqref{implicit} is not zero for any $x$. Hence, the inverse function theorem entails that $\lambda(x)$ is three times continuously differentiable (as $f$ is). Furthermore, it follows directly by (\ref{implicit}) that $\lambda$ is strictly monotone and converges to $\pm \infty$ as $x$ converges to $\pm\infty$. Note that by the definition of $0\leq \xi_T$ and the fact that $f$ is decreasing we have
	that $\E[e^{\gamma f(\lambda \xi_T)}\xi^p_T]<\infty$ for $p\geq 1$ and for every $\lambda>0$.
	Hence, $X_t(x)$ is invertible and three times continuously differentiable as well, which entails that also its inverse is three times continuously differentiable. Differentiability and local Lipschitz continuity of the coefficients of $X_t(x)$ together with $\cZ^*$ and $\frac{\partial \cZ^*}{\partial x}$ being locally Lipschitz may be seen via expressing $X_t(x)$ by the Feynman-Kac theorem, as a function of $W_t$ and $x$ which can be written as a differentiable integral over a Gaussian kernel, with derivatives differentiable in $x$. \qed\\
\begin{lemma}\label{Le: inverseflow} 
Under the assumptions of Proposition \ref{Th:quadg}, the inverse stochastic flow $\psi_t(x):={X^*_t}^{-1}(x)$ exists and is an Itô process whose dynamics is given by
\begin{equation}
d \psi_t(x)=a(t,\psi_t(x)) dt+b(t,\psi_t(x)) d W^{\Q}_t,
\label{eq:psiSDE}
\end{equation}
where 
$b(t,\psi_t)=-\cZ^*(t,\psi_t(x))/X_x(t,\psi_t(x))$ and
$$
a(t,\psi_t(x))=-\frac{1}{2}\frac{X_{xx}(t,\psi_t(x)) (\cZ^*(t,\psi_t(x))^{{2}}}{X_{x}^3(t,\psi_t(x))}+\frac{\gamma}{2}\frac{(\cZ^*(t,\psi_t(x)))^{\tn{2}}}{X_x(t,\psi_t(x))}+\frac{\cZ^*(t,\psi_t(x))\cZ^*_x(t,\psi_t(x))}{X_x^2(t,\psi_t(x))}.
$$
\end{lemma}
\proof
{The proof is similar to that in \cite{mania2017} and \cite{nicole2013exact}. By Lemma \ref{Le:basic} the coefficients of $a(t,y)$ and $b(t,y) $ defined above are locally Lipschitz continuous in $y$. Therefore, by a truncating argument in \cite{kunita1997stochastic}, the SDEs \eqref{eq:psiSDE} has a unique maximal solution up to an explosion time $\tau_x\le T$. Using Itô-Ventzel's formula, it is straightforward to see that $dX_t(\psi_t(x))=0$ hence $X_t(\psi_t(x))=x$ a.s. for $t\in[0,\tau_x)$ since $\psi_{\tau_x}(x)=\infty$ if $\tau_x<T$ and $X_t(\infty)=\infty$. On the other hand, by continuity $X_t(\psi_t(x))=x$ if $t=\tau_x<T$. Consequently, we must have $\tau_x=T$ a.s., meaning that $\psi_t(x)=X_t^{-1}(x)$ for $t\in[0,T]$.}\endproof
\\
\noindent\textbf{Proof of Proposition \ref{Le:duality}.}
The proof is similar to that of Theorem 2.0 in \cite{kramkov1999asymptotic} and is omitted.\endproof

Following the ideas of Lemma 4 in \cite{schachermayer2003super}, the conditional version of Proposition \ref{Le:duality} can be derived. In particular, one can show that
\begin{equation}
\frac{1}{\gamma}V'(t,X_t^{\theta^*}(x))e^{-\gamma X_t^{\theta^*}(x)}=y\xi_t,
\label{eq:V'new}
\end{equation}
{where $ y=\frac{1}{\gamma}V'(x)e^{-\gamma x}$}.
Set $X_t(x):=X_t^{\theta^*}(x)=x+\mathcal{I}^{\theta^*}_{0,t}$. Partially generalizing \cite{mania2017}, we show below that for the BSPDE above regularity can be obtained. The optimal wealth is a strictly increasing continuous function
of $x$ $\P$-a.s. Hence, by Lemma \ref{Le:basic} an adapted inverse of $X_t(x)$ exists. \qed 

\noindent\textbf{Proof of Proposition \ref{Th:quadg}.} \tn{The proof is similar to the case without market price impact in \cite{mania2017}.}}
In particular, define the {martingale} random fields $M(t,x):=\E[U(X_T(x))\vert\cF_t]$ and $\wt{M}(t,x):=\E[U'(X_T(x))\vert\cF_t].$
By optimality, \tn{using Proposition \ref{Le:supermart}} 
$V(t,X_t(x))$ is a martingale and 
$$V(t,X_t(x))=\E[V(T,X_T(x))\vert\cF_t]=\E[U(X_T(x))\vert\cF_t]=M(t,x).$$ 
Moreover, it follows from \eqref{eq:V'new} that 
$
\frac{1}{\gamma}V'(t,X_t(x))e^{-\gamma X_t(x)}=y\xi_t,
$
which by Lemma \ref{Le:basic}  and Proposition \ref{Le:duality} implies that 
 $${M}'(t,x)=V'(t,X_t(x)) X_t'(x)=\gamma y\xi_t e^{\gamma X_t(x)}X_t'(x)=V'(x)e^{-\gamma x}\xi_t e^{\gamma X_t(x)}X_t'(x).$$
By (ii) and (iii) of Proposition \ref{Le:duality} we obtain
 $$\wt{M}(t,x)=\E[U'(X_T(x))\vert\cF_t]=\E[\gamma y\xi_T e^{\gamma X_T(x)}\vert\cF_t]=\gamma y\xi_t e^{\gamma X_t(x)}=V'(t,X_t(x))$$
by using the identity \eqref{eq:V'new}. 
It follows that ${M}'(t,x)= \wt{M}(t,x) X_t'(x).$
{From Lemma \ref{Le:der}, $\wt{V}$ is three-times differentiable, which implies that its conjugate (see Proposition \ref{Le:duality}) is also $V(x)$ is three-times differentiable}. 
Therefore, we can conclude that the martingale random fields $M(t,x)$ and $\wt{M}(t,x)$ are two times differentiable. Let 
$M'(t,x)=V'(x)+\int_0^t h(s,x) \d W_s$
be the GKW decomposition of the martingale random field $M'(t,x)$. 
 Since $V(t,x)=M(t,X_t^{-1}(x))$ we obtain by It\^o-Ventzel's formula that
\begin{align*}
V(t,x)=V(0,x)&+\int_0^t M(\d s, \psi_s)+\int_0^t M'(s, \psi_s) \d \psi_s+\frac{1}{2} \int_0^t M''( s, \psi_s) d \langle \psi\rangle_s\\
&+\bigg\langle \int_0^t M'(s, \psi_s) \d \psi_s, \psi\bigg\rangle_t,
\end{align*}
where $\psi_t:=X_t^{-1}(x)$ and $\langle\cdot,\cdot\rangle$ denotes the covariation. From the assumption that $zf'(z)$ is uniformly bounded, Lemma \ref{Le: inverseflow} implies that the diffusion term $\cZ(t,x)$ of $X_t(x)$ is $\cF_t$-measurable and bounded and by \cite{kunita1997stochastic} (chapter 3), the inverse stochastic flow $\psi_t(x):={X_t}^{-1}(x)$ exists and is an Itô process
$
d \psi_t(x)=a(t,\psi_t(x)) dt+k(t,\psi_t(x)) d W_t,
$
where 
$k(t,\psi_t)=-\cZ(t,\psi_t(x))/X_x(t,\psi_t(x))$ and
$$
a(t,\psi_t(x))=-\frac{1}{2}\frac{X_{xx}(t,\psi_t(x))(\cZ(t,\psi_t(x)))^2}{X^{\tn{3}}_{x}(t,\psi_t(x))}+\frac{\gamma}{2}\frac{(\cZ(t,\psi_t(x)))^2}{X_x(t,\psi_t(x))}+\frac{\cZ(t,\psi_t(x))\cZ_x(t,\psi_t(x))}{X_x^2(t,\psi_t(x))}.
$$
Now, using Itô-Ventzel's formula, we may see directly that the integrand of the finite variation term (in the definition of regular semimartingales) of $V(t,x)$ is given by
$$
M'(t,\psi_t(x)) a(t,\psi_t(x))+\frac{1}{2}M''(t,\psi_t(x)) k^2(t,\psi_t(x))+h(t,\psi_t(x))k(t,\psi_t(x)),
$$
and hence (b) is confirmed.
Similarly, since $V'(t,x)=\wt{M}(t,X_t^{-1}(x))$ and $\wt{M}$ is twice differentiable we can use It\^o-Ventzel's formula to obtain that $V'(t,x)$ is a special semimartingale with progressively measurable finite variation and $V'(t,x)$ admits the representation
\begin{equation}
V'(t,x)=V'(0,x)+\int_0^t \ov{b}(s,x)\d s+ \int_0^t \ov{k}( s, x) \d W_s
\label{eq:Vprime}
\end{equation}
for integrable processes $\ov{b},\ov{k}$ being continuous in $x$. Integrating both sides of \eqref{eq:Vprime} with respect to the space argument on $[0,x]$ and applying the stochastic Fubini's theorem we can conclude that (c) holds. Hence, conditions (a)-(c) are fulfilled and $V(t,x)$ is a regular family of semimartingales.\qed
\end{appendix}

\end{document}